\newcommand{\cc}{\mathsf{c}}
\newcommand{\cK}{\mathcal{K}}
\newcommand{\Coh}{\mathtt{Coh}}
\newcommand{\nor}{\boldsymbol{:}}
\newcommand{\cAmp}{\mathtt{Amp}}
\newcommand{\dd}{\mathsf{d}}
\newcommand{\ca}{\mathfrak{ca}}
\newcommand{\Res}{\mathtt{Res}}
\newcommand{\Mat}{\mathtt{Mat}}
\newcommand{\tT}{\mathtt{T}}
\newcommand{\Ccurl}{\mathscr{C}}
\newcommand{\tU}{\mathtt{U}}
\newcommand{\Spec}{\mathtt{Spec}}
\newcommand{\tA}{\mathtt{A}}
\newcommand{\tV}{\mathtt{V}}
\newcommand{\msu}{\mathfrak{su}}
\newcommand{\nn}{\nonumber}
\newcommand{\sn}{\mathsf{n}}
\newcommand{\sA}{\mathsf{A}}
\tikzset{>=latex}
\newcommand{\tJ}{\mathtt{J}}
\newcommand{\cI}{\mathcal{I}}
\newcommand{\tp}{\mathtt{p}}
\newcommand{\tH}{\mathtt{H}}
\newcommand{\Tr}{\text{Tr}}
\newcommand{\sh}{\mathsf{h}}
\newcommand{\mg}{\mathfrak{g}}
\newcommand{\B}{\mathbb{B}}
\newcommand{\eps}{\epsilon}
\newcommand{\tC}{\mathtt{C}}
\newcommand{\Sym}{\mathrm{Sym}}
\title{\Large \center Chiral higher-spin symmetry of the celestial twistor sphere}
\author[]{Tung Tran}
\affiliation[]{Asia Pacific Center for Theoretical Physics, POSTECH,\\
77 Cheongamro, Nam-gu,
Pohang-si, Gyeongsangbuk-do, 37673, Korea}
\emailAdd{tran.tung@apctp.org}
\abstract{We study the chiral higher-spin symmetry algebras $\ca$ of various twistorial higher-spin theories. These symmetries play the roles of asymptotic symmetries on the celestial twistor sphere, which constrain the observables of twistorial theories. 
To first order in quantum correction, we show that the chiral algebras associated with anomaly-free holomorphic twistorial higher-spin theories are associative themselves. On the other hand, the chiral algebras associated with anomalous holomorphic twistorial higher-spin theories only become associative upon including  suitable axionic currents. When computing $4d$ form factors in terms of correlation functions between higher-spin currents on the celestial twistor sphere, we observe that there are some non-vanishing higher-spin form factors. 
This observation, however, is only well justified for the case of theories with Yang–Mills–like interactions. We also give some brief comments on the case of 
higher‑derivative interactions.
}
\begin{document}

\maketitle
\newpage
\section{Introduction}\label{sec:1}

Higher-spin theory (see \cite{Bekaert:2022poo} for a review) is the study of theories involving interactions between fields with spin greater than two. When formulating these theories in terms of field theories, there are certain subtleties in interpreting their non-local interactions \cite{Bekaert:2015tva}. 
Nonetheless, there are higher-spin theories which are sufficiently close in structures with the standard QFTs, despite being conformal \cite{Segal:2002gd,Tseytlin:2002gz,Bekaert:2010ky,Basile:2022nou}, chiral/self-dual \cite{Metsaev:1991mt,Metsaev:1991nb,Ponomarev:2016lrm,Metsaev:2018xip,Skvortsov:2018uru,Metsaev:2019dqt,Metsaev:2019aig,Tsulaia:2022csz}, or topological \cite{Blencowe:1988gj, Bergshoeff:1989ns, Campoleoni:2010zq, Henneaux:2010xg, Pope:1989vj, Fradkin:1989xt, Grigoriev:2019xmp, Grigoriev:2020lzu,Sharapov:2024euk}.\footnote{See \cite{Campoleoni:2024ced} for a review of $3d$ higher-spin gravity, see also \cite{Chen:2025xlo} and references therein for the studies of the edge modes associated with $3d$ higher-spin gravities.} Note, however, that these QFT-like higher-spin theories do not couple to particles in the usual way as in GR \cite{Basile:2024hjg,Ivanovskiy:2025kok}, making the notion of locality slightly subtle. 

There has been some progress in constructing actions for higher-spin theories from top-down approaches. For instance, it is now well-known that $3d$ higher-spin gravities \cite{Campoleoni:2024ced} can be written in terms of Chern-Simons 3-forms. 
Similarly, the construction of chiral higher-spin gravity \cite{Ponomarev:2016lrm} and its closed subsectors \cite{Ponomarev:2017nrr,Krasnov:2021nsq} has also revealed somewhat similar structures with $3d$ higher-spin gravities. In particular, when the cosmological constant is zero, we can formulate these theories either as holomorphic BF theories or as holomorphic Chern-Simons theories on twistor space, cf. \cite{Tran:2021ukl,Tran:2022tft,Adamo:2022lah,Herfray:2022prf}.\footnote{See \cite{Sharapov:2022faa,Sharapov:2022wpz,Sharapov:2022awp} for the construction of the equations of motion for chiral higher-spin gravity with non-vanishing cosmological constant.} The story is, however, slightly different for conformal higher-spin gravity, cf. \cite{Segal:2002gd,Tseytlin:2002gz,Bekaert:2010ky,Basile:2022nou}. In particular, its action can be constructed in any even dimensions $d\geq 4$ via Feigin-Felder-Shoikhet cocycle \cite{feigin2005hochschild} with suitable symmetry constraints imposed on Fedosov's geometric data \cite{Fedosov:1994zz}. 

While top-down constructions are useful in controlling local structures, they, however, offer only limited insight on observables such as scattering amplitudes and correlation functions -- which are expected to be governed by the symmetries at asymptotic infinity. It has been observed that suitable deformations of chiral or self-dual theories can give rise to non-trivial higher-spin amplitudes \cite{Adamo:2022lah,Tran:2022amg}, which suggests an analysis of observables at asymptotic infinity may provide useful hints for inverse-bootstrapping some unknown theories. Motivated by this aspect, we aim to study the chiral higher-spin symmetries on the celestial twistor sphere $\P^1_{\tp}$ over some point $\tp$ in spacetime,\footnote{This celestial twistor sphere can be viewed as a complex codimension-2 defect in twistor space, and it can be shown to be equivalent with the usual celestial sphere in the literature \cite{Strominger:2017zoo} in the affine patch.} which are expected to govern observables of twistorial higher-spin theories \cite{Tran:2025uad,Mason:2025pbz} and certain deformations thereof. 

To construct chiral higher-spin symmetry algebras $\ca$, we consider a stack of $N$ D${}_5$ branes filling twistor space, and another stack of D${}_1$ branes wrapping the celestial sphere (viewed as defect in twistor space). These branes, in turn, determine the group structures which fields in twistor space, and symmetry operators on the celestial twistor sphere take values in. 
What we will do in this work is to use Koszul duality/homomorphism to transfer the associative structures of the underlying symmetries which govern holomorphic higher-spin theories in twistor space to $\P^1_{\tp}$. 
The chiral symmetry algebras $\ca$ 
are then the vertex operator algebras, whose associative products are defined by the OPEs of the holomorphic higher-spin currents on the defect. 

Note that the above construction is perturbative in nature, so it does not guarantee $\ca$ to be associative when there are quantum corrections, a priori. 
Nonetheless, once we ensure $\ca$ is associative up to a given quantum order in the perturbation theory, then the correlation functions of higher-spin currents, which generate $\ca$, can be interpreted as form factors in some $4d$ spacetime theories \cite{Witten:2003nn}. Remarkably, this computation can be done algebraically by doing simple Wick contractions. As a result, one can mitigate the complexity of constructing explicit spacetime vertices or summing over many diagrams, which occurs when computing scattering amplitudes using traditional approaches \cite{Weinzierl:2022eaz}. Therefore, we may be able to bootstrap higher-spin amplitudes (in some cases), and search for non-trivial imprints on the celestial twistor sphere with the hope that we may detect some mysterious $4d$ theories that have not been constructed in the literature. Note that the indispensable criteria for this algorithm to work is to ensure gauge invariance of the bulk/defect (twistor space/celestial sphere) system and associativity of $\ca$ order by order in perturbation theory. 


The organization of the paper is as follows:
\begin{itemize}
    \item[-] Section \ref{sec:2} introduces some basic notion of higher-spin symmetry and examples of twistorial higher-spin theories, together with the Green-Schwarz anomaly cancellation mechanism in twistor space. This section is mainly based on \cite{Tran:2025uad}. See also \cite{Costello:2021bah,Costello:2022upu,Bittleston:2022jeq} for relevant work.    
    \item[-] Section \ref{sec:3} studies $\ca$, which can be viewed as higher-spin extension of the chiral algebras studied in \cite{Costello:2022wso,Costello:2022upu,Costello:2023vyy,Bittleston:2022jeq}. We show, at classical level, that the chiral higher-spin algebras $\ca$ are associative and can be identified with the so-called color-kinematic algebras of chiral higher-spin theories \cite{Ponomarev:2017nrr,Monteiro:2022xwq}. We also show that the associativity of $\ca$ is \emph{not always} protected from quantum effects. In particular, the OPEs of higher-spin currents, which generate $\ca$, can receive non-trivial quantum corrections, leading to the failure of associativity. Note that to restore associativity at one loop, we can extend $\ca$ with suitable axionic currents.\footnote{There is also a possibility of introducing matter currents, cf. \cite{Costello:2023vyy}, which, however, is not the focus of this work.} 
    These results are summarized in Theorem \ref{thm:quantum-associative} and Theorem \ref{thm:quantum-HS}. 
    
    \item[-] Section \ref{sec:4} studies some simple correlation functions of the chiral algebra $\ca$, which can be identified with form factors 
    of certain $4d$ higher-spin theories in spacetime. We observe that there are some specific choices of the kinematic data that can lead to non-trivial higher-spin amplitudes on the celestial twistor sphere. Nonetheless, this is only well justified for theories with Yang-Mills-like interactions. This stems from the fact that the OPE of higher-spin currents, in many cases, encodes only the collinear limit of the higher-spin soft factors studied in \cite{Tran:2022amg}. Thus, adapting the approach of \cite{Costello:2022wso} to higher-derivative case will require further adjustments and clarifications.
    
    \item[-] Section \ref{sec:boson-fermi} proposes some simple chiral CFTs  on the celestial twistor sphere which can generate the chiral algebras studied in this paper. 

    \item[-] We wrap up the paper in Section \ref{sec:discuss} with some discussions. There are also three appendices that provide detailed computations related to Section \ref{sec:3} and Section \ref{sec:4}.
\end{itemize}

\section{Review}\label{sec:2}
This section reviews some relevant material for the study of the chiral higher-spin algebra $\ca$ associated to various twistorial higher-spin theories in Section \ref{sec:3}. Note that we aim to be concise and refer the reader to \cite{Tran:2025uad} for more detail.

\subsection{Twistor and higher spins}\label{sec:twistor}
Denotes $\P^3$ as the 3-dimensional complex projective space with homogeneous coordinates 
\begin{align}
   Z^A=(Z^1,Z^2,Z^3,Z^4)=(\lambda^{\alpha},w^{\dot\alpha})\,,\qquad A=1,2,3,4\,,\quad \alpha=1,2\,,\quad \dot\alpha=\dot{1},\dot{2}\,.
\end{align}
Here, $(\lambda^{\alpha},w^{\dot\alpha})$ can be interpreted as the left- and right-handed commutative spinors of the Lorentz group $ SL(2,\C)\times SL(2,\C)$ when working in a complexified setting. (See e.g. \cite{Adamo:2017qyl} for an introduction to twistor theory.) The open subset $\PT:=\big\{Z^A\in \P^3\,\big|\,\lambda^{\alpha}\neq 0\big\}\subset \P^3$ where  $\lambda^{\alpha}$ is non-degenerate is referred to as \emph{undeformed} twistor space. There is a natural quaternionic conjugation, which acts on $Z^A$ as follows:
\begin{align}
    Z^A\mapsto \hat Z^A=(\hat\lambda^{\alpha},\hat w^{\dot\alpha})\,,\qquad \hat\lambda^{\alpha}=(-\overline{\lambda^2},\overline{\lambda^1})\,,\quad \hat w^{\dot\alpha}=(-\overline{w^{\dot 2}},\overline{w^{\dot 1}})\,.
\end{align}
Although working with $(\lambda,\hat\lambda)$ and $(w,\hat w)$ variables is sufficient to construct theories as well as their observables on twistor space, introducing a twistor correspondence between $\PT$ and the complexified Minkowski spacetime $\cM$ allows one to identify twistor cohomology classes with massless fields in spacetime \cite{Eastwood1979,Eastwood:1981jy,Mason:2007ct,Bullimore:2011ni,Bittleston:2020hfv}, thereby making twistor theory a viable physical framework. The twistor correspondence is expressed via the so-called incidence relations \cite{kodaira1962theorem,kodaira1963stability,Penrose:1967wn}:
\begin{align}\label{eq:incidence1}
    w^{\dot\alpha}=w^{\dot\alpha}(x,\lambda)\,, \qquad x^{\alpha\dot\alpha}:=x^{a}\sigma_{a}^{\alpha\dot\alpha}\,,\qquad a=1,2,3,4\,,
    \end{align}
where $\sigma_{a}^{\alpha\dot\alpha}$ are quaternions. Upon imposing these relations, all regular functions on twistor space can be referred to as elements of sheaves of local holomorphic functions $\cO(n)$ with weight $n\in \Z$. In studying twistor cohomology, cf. \cite{Eastwood1979,Eastwood:1981jy}, one finds that
\begin{align}
    \sA_{2h-2}\in H^{0,1}(\PT,\cO(2h-2))\leftrightarrow \{\text{$4d$ massless field with helicity $h\in \Z$}\}\,.
\end{align}
where $H^{0,1}$ denotes the Dolbeault cohomology group of $(0,1)$-form on $\PT$ twisted by the holomorphic line bundle $\cO(2h-2)$.\footnote{Note that the Dolbeault integrable complex structure on twistor space is given by \cite{Woodhouse:1985id}: 
\begin{align}
    \bar{\p}:=d\hat Z^A\frac{\p}{\p \hat{Z}^A}= d\hat\lambda^{\alpha}\hat{\p}_{\alpha}+ d\hat{w}^{\dot\alpha}\,\hat{\p}_{\dot\alpha}\,,\qquad \hat\p_{\alpha}=\frac{\p}{\p\hat\lambda^{\alpha}}\,,\quad \hat\p_{\dot\alpha}=\frac{\p}{\p\hat w^{\dot\alpha}}\,.
\end{align}}
Note that the $\lambda$-weight of a given twistor representative determines the helicity of the massless particle in spacetime. 

There is a neater way to organize the above result. It is well-known that $\P^3$ is a quotient of $S^7$ by a $U(1)$ factor \cite{Ward:1990vs}. Therefore, it is quite natural to construct twistor theories on the $U(1)$-bundle over $\PT$ since it allows us to handle all twistor expressions on an equal footing. In particular, we may associate a monodromy to each of twistor variable 
in terms of a phase $e^{i h\theta}$ with $h$ being the charge induced by the $U(1)$ fiber. Since twistor actions should have weight zero on $\PT$, or trivial $U(1)$ charge on $S^7$, they subject to trivial monodromy condition. This provides certain flexibility to construct a broad class of twistorial higher-spin actions, as shown in \cite{Tran:2025uad,Mason:2025pbz}.

To make the study of higher-spin chiral vertex algebra sufficiently general, we introduce the following holomorphic $\star$-product:
\begin{align}\label{eq:star-product}
    f_1(\lambda, w)\star f_2(\lambda,w)=\exp\Big([\p_1\,\p_2]\Big)f_1(\lambda,w_1)f_2(\lambda,w_2)\Big|_{
    w_{1,2}=w}\,.
\end{align}
Here, we adopt the convention:
\begin{align}\label{eq:convention}
 w^{\dot\alpha}=\epsilon^{\dot\alpha\dot\beta}w_{\dot\beta}\,,\quad w_{\dot\alpha}=w^{\dot\beta}\epsilon_{\dot\beta\dot\alpha}\,;\qquad  [\p_i\,\p_j]=\p_{i}^{\dot \alpha}\p_{j\dot \alpha}\,,\quad\p_{i\dot\alpha}\equiv\frac{\p}{\p w_i^{\dot\alpha}}\,.
\end{align}
where $\eps^{\dot\alpha\dot\beta}=\epsilon_{\dot\alpha\dot\beta}=-\epsilon^{\dot\beta\dot\alpha}$ with $\epsilon^{\dot1 \dot 2}=1$. (Similar expressions also apply to variables with undotted spinorial indices.) The higher-spin algebra $\mathrm{h}\hs$ associated with the $\star$-product \eqref{eq:star-product} is given by 
\begin{align}\label{eq:hhs}
    \mathrm{h}\hs:=\C[\lambda]\otimes \C\Bigg[\frac{\hat\lambda}{\langle \lambda\,\hat\lambda\rangle}\Bigg]\otimes\sA_1(w)\,,
\end{align}
where $\sA_1(w)$ denotes the Weyl algebra whose canonical pair are $(w^{\dot1},w^{\dot 2})$. As is well-known, the above h$\hs$ is a unique Moyal-Weyl deformation of the twistorial $ w_{1+\infty}$ algebra \cite{Adamo:2021lrv,Monteiro:2022xwq}. Note that there is also a possibility to tensor h$\hs$ with a 
matrix algebra $\Mat(N,\C)$. This will be the non-commutative algebra, which our construction of the chiral higher-spin algebra $\ca$ on the celestial twistor sphere will largely be based on.

\subsection{Holomorphic twistorial higher-spin theories}\label{sec:BV} 
The higher-spin symmetry introduced in previous subsection is known to govern various twistorial higher-spin theories in twistor space \cite{Tran:2025uad,Mason:2025pbz}. These theories can be nicely described through the scope of BV-BRST formalism for holomorphic theories \cite{williams2020renormalization,Costello:2021bah,Bittleston:2022nfr}. (See also \cite{Cattaneo:2023hxv} for a short summary and \cite{Henneaux:1992ig,Barnich:2018gdh,Cattaneo:2019jpn} for an introduction into this formalism.) 
\subsubsection{Holomorphic Chern-Simons theories}
Let us first review the case of holomorphic Chern-Simons theories with the BV field
\begin{align}
    \A_{2h-2}\in \Omega^{0,\bullet}(\PT,\cO(2h-2)\otimes\mg)[1]\,,
\end{align}
whose fields components are:
\begin{subequations}
    \begin{align}
     \text{ghost}&:   &\cc_{2h-2}&\in \Omega^{0,0}(\PT,\cO(2h-2)\otimes\mg)[1]\,,\quad &|\cc|&=1\,,\\
      \text{field}&:  &\sA_{2h-2}&\in \Omega^{0,1}(\PT,\cO(2h-2)\otimes \mg)[1]\,,\quad &|\sA|&=0\,,\\
      \text{antifield}&:  &\sA^{\vee}_{2h-2}&\in \Omega^{0,2}(\PT,\cO(2h-2)\otimes \mg)[1]\,,\quad &|\sA^{\vee}|&=-1\,,\\
       \text{antifield of ghost}&: &\cc^{\vee}_{2h-2}&\in \Omega^{0,3}(\PT,\cO(2h-2)\otimes\mg)[1]\,,\quad &|\cc^{\vee}|&=-2\,.
    \end{align}
\end{subequations}
Here, $|x|\equiv \deg(x)$ refers to the cohomological degree (or ghost degree). For a given $(0,k)$-form field, its ghost degree is given by $1-k$, where the `1' comes from the degree-shift denoted by $[1]$. We will choose $\mg$ to be some Lie algebra such that it is isomorphic to its dual algebra $\mg^{\vee}$. The specific choice of $\mg$ will be important when we introduce the holomorphic Green-Schwarz anomaly cancellation mechanism in \cite{Costello:2021bah}. 

The BV twistor actions on the total space $S^7$, which lead to various holomorphic Chern-Simons theories on twistor space are (see also \cite{Mason:2025pbz})\footnote{as well as \cite{Neiman:2024vit} for another way of constructing twistor actions for self-dual higher-spin theories.}
\begin{subequations}\label{eq:parent-action-1}
    \begin{align}
    S_{BV_1}^{[,]_{\mg}}&=\int_{S^7}d\theta \,\Omega^{3,0} \,\Tr\Big(\A\wedge\bar{\p}\A+\frac{1}{3}\A\wedge [\A,\A] \Big)\,,\label{eq:BV-Lie}\\
    S_{BV_2}^{\{,\}}&=\int_{S^7}d\theta \,\Omega^{3,0} \,\Big(\A\wedge\bar{\p}\A+\frac{1}{3}\A\wedge\{\A, \A\} \Big)\,,\label{eq:BV-Poisson}\\
    S_{BV_3}^{\star}&=\int_{S^7}d\theta \,\Omega^{3,0} \,\Tr\Big(\A\star\bar{\p}\A+\frac{2}{3}\A\star \A \star \A \Big)\,,\label{eq:BV-star}
\end{align}
\end{subequations}
where
\begin{align}
    \Omega^{3,0}=e^{4i\theta}\langle\lambda \,d\lambda\rangle\wedge [d w\wedge dw]
\end{align}
is the canonical holomorphic measure on twistor space of charge $+4$, and 
\begin{align}\label{eq:A-generating-function}
    \A=\sum_{h\in \Z}e^{i\theta(2h-2)}\A_{2h-2}\,,\qquad \A_{2h-2}\in \Omega^{0,1}(\PT,\cO(2h-2)\otimes \mg)[1]\,,
\end{align}
are $\mg$-valued higher-spin generating connection one-forms. For $S_{BV_2}$, it is necessary that $\mg=\mathfrak{u}(1)$. In the above, $[-,-]$ denotes the usual Lie algebra commutator (with neutral $U(1)$ charge), $\{-,-\}$ denotes the Poisson bracket with $U(1)$ charge $-2$, and $\star$ is the Moyal-Weyl product defined in \eqref{eq:star-product}. Then, upon projecting to the trivial monodromy sectors, and pushing forward to the twistor base, one can obtain for instance
\begin{align}\label{eq:BV-action}
    S_{BV_3}^{\star}=\int_{\PTc}\Omega^{3,0}\Tr\Big(\sum_h\A_{-2|h|-2}\bar{\p}\A_{2|h|-2}+\frac{2}{3}\sum_{\{h_i\}}\frac{1}{k!}\A_{2h_1-2}\Pi^k(\A_{2h_2-2},\A_{2h_3-2})\Big)\,,
\end{align}
where $\Pi(-,-):=\epsilon^{\dot\alpha\dot\beta}\p_{\dot\beta}\wedge \p_{\dot\alpha}$. Note that the trivial monodromy constraint fixes the number of derivatives to be
\begin{align}\label{eq:k-constraint}
    k=h_1+h_2+h_3-1\geq 0\,.
\end{align}
This can be viewed as the helicity constraints for the vertices on the base manifold $\PTc$.\footnote{This is the twistor space associated with the deformed complex structure $\bar{\cD}:=\bar{\p}+\Pi(\sA_2,-)$ where $\sA_{2}\in \Omega^{0,2}(\PT,\cO(2))$. The spacetime dual of $\PTc$ is a self-dual spacetime by virtue of the non-linear graviton construction \cite{Penrose:1976js}.} We shall not delve on listing all the descendants of the parent actions \eqref{eq:parent-action-1}, and refer the reader to \cite{Tran:2025uad} for detail.

\subsubsection{Holomorphic BF theories}
In the case of holomorphic BF theories, we have the following parent actions on $S^7$ \cite{Tran:2025uad}:
\begin{subequations}
    \begin{align}
    S_{BF_1}^{[,]}&=\int_{S^7}d\theta\,\Omega^{3,0}\Tr\Big(\B\wedge \bar{\p}\A+\frac{1}{2}\B[\A,\A]\Big)\,,\\
    S_{BF}^{\{,\}}&=\int_{S^7}d\theta\,\Omega^{3,0}\Big(\B\wedge \bar{\p}\A+\frac{1}{2}\B\{\A,\A\}\Big)\,,
\end{align}
\end{subequations}
where $\A$ stands for the master BV field that includes fields with non-negative $U(1)$ charges, while $\B$ contain those with strictly negative $U(1)$ charges. In particular,
\begin{subequations}\label{eq:BF-BV}
    \begin{align}
   \A&=\sum_{s\in \N}e^{i\theta(+2s-2)}\A_{+2s-2}\,,\qquad &\A_{+2s-2}&\in\Omega^{0,\bullet}(\PT,\cO(+2s-2)\otimes\mg)[1]\,,\\\B&=\sum_{s\in\N_0}e^{i\theta(-2s-2)}\B_{-2s-2}\,,\qquad &\B_{-2s-2}&\in\Omega^{0,\bullet}(\PT,\cO(-2s-2)\otimes\mg)[1]\,.
\end{align}
\end{subequations}
The field components of the above are organized as
\begin{subequations}
    \begin{align}
        \A_{2s-2}&=\cc_{2s-2}+\cA_{2s-2}+\cB^{\vee}_{2s-2}+\dd^{\vee}_{2s-2}\,,\\
        \B_{-2s-2}&=\dd_{-2s-2}+\cB_{-2s-2}+\cA^{\vee}_{-2s-2}+\cc^{\vee}_{-2s-2}\,,
    \end{align}
\end{subequations}
where
\begin{align}
    |\cc|=|\dd|=1\,,\qquad |\cA|=|\cB|=0\,,\qquad |\cA^{\vee}|=|\cB^{\vee}|=-1\,,\qquad |\cc^{\vee}|=|\dd^{\vee}|=-2\,.
\end{align}
Observe that we have two kind of ghost fields $\cc$ and $\dd$, which appear as the gauge parameters in the gauge transformations of the physical fields $(\cA,\cB)$:
\begin{subequations}
    \begin{align}
        S_{BF_1}&: &\delta_{\cc}\cA&=\bar{\p}\cc+[\cA,\cc]\,,\quad 
        &\delta_{\cc,\dd}\cB&=\bar{\p}\dd+[\cA,\dd]+[\cB,\cc]\,,\\
        S_{BF_2}&: &\delta_{\cc}\cA&=\bar{\p}\cc+\{\cA,\cc\}\,,\quad 
        &\delta_{\cc,\dd}\cB&=\bar{\p}\dd+\{\cA,\dd\}+\{\cB,\cc\}\,.
    \end{align}
\end{subequations}
This marks the difference with the holomorphic Chern-Simons theories discussed above. We refer the reader to \cite{Tran:2025uad} for the quantization of holomorphic BF and CS theories at one loop. (See also \cite{Bittleston:2022nfr} for an analysis of holomorphic BF-type theories with spin-$s\leq 2$.)

\subsection{Holomorphic Green-Schwarz anomaly cancellation mechanism}\label{sec:anomaly-cancellation}

It was shown in \cite{Tran:2025uad} that all holomorphic higher-spin Chern-Simons theories are anomaly-free at one loop. However, this does not always apply to holomorphic BF theories. Such obstruction for quantum consistency at one loop can be lifted by introducing appropriate couplings to the anomalous theories, cf. \cite{Costello:2015xsa,Costello:2021bah,Bittleston:2022nfr}.

\paragraph{Anomaly-free theories.} Let us first discuss the anomaly-free theories. In \cite{Tran:2025uad}, it was shown that the gauge anomalies associated to the wheel diagram of various holomorphic twistorial theories have the following simple form
\small
\begin{align}\label{eq:anomaly-term}
    \parbox{45pt}{\includegraphics[scale=0.1]{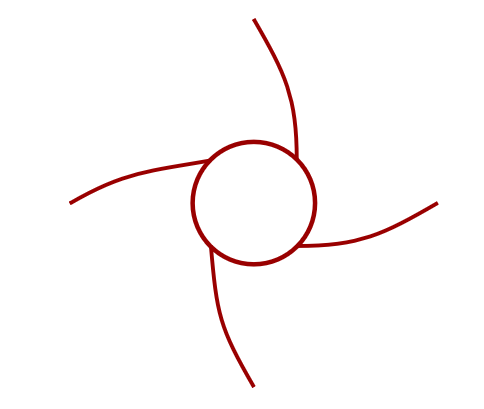}}=\sum_{h\in\Spec}\frac{i^3}{3!(2\pi)^3}\int_{\PT}\mho\,\Tr\Big(\cc_{h_1}\wedge(\p \sA_{h_2}(z_2))\wedge (\p\sA_{h_3}(z_3))\wedge(\p\sA_{h_4}(z_4))\Big)\Big|_{z_i=z}\,,
\end{align}
\normalsize
where 
\begin{align}
    \p=dz^a\frac{\p
    }{\p z^a}\,,\qquad z^a=(z,w^{\dot 1},w^{\dot 2})\,,\qquad a=1,2,3\,,
\end{align}
is the holomorphic differential. In the above, $h_i$ denote the helicities of the external fields entering the wheel diagram. Moreover, $\mho$ is a differential operator, which reads
\begin{align}
    \mho=\frac{\big([\p_2\,\p_3]+[\p_2\,\p_4]+[\p_3\,\p_4]\big)^{\tH_4-4}}{(\tH_4-4)!}\,,\qquad \tH_4=h_1+h_2+h_3+h_4\,,\quad \p_i=\frac{\p}{\p w_i^{\dot\alpha}}\,.
\end{align}
Here, $\p_i$ acts on $\sA_{h_i}$, and all derivatives are understood to be evaluated at some point $z\in\PTc$. Note that in evaluating the above anomalies, we have performed our computation in the patch $\C^3\subset\PTc$ parametrized by in-homogeneous coordinates $z^a$. (The easiest way to reach this coordinate patch is to parametrize $\lambda^{\alpha}=(1,z)$ with $z\in \C$.)

Since the integral \eqref{eq:anomaly-term} is finite, we are left with a sum over the spectrum that requires regularization. Although there is not yet a concrete proposal, we expect that higher-spin theories should be regarded as some string theories. It is therefore natural to employ some string-inspired regularization, when summing over Kaluza-Klein modes, as well as higher-spin modes. One such powerful scheme is the Riemann zeta-function regularization, see e.g. \cite{Beccaria:2015vaa,Beccaria:2023ujc} for the justification of such regularization. Using the fact that $\zeta(s)=\sum_{n=1}^{\infty}\frac{1}{n^s}$ in the $s\rightarrow 0$ limit, we find for instance
\begin{align}
    2\sum_{|h|\geq 1}1=-1\,,\qquad 2\!\!\!\!\sum_{|h|\in 2\N_0+1}\!\!\!\!\!\!\!=0\,,\qquad 2\sum_{|h|\geq 2}1=-3\,,\qquad 1+2\!\!\!\!\sum_{|h|\in 2\N}\!\!\!1=0\,.
\end{align} 
Therefore, only twistorial theories with 
\begin{align}\label{eq:quantum-protected-spectrum}
    \Spec=\Z\,,\  \text{or}\  2\Z\,,\ \text{or}\ 2\Z+1\,,
\end{align}
can be anomaly-free. 

\paragraph{Green-Schwarz anomaly cancellation on twistor space.} As is well known, in curing a theory with gauge anomaly, it typically requires extending the field content of the original theory, often leading to a larger theory where the anomaly can be cancelled off-shell. This, however, does not apply to the Green-Schwarz anomaly cancellation on twistor space proposed in \cite{Costello:2021bah} for non-supersymmetric theories, where the anomaly is cancelled only on-shell. This stems from the fact that the twistor dual of the spacetime axion field is a gauge field on $\PT$; necessitates $\bar{\p}\sA\approx 0$ for the anomaly cancellation on twistor space to be possible \cite{Costello:2021bah}. Note that beside the axion, one can also introduce suitable fermionic matter fields to cancel the anomaly occurs on twistor space as in \cite{Costello:2021bah,Bittleston:2022jeq}. 

In contrast with the common view point where gauge anomaly is fatal, the twistor gauge anomaly is somewhat interesting since it implies that the corresponding spacetime theory will also be ``anomalous'' in the sense of Bardeen, cf. \cite{Bardeen:1995gk} -- i.e. it will have non-trivial scattering amplitudes. Note that the axionic currents associated with the axionic field in twistor space are the key factors in rendering the chiral higher-spin symmetry algebras of anomalous higher-spin theories associative to first order in quantum correction via Koszul duality \cite{Costello:2022wso}. (This will be one of the subjects of Section \ref{sec:3}.)

Let us return to the anomaly \eqref{eq:anomaly-term} and complete our discussion. By doing integration by part, we can write
\begin{align}
    [\p_2\,\p_3]+[\p_2\,\p_4]+[\p_3\,\p_4]=[\p_2\,\p_3]+[\p_4\,\p_1]\,.
\end{align}
Then, employing Okubo's relations \cite{Okubo:1978qe}:
\begin{align}\label{eq:Okubo}
    \Tr(T^{(a_1}T^{a_2}T^{a_3} T^{a_4)})=C_{\mg}\tr(T^{(a_1}T^{a_2})\tr(T^{a_3}T^{a_4)})\,, \qquad C_{\mg}=\frac{10 h^{\vee}}{2+\dim(\mg)}\,,
\end{align}
with $\Tr$ the trace in the adjoint, $\tr$ the trace in the fundamental representations, and $h^{\vee}$ is the Coxeter number of the Lie algebra associated to either $SU(2),SU(3),SO(8)$ or $E_{6,7,8}$, we propose an anomaly cancellation for holomorphic theories with higher-derivative interactions by considering the following on-shell quantum corrected action, cf. \cite{Tran:2025uad},
\begin{align}\label{eq:axion-BVaction}
    S^{cor}_{\text{HS-BF}_{\mg}}=\int_{\PT}\p^{-1}\vartheta\bar{\p}\vartheta+c_{\mg}\int_{S^7}\vartheta \tr(\sA\star \p\sA)\,, \quad \vartheta\in \Omega^{2,1}(\PT,\cO(0))\,.
\end{align}
Here $\p^{-1}:\Omega^{p,\bullet}(\PT)\rightarrow \Omega^{p-1,\bullet}(\PT)$ is the formal inversion of the holomorphic differential $\p:=dz^a\p_a$, and $\vartheta^{2,1}$ is an axion field subjected to the constraint $\p\vartheta=0$. 
Note that $\vartheta$ transforms as $\delta \vartheta=\bar{\p}\varpi^{2,0}$, and its propagator $P_{\vartheta}$ is formally a $(4,2)$-form obeying 
\begin{align}\label{eq:pinch-axion}
    \bar{\p}P_{\vartheta}(z,z')=-\p\delta^{3,3}(z-z')\,,
\end{align}
where $\delta^{3,3}(z-z')$ is a $(3,3)$-form delta distribution. 

To see how the anomaly cancellation works, one can compute 4-pt tree-level amplitude with $\vartheta$ in the exchange, whose integration domain is $\PT\times_{\cM}\PT$ -- here, $\times_{\cM}$ denoted the fiberwise product over the same spacetime point.\footnote{It is useful to note that locally $\PT\simeq \P^1\times \cM$. Moreover, the realization of the curved twistor space $\PTc$ is slightly more complicated due to the fact that higher-derivative interactions can also deform the $\P^1$-fiber. Nevertheless, the analysis stays the same since we are working mainly on twistor space.} Although there may be higher-derivative terms in the vertices of the tree-level amplitudes, the pushforward to the twistor space will select for us the right couplings, which belong to the trivial monodromy sector.  For instance, we can recast \eqref{eq:anomaly-term} as \cite{Tran:2025uad}
\begin{align}\label{eq:anomaly-term-1}
    \eqref{eq:anomaly-term}=\sum_{\Spec}\frac{i^3C_{\mg}}{3!(2\pi)^3}\int_{S^7}\tr\big(\cc\wedge \p\sA\big)\tr\big(\p\sA\wedge\p\sA\big)\,,
\end{align}
for theories with gauge interactions with $\tH_4=4$. Then, upon considering a tree-level diagram whose gauge variation yields precisely \eqref{eq:anomaly-term-1}, but with an opposite sign,\footnote{Note that this is an on-shell statement, as stated above.} the cancellation of the gauge anomaly associated with the holomorphic higher-spin BF theory with $\Spec=\{|h|>1\}$ allows us to fix \cite{Tran:2025uad}
\begin{align}\label{eq:axion-coupling-constant}
    c_{\mg}=\sqrt{\frac{-iC_{\mg}}{3!(2\pi)^3}}\,,
\end{align}
to be the coupling constants entering the tree-level $\vartheta$-exchanged diagrams.


\section{Chiral higher-spin algebras of the celestial twistor sphere}\label{sec:3}

As in usual QFT context, a gauge field can be coupled to a current sourced by suitable matter fields. A similar situation also occur in the context of twisted holography \cite{Witten:2003nn,Costello:2020jbh}, where the gauge fields $\A$ and the currents $\tJ$ do not need to live in the same space. In what follows, we consider a bulk/defect system, where twistor space is the bulk and any pointed algebraic curve $\P^1_{\tp}$ over a spacetime point $\tp\in \cM$ will be viewed as a complex co-dimension 2 defect in $\PT$. The gauge field $\A$ will live in the bulk while $\tJ$ will be defined on the defect. Then, to construct chiral higher-spin algebras $\ca$, whose associative product are defined by the OPE structures of holomorphic higher-spin currents on $\P^1_{\tp}$, we can employ Koszul duality -- a symmetry-preserving map, which transfers associative structures of h$\hs$ to $\ca$ as in \cite{Costello:2022wso}.

We will show that the chiral higher-spin algebras in consideration are non-unitary $W_{1+\infty}[\mg]$-algebras. Namely, their generators can have negative conformal weights and take values in the same Lie algebra $\mg$ as the bulk twistor fields.\footnote{Here, the non-unitarity of $\ca$ can also understood from the fact that the dual bulk twistor theories are non-unitary theories. Note that we choose to work in the helicity basis to simplify our analysis.} (For previous work related to unitary $W_{1+\infty}$ and $W_{1+\infty}[\mg]$ algebras, see e.g. \cite{Pope:1989sr,Pope:1989ew,Pope:1991ig} and \cite{Odake:1990rr,Bakas:1991fs}.)

\subsection{Kozsul duality and chiral higher-spin algebra}\label{sec:Koszul-duality}

As stated, we want to induce the underlying symmetry h$\hs$ in the bulk, i.e. twistor space, onto the defect $\P^1_{\tp}$ in terms of chiral symmetry algebras $\ca$ by constructing the OPEs of some holomorphic higher-spin currents $\tJ[\Delta,\tH]$.\footnote{This is regarded as defect construction, which is opposite with the usual bulk construction in holography. In particular, in the usual bulk reconstruction procedure, one often starts with a global symmetry and try to gauge it as the local gauge symmetry in the bulk. However, this does not always guarantee the existence of a bulk theory, especially when the symmetry is intricate, see e.g. \cite{Sharapov:2019vyd}.} As a result, there should be a symmetry preserving map, which transfer the associative structures of h$\hs$ to $\ca$. This map is known as Koszul homomorphism or Koszul duality, cf. \cite{Costello:2022wso}. 

\paragraph{Koszul duality.} Formally, Koszul duality (see e.g. Chapter 3 in \cite{loday2012algebraic})
\begin{align}\label{eq:Koszul-duality}
    S(V)\cong \bigwedge(V^{\vee})^!
\end{align}
is a duality between the derived category of an exterior algebra $\bigwedge(V^{\vee})=\bigoplus_{k=0}^{\dim(V)}\wedge^k(V^{\vee})$ and that of a symmetric algebra $S(V)=\bigoplus_{k=0}^{\infty}\Sym^k(V)$ with $V$ being some vector space of dimension $\dim(V)$. 

In our context, the exterior algebra above is a non-commutative algebra denoted as $(\Coh,\star)$ and $S(V)$ corresponds to the chiral higher-spin algebra $\ca$ that we aim to construct. Then, the Koszul duality in our setting is the identification \cite{Costello:2022wso}:
\begin{align}\label{eq:Kozsul1}
    \ca \cong \big(\Coh,\star\big)^{!}\,,\qquad \Coh:=\left\{\Phi\in\cF_{BV}\,\big|\,\deg(\Phi)=0\right\}\,,
\end{align}
where the exclamation mark ! denotes the Koszul dual operation of the pair $(\Coh,\star)$. Here, $\cF_{BV}$ is the space of BV fields. 

Note that the projection $\cF_{BV}\rightarrow \Coh$ 
is an augmentation map, which projects $\cF_{BV}$ to a subspace, which contains only elements of cohomological degree zero, i.e. the space of physical fields $(\sA,\vartheta)$. This map certainly preserves the associativity of the $\star$-product and is compatible with the degree one nilpotent cohomological vector field
\begin{align}
    Q:=(S_{BV},-)_{BV}\,,
\end{align}
where $(-,-)$ is the graded Poisson bracket on $\cF_{BV}$ induced by the symplectic form $\omega_{BV}$ of degree $-1$ (see e.g. Section 3.2 in \cite{Tran:2025uad} for a quick recap). 

Although one may expect that as $\Coh^{\star}$ is associative, $\ca$ will also be associative by virtue of Koszul duality, this expectation does not always hold if quantum corrections are taken into account. In some cases, $\ca$ must be extended by introducing suitable axionic currents so that associativity can be restored at quantum level, cf. \cite{Costello:2022upu}.

\paragraph{Chiral algebra.} To construct $\ca$ explicitly, we shall implement \eqref{eq:Koszul-duality} by introducing the couplings \cite{Witten:2003nn}
 \begin{align}
   S_{\tJ}= \int_{\P^1} e^0 \tJ[\Delta;\tH] \tA_{2h-2}\,,\qquad  e^0=\langle\lambda\, d\lambda\rangle
\end{align}
where $\tJ$ denotes a holomorphic current, which couples to a physical bulk field $\tA$ near the defect. Here, $\tA$ is a restriction to the $\P^1_{\tp}$ fiber of the bulk field $\sA\in \Omega^{0,1}(\PTc,\mg)$. 
\begin{figure}[ht!]
    \centering
    \includegraphics[scale=0.36]{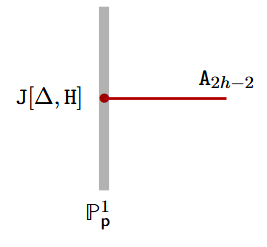}
    \caption{The coupling between a bulk field $\tA$ near a defect $\P^1_{\tp}$ and a current $\tJ[\Delta;\tH]$.}
    \label{fig:Koszul-coupling}
\end{figure}

Our task from now on is to construct the OPE between higher-spin currents $\tJ$ by imposing gauge invariance on the partition function
\begin{align}\label{eq:partition-function}
    \cZ=\int D\A D\phi D\psi \exp\Big(S_{BV}+ \int_{\P^1}e^0\tJ(\phi,\psi)\,\tA+ S[\phi,\psi]\Big)\,,
\end{align}
of the bulk/defect system above.
Here, we assume that $\tJ$ can be constructed from some chiral matter fields $\phi,\psi$ on the defect. (This topic will be discussed in Section \ref{sec:boson-fermi}.) 

Let us now expand the exponential with the source term in \eqref{eq:partition-function} in radial ordering and
consider the BRST (or gauge) transformation\footnote{We will perform most of the computation with the $\star$-product. However, it is also possible to consider the $\{\,,\}$ bracket as well as the usual Lie algebra bracket. These cases will be discussed whenever it is appropriate.}
\begin{align}
    Q_{\star}\big|_{\P^1}\tA=\bar{\p}\big|_{\P^1}\cc+[\tA,c]_{\star}\,,\qquad \bar{\p}\big|_{\P^1}=\bar{e}^0\bar{\p}_0\,,\qquad \bar{\p}_0=\langle\lambda\,\hat\lambda\rangle\lambda_{\alpha}\frac{\p}{\p \hat\lambda_{\alpha}}\,.
\end{align}
Recall that $\tA_{2h-2}$ has weight $2h-2$. Thus, near the defect, we may consider the following `plane-wave' representation of the bulk field
\begin{align}\label{eq:plane-wave-1}
    \tA:=\sA_{0} \langle \hat\lambda \,d\hat\lambda\rangle\,,\qquad  \sA_0=\sA^a_{2h-2}(\lambda,\hat\lambda)T_a\,e^{-[w\,\tilde v]}\,.
\end{align}
Here, the spinors $\tilde v$ are external data associated with $\tA$, and $T_a$ are generators of some Lie algebra $\mg$ induced by a stack of $N$ space-filling D${}_5$ branes wrapping $\PT$.\footnote{The above plane-wave basis is inspired by the half-Fourier or Penrose transform, cf. \cite{Witten:2003nn,Cachazo:2004kj}, of the null momentum $p_i^{\alpha\dot\alpha}=\lambda_i^{\alpha}\tilde\lambda_i^{\dot\alpha}$, from momentum space to twistor space. After the Penrose transform, the left-handed spinors $\lambda$ can be viewed as coordinates on the pointed algebraic curves $\P^1_{\tp}$, cf. \cite{Nair:1988bq}. Meanwhile, the right-handed spinor $\tilde \lambda_i$ of $SU(2)_-\subset SL(2,\C)$ can be identified with the spinors $\tilde v_i$ in \eqref{eq:plane-wave-1}. Note that, in a complexified setting, $\tilde v_i$ does not depend on $\lambda_i$. } 

Assuming the higher-spin currents are holomorphic, we perform an integration by part and pick up a boundary term that forces two nearby points to coincide in radial direction (see the computation along the line of \eqref{eq:tree-OPE-2}). The result of the BRST variation is
\begin{align}\label{eq:BRST-variation}
    \int_{\P^1}e^0\int_{\P^1}e^0\tA^a \bar\p\cc^b[\tJ_a,\tJ_b]=\int_{\P^1}e^0[\tA,\bar\p\cc]_{\star}^c\tJ_c=\sum_p\int_{\P^1}e^0\tg_p^{abc}\frac{[\tilde v_{\tA}\,\tilde v_{\cc}]^p}{p!}\tA_{a}\bar\p\cc_b\tJ_c\,,
\end{align}
where $\tg^{abc}_p$ represents a structure constant that depends on the number of $\msu(2)$-contractions between the $\tilde v$ spinors associated to the bulk fields $\tA$ and $\cc$, respectively. In particular,
\begin{subequations}
    \begin{align}
    \tg_{p\in 2\N_0}^{abc}&=f^{abc}\,,\quad &f^{abc}&=\Tr(\tT^a[\tT^b,\tT^c])\,,\\
    \tg_{p\in 2\N_0+1}^{abc}&=d^{abc}\,,\qquad &d^{abc}&=\Tr(\tT^a\{\tT^b,\tT^c\})\,.
\end{align}
\end{subequations}
\normalsize
We obtain the following simple relations
\begin{align}\label{eq:ca-test-1}
    [\tJ^a,\tJ^b]=\sum_p\frac{\tg^{abc}_p}{\langle\lambda_{\tA}\,\lambda_{\cc}\rangle}\frac{[\tilde v_{\tA}\,\tilde v_{\cc}]^p}{p!}\tJ_c\,.
\end{align}
Thus, the operator product expansion (OPE) of the higher-spin currents $\tJ$ indeed encodes the information of the $\star$-product as expected. 
Note that even though the above relations are well-defined and can be checked to be associative, there remain several issues. Namely, we do not know the conformal helicity weights nor the $\msu(2)$-charge of the higher-spin currents $\tJ$. To resolve this situation in a manifestly Lorentz covariant way,  
we will unfold the coupling $S_{\tJ}$ as:
\begin{align}
    S_{\tJ}=\sum_{k}\frac{1}{k!}\int_{\P^1_{\tp}} e^0\tJ_{\dot\alpha(k)}\p^{\dot\alpha(k)}\tA_{2h-2}=\sum_k\frac{1}{k!}\int_{\P^1}e^0\tilde v^{\dot\alpha(k)}\tJ_{\dot\alpha(k)}\tA_{2h-2}\,,
\end{align}
where it is convenient to condense our notation as
\begin{align}
    \p^{\dot\alpha(k)}\equiv\p^{\dot\alpha_1}\ldots \p^{\dot\alpha_k}\,,\qquad \tilde v^{\dot\alpha(k)}\equiv v^{\dot\alpha_1}\ldots v^{\dot\alpha_k}\,.
\end{align}
Here, $\tJ_{\dot\alpha(k)}$ is a rank-$k$ symmetric higher-spin current valued in the $k$th jet $j^{k}(\tA_{2h-2})$ of $\tA_{2h-2}$. Upon unpacking $\tilde v^{\dot\alpha}=(\tilde v^{\dot 1},\tilde v^{\dot 2})$, we reproduce the coupling found in e.g. \cite{Costello:2022wso}. Namely,
\begin{align}
    S_{\tJ}=\sum_{m+n=k}\int_{\P^1}\frac{(\tilde v^{\dot 1})^m(\tilde v^{\dot 2})^n}{m!n!}\tJ_{\dot 1(m)\,\dot 2(n)}\tA_{2h-2}\equiv \sum_{m+n=k}\int_{\P^1}\frac{(\tilde v^{\dot 1})^m(\tilde v^{\dot 2})^n}{m!n!}\underline{J}[m,n]\tA_{2h-2}\,.
\end{align}
Let us now introduce two quantum numbers to properly define the holomorphic currents\footnote{Our CFT data differ from those in, e.g. \cite{Costello:2022upu,Bittleston:2023bzp}.}
\begin{align}\label{eq:J-def-1}
    \tJ[\Delta;\tH]:=\sum_{k\geq 0}\frac{\tilde v^{\dot\alpha(k)}}{k!}\tJ_{\dot\alpha(k)}\,.
\end{align}
In particular, we denote
\begin{itemize}
    \item[1-] $\Delta=h$, with $h$ being the \emph{helicity} of the higher-spin field $\tA_{2h-2}$ that $\tJ[\Delta;\tH]$ is Koszul dual to, as the conformal \emph{helicity} weight;\footnote{Here, we could use the word `spin' instead of `helicity' as in the literature cf. \cite{Pope:1991ig}. However, we find it more appropriate to use the word helicity since the conformal weight $\Delta$ can be negative, leading to a non-unitary CFT in this context.} 
    \item[2-] $\tH=k$, which is the number of external spinors $\tilde v$ to which the rank-$k$ symmetric tensors $\tJ_{\dot\alpha(k)}$ are contracted with, as the $SU(2)_-$ charge. 
\end{itemize}
Note that 
unlike the standard CFT (see e.g. \cite{DiFrancesco:1997nk}), $\Delta$ can be negative, 
similar to the case of the putative celestial CFT, cf. \cite{Strominger:2017zoo}.

In what follows, we will stay in the patch $\C^3\subset \PT$ with the in-homogenous coordinates:
\begin{align}
    \lambda_i^{\alpha}=(1,z_i)\,,\quad \hat\lambda^{\alpha}_i=(-\bar{z}_i,1) \,,\quad w^{\dot\alpha}=(w^{\dot1},w^{\dot 2})\,,\qquad \tilde v^{\dot\alpha}=(\tilde v^{\dot 1},\tilde v^{\dot 2})\,.
\end{align}
This is the patch where we can identify $\P^1_{\tp}$ with the celestial twistor sphere. 
Remarkably, in this patch, many expressions will receive great simplification. For instance,
\begin{align}
   \langle i\,j\rangle\equiv  \epsilon^{\beta\alpha}\lambda_{i\alpha}\lambda_{j\beta}=z_i-z_j\equiv z_{ij}\,,\quad [1\,2]=[\tilde v_1\,\tilde v_2]\,;\qquad e^0=dz\,,\quad \bar{\p}\big|_{\P^1}=d\bar{z}\p_{\bar{z}}\,.
\end{align}
Assuming $\P^1_{\tp}$ is extended enough so that it can wrap around $\C$. We can then view $\tJ[h_i;k_i](z_i)\in \C[z,z^{-1}]$ where
\begin{align}\label{eq:tJ-def}
    \tJ[h;k](z)
    =\sum_{n\in \Z} \frac{J_{n}[h;k]}{z^{n+1}}=\sum_{n\in\Z}\frac{J_n[k]}{z^{n+h+1}}\,,
\end{align}
as elements of $\cF(U_i)$ -- the space of local operators assigned to each open subset $U_i\subset \C$ around the point $z_i\in \C^{\times}$. We will require $\cF(U_i)$ to fulfill the factorization condition 
\begin{align}
    \cF(U_1 \sqcup\ldots \sqcup U_n)\cong \cF(U_1)\otimes \ldots \otimes \cF(U_n),
\end{align}
 where $\sqcup$ denotes the disjoint union. This allows us to view chiral algebras as factorization algebras \cite{beilinson2004chiral,frenkel2004vertex,costello2021factorization}. Then, for two points $z_1,z_2\in \C$, the fusion of 
 \begin{align}
     \tJ_1(z_1)\circ \tJ_2(z_2)\rightarrow V(\tJ_1,\tJ_2)\,,
 \end{align}
where $V(\tJ_1,\tJ_2)$ denotes the fusion vertex, should admit an asymptotic expansion in $\frac{1}{z_{12}}$ and satisfy the locality condition
 \begin{align}
     (z_1-z_2)^n\tJ_1(z_1)\circ \tJ_2(z_2)= (z_2-z_1)^n \tJ_2(z_2)\circ \tJ_1(z_1)\,,
 \end{align}
 for some number $n\in \N_0$. 
 
 In what follows, we will condense our notations as
\begin{subequations}
    \begin{align}
    \text{currents}&: 
    &\tJ[i]&\equiv \tJ[h_i;k_i](z_i)\,,\\ 
    \text{generators}&: &\cJ_{m}[i]&\equiv \cJ_{m}[h_i;k_i]\,,
\end{align}
\end{subequations}
where the generator $\cJ_{m}[i]$ associated to the current $\tJ[i]$ is given by
\begin{align}
    \cJ_{m}[i]:=\Res\,\tJ[i]z_i^m =\oint dz_i \sum_{n\in \Z} \frac{J_n[h_i;k_i]}{z_i^{n+1-m}}\,.
\end{align}
Note that we will suppress the factors $\frac{1}{2\pi i}$ when writing contour integrals. 
Then, to consistently construct the OPE of $\tJ$s, we will impose the associativity conditions
\begin{align}\label{eq:to-be-checked}
  &\oint_{|w|=2} dw\, w^n\tJ^{a_1}[h_1](w)\oint_{|z|=1}\tJ^{a_2}\Big[s_2;2s_2-1\Big](0)\tJ^{a_3}\Big[s_3;2s_3-1\Big](z)\nn\\
    &=+\oint_{|z|=2} dz \tJ^{a_3}\Big[s_3;2s_3-1\Big](z)\oint_{|w|=1}\tJ^{a_2}\Big[s_2;2s_2-1\Big](0)\tJ^{a_1}[h_1](w)w^n\nn\\
    &\ \ \ \,+\oint_{|z|=2} dz \tJ^{a_2}\Big[s_2;2s_2-1\Big](0)\oint_{|z-w|=1}\tJ^{a_3}\Big[s_3;2s_3-1\Big](z)\tJ^{a_1}[h_1](w)w^n\,,
\end{align}
where $n\in \N_0$. Here, the power of $n$ are determined by the pole structures in the OPEs of the current currents. 
When combining these conditions with gauge invariance of \eqref{eq:partition-function}, we have a well-posed framework for efficiently fixing all couplings and OPE data.


\subsection{Higher-spin current OPEs}\label{sec:algorithm}

For our purposes of fixing OPEs resulting from the fusions of operators on the defect, we will consider processes in which two bulk fields $\tA$s interact with a defect by either couple directly to currents, or interact with themselves in the bulk first then couple to currents on the defect. This can be depicted as
\begin{align}\label{eq:diagrams}
    \parbox{65pt}{\includegraphics[scale=0.19]{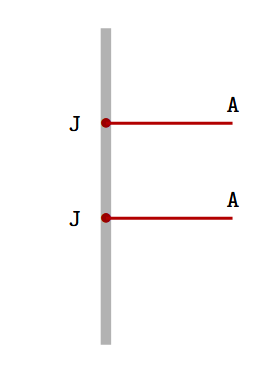}}\rightarrow &\quad \parbox{55pt}{\includegraphics[scale=0.19]{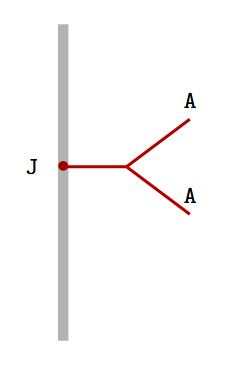}}+\quad \parbox{53pt}{\includegraphics[scale=0.19]{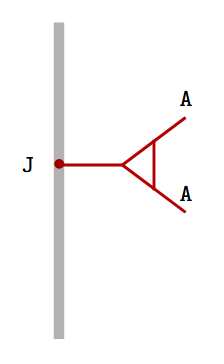}}+\quad (\text{higher bulk-loops})\nn\\
    &+\parbox{55pt}{\includegraphics[scale=0.19]{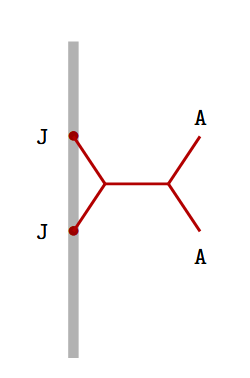}}+\quad \parbox{55pt}{\includegraphics[scale=0.19]{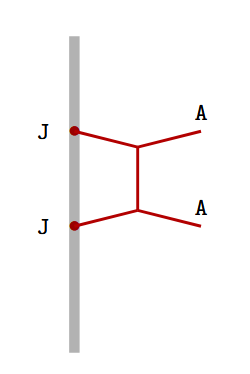}}+\quad \parbox{55pt}{\includegraphics[scale=0.19]{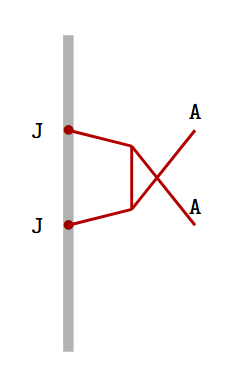}}+\quad (\text{higher bulk-loops})\nn\\
    &+(\text{higher numbers of $\tJ$s on the defect and higher bulk-loops})\,.
\end{align}

The algorithm by which one can construct OPEs for $\tJ[i]\tJ[j]$ via Koszul duality is as follows:
\begin{itemize}
    \item[1-] Fix the number of bulk fields near the defect. Here, it is two.
    \item[2-] Draw all possible bulk/defect diagrams, where the number of currents/fields on the $1d$ holomorphic curve $\P^1_{\tp}$ can be more than two. 
    \item[3-] At each order in perturbation theory, one enforces bulk/defect amplitudes to be gauge invariant on-shell. Namely, the BRST variation of the amplitudes on the lhs. of the arrow in \eqref{eq:diagrams}, should cancel out with the BRST variation of the amplitudes on the rhs. order by order in perturbation theory\footnote{similarly with the anomaly-inflow mechanism}. This generally imposes non-trivial constraints, allowing one to fix a large number of OPE coefficients. 
    \item[4-] For OPE coefficients that are unconstrained by gauge invariance of the bulk/defect amplitudes, we can fix them uniquely by imposing \eqref{eq:to-be-checked} with suitable power of $n$. 
    
\end{itemize}
Note that while the above procedure can determine the OPE structures between $\tJ$s, it does \emph{not} guarantee that the resulting chiral algebra will be associative to all orders in quantum corrections, cf. \cite{Costello:2022wso,Costello:2022upu}. The reason is that for $\ca$ to be associative, it typically requires the absence of anomalies order by order in perturbation theory. This often demands the introduction of additional fields such as axionic currents. This will be discussed in Section \ref{sec:axion-currents}. (Previous work on $\ca$ of various self-dual theories can be found e.g. in \cite{Costello:2022upu,Bittleston:2022jeq}.)

\subsubsection{Chiral higher-spin algebras at classical level} 
Let us now consider the simplest example to see how the above algorithm works. Consider the gauge variation of 
\begin{align}\label{eq:tree-amplitudes-OPE}
   \delta\Bigg( \parbox{55pt}{\includegraphics[scale=0.19]{2-point-precollide.png}}+\parbox{55pt}{\includegraphics[scale=0.19]{cubic.png}}\Bigg)=0\,,\qquad \text{where}\quad \delta \tA=\bar{\p}\cc\,.
\end{align}
By virtue of gauge invariance, we write the above as
\small
\begin{align}\label{eq:tree-OPE-1}
    \int_{\C\times \C} dz_idz_j \Big(\tJ[h_i]\bar{\p}\cc_i\,\tJ[h_j]\tA[j]+\tJ[h_i]\tA_i\,\tJ[h_j]\bar{\p}\cc_j\Big)+\int_{\C} dz \tJ[h_x](\tA_i\star\bar{\p}\cc_j+\bar{\p}\cc_i\star\tA_j)=0\,.
\end{align}
\normalsize
Upon making a change of variables
\begin{align}\label{eq:change-var}
    z_0=\frac{z_i+z_j}{2}\,,\qquad z_{ij}=z_i-z_j\,,
\end{align}
the above can be cast into 
\small
\begin{align}\label{eq:tree-OPE-2}
    \int_{\C}dz_0\int_{|z_{ij}|=\epsilon} \!\!\!\!\!\!\! dz_{ij} \Big(\tJ[h_i]\bar{\p}\cc_i\,\tJ[h_j]\tA_j+\tJ[h_i]\tA_i\,\tJ[h_j]\bar{\p}\cc_j\Big)=\int_{\C} dz \tJ[h_x](\bar{\p}\cc_i\star\tA_j+\tA_i\star\bar{\p}\cc_j)\,,
\end{align}
\normalsize
where we note that there is an extra minus sign comes from the Jacobian regarding the change of variables \eqref{eq:change-var}. 
Equating \eqref{eq:tree-OPE-2} using the test functions
\begin{align}
    \tA_{2h-2}=f_h(z)e^{-[w\,\tilde v]}d\bar{z} \,,\qquad \cc_{2h-2}=g_h(z)e^{-[w\,\tilde v] }\,,\qquad h\in \Z\,,
\end{align}
noting that $\tilde v$ has $U(1)$-charge $-1$, 
we get:
\begin{align}\label{eq:full-tree-OPE}
    \tJ[h_i]^a\tJ[h_j]^b\sim \frac{1}{z_{ij}}\sum_p\tg^{abc}_p\frac{[\tilde v_i\,\tilde v_j]^p}{p!}\tJ^c[h_i+h_j-1-p]\,,\qquad \tJ[h_i]\equiv \tJ[h_i;0]\,,
\end{align}    
where $\tg_{k\in 2\N_0}^{abc}=f^{abc}\,,\tg_{k\in 2\N_0+1}=d^{abc}$ and
\begin{align}
    p=h_i+h_j-h_x-1\,.
\end{align}
Observe that the above helicity constraint 
is slightly different with the helicity constraint obtained from the trivial monodromy condition on $S^7$, cf. \eqref{eq:k-constraint}. This is because we treat the momentum moving toward the defect as outgoing from the bulk vertices. 

By identifying $z_{ij}=\langle i\,j\rangle$ and $[i\,j]=[\tilde v_i\,\tilde v_j]$, we see that the factor
\begin{align}
    \cAmp_{split}^{tree-level}\sim\frac{[1\,2]^p}{\langle 1\,2\rangle}\,,\qquad p\geq 0\,,
\end{align}
can be identified with the holomorphic collinear limit of the soft factors for various chiral or self-dual higher-spin theories in flat space \cite{Tran:2022amg}. 

At classical level, the chiral algebra $\ca$ can be identified to the color-kinematic algebra studied in \cite{Ponomarev:2017nrr,Monteiro:2022xwq} by simply replacing $[\tilde v_i\,\tilde v_j]^p\mapsto X(\tilde v_i,\tilde v_j)^{p}$ using the notations of \cite{Monteiro:2011pc}, or $\overline{\mathbb{P}}$ in the notation in e.g. \cite{Ponomarev:2017nrr}. 
We also note that the above OPE can reduce correctly to the OPEs associated to the affine non-unitary Kac-Moody algebra at level-0 (for $p=0$ and $\mg\neq \mathfrak{u}(1)$), and the non-unitary $w_{1+\infty}$ algebras (for $p=1$ and $\mg= \mathfrak{u}(1)$) \cite{Strominger:2021mtt}.\footnote{Here, non-unitarity stems from the fact that $h$ can be negative. If $h$s are strictly positive, the chiral algebras that we are constructing will be some wedge algebras instead.} Namely,
\begin{subequations}
    \begin{align}
        \underline{\text{level-0 Kac-Moody}}&:&\tJ^a[+1;0](z_1)\tJ^b[+1;0](z_2)&\sim\frac{f^{abc}}{z_{12}}\tJ^c[+1;0]\,,\nn\\
        & &\tJ^a[+1;0](z_1)\tJ^b[-1;0](z_2)&\sim\frac{f^{abc}}{z_{12}}\tJ^c[-1;0]\,,\\
        \underline{w_{1+\infty}}&: &\tJ[+2;0](z_1)\tJ[+2;0](z_2)&\sim\frac{[\tilde v_1\,\tilde v_2]}{z_{12}}\tJ[+2;0]\,,\nn\\
        & &\tJ[+2;0](z_1)\tJ[-2;0](z_2)&\sim\frac{[\tilde v_1\,\tilde v_2]}{z_{12}}\tJ[-2;0]\,.
    \end{align}
\end{subequations}
\normalsize
It is well-known that, the symmetry of affine non-unitary Kac-Moody algebra at level-0 governs self-dual Yang-Mills theory \cite{Chau:1981gi}, while the symmetry $w_{1+\infty}$ algebra is the underlying symmetry of self-dual gravity \cite{Strominger:2021mtt}. 

Let us now show that $\ca$ can be indeed identified with the color-kinematic algebra of $4d$ chiral/self-dual higher-spin theories found in \cite{Ponomarev:2017nrr,Monteiro:2022xwq}. (See also \cite{Mago:2021wje}.) Consider,
\begin{align}\label{eq:OPE-Lie}
    \oint_{|z_k|} 
    \tJ[k]\oint_{|z_{ij}|<|z_k|}\!\!\! 
    \tJ[i] 
    \tJ[j]+\text{cyclic}(i,j,k)=0\,.
\end{align}
By feeding \eqref{eq:full-tree-OPE} to \eqref{eq:OPE-Lie} and 
denoting $[i\,j]\equiv [\tilde v_i\,\tilde v_j]$ for convenience, we find that:
\begin{itemize}
    \item[I-] The affine non-unitary Kac-Moody algebra at level-0 is a Lie algebra due to the Jacobi's relations
    \begin{align}\label{eq:Jacobi-f}
        f^{abe}f^{ecd}+\text{cyclic}(b,c,d)=0\,,
    \end{align}
    between the structure constants $f^{abc}$.
    \item[II-] The non-unitary $w_{1+\infty}$ is also a \emph{kinematic} Lie algebra by virtue of the Schouten relations
    \begin{align}
        [1\,2][3\,4]+[2\,3][1\,4]+[3\,1][2\,4]=0\,,
    \end{align}
    where $\tilde v_4$ is an extra spinor satisfying $\sum_{i=1}^4\tilde v_i^{\dot\alpha}=0$. See \cite{Monteiro:2022xwq} for the discussion.
    \item[III-] The computation of \eqref{eq:OPE-Lie} for colored or colorless chiral higher-spin algebra $\ca$ can be done similarly. We proceed with the more general case, i.e. the colored case. Computing \eqref{eq:OPE-Lie}, we find
    \begin{align}\label{eq:pre-associative}
        \sum_{p+q=n}\Big(\tg_{p}^{abe}\tg_q^{ecd}\frac{[1\,2]^p[3\,4]^q}{p!q!}\Big)+\text{cyclic}(1,2,3)\,.
    \end{align}
    Upon splitting
\begin{align}
    \sum_{p\in \N_0}\tg_p^{abc}\frac{[i\,j]^p}{p!}=\sum_{p\in 2\N_0}f^{abc}_p\frac{[i\,j]^p}{p!}+\sum_{p\in 2\N_0+1}d^{abc}_p\frac{[i\,j]^{p}}{p!}\,,
\end{align}
we obtain from \eqref{eq:pre-associative} that
\small
\begin{align}
    &\Bigg[\sum_{\substack{p,q\in 2\N_0\\
    p+q=n}}f^{a_1b_2e}f^{ec_3d_4}\frac{[1\,2]^p[3\,4]^q}{p!q!}+\text{cyclic}(1,2,3)\Bigg]\nn\\
    &+\Bigg[\sum_{\substack{p\in 2\N_0\\
    q\in 2\N_0+1\\
    p+q=n}}f^{a_1b_2e}d^{ec_3d_4}\frac{[1\,2]^p[3\,4]^q}{p!q!}+\text{cyclic}(1,2,3)\Bigg]\nn\\
    &+\Bigg[\sum_{\substack{p\in 2\N_0+1\\
    q\in 2\N_0\\
    p+q=n}}d^{a_1b_2e}f^{ec_3d_4}\frac{[1\,2]^p[3\,4]^q}{p!q!}+\text{cyclic}(1,2,3)\Bigg]\nn\\
    &+\Bigg[\sum_{\substack{p,q\in 2\N_0+1\\
    p+q=n}}d^{a_1b_2e}d^{ec_3d_4}\frac{[1\,2]^p[3\,4]^q}{p!q!}+\text{cyclic}(1,2,3)\Bigg]\,,
\end{align}
\normalsize
where each of the square brackets above can be shown to vanish by some simple algebra, cf. \cite{Monteiro:2022xwq}. 
\end{itemize}

\begin{corollary} The chiral higher-spin algebra $\ca$ can be classically identified with a color-kinematic algebra where the generators $\cJ$ obey the Jacobi's identities
\begin{align}
    [[\cJ^a,\cJ^b],\cJ^c]+ [[\cJ^b,\cJ^c],\cJ^a]+[[\cJ^c,\cJ^a],\cJ^b]=0\,,
\end{align}
for
\begin{align}
    \Big[\cJ^a[h_i],\cJ^b[h_j]\Big]=\sum_p\tg^{abc}_p\frac{[\tilde v_i\,\tilde v_j]^p}{p!}\cJ^c[h_i+h_j-1-p]\,.
\end{align} 
\end{corollary}

\subsubsection{Chiral higher-spin algebras at one loop}
Let us now study the chiral higher-spin algebra at one loop by considering the gauge invariance of the following diagrams:
\begin{align}\label{eq:loop-amplitudes-OPE}
   \delta\Bigg( \parbox{57pt}{\includegraphics[scale=0.19]{2-point-precollide.png}}+\parbox{50pt}{\includegraphics[scale=0.19]{s-channel.png}}+\parbox{52pt}{\includegraphics[scale=0.19]{t-channel.png}}+\parbox{53pt}{\includegraphics[scale=0.19]{u-channel.png}}\Bigg)=0\,.
\end{align}
Before proceeding, we note that the reason only the above $s$-, $t$- and $u$-channels diagrams can potentially account for quantum correction of the OPE at one loop is that tadpole, bubble-on-external-leg, and triangle diagrams vanish either by symmetry arguments or algebraic constraints. In particular, since we are dealing with massless theories, tadpole diagrams must vanish to avoid Lorentz violation. Meanwhile, contributions from bubble diagrams on external legs can be absorbed through field redefinitions. Finally, the triangle diagram vanishes since it can be written as a total derivative. Indeed, a triangle bulk-diagram can be evaluated as
\begin{align}
    \parbox{35pt}{\includegraphics[scale=0.3]{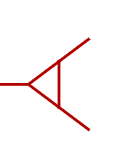}}\sim\sum_{\Spec}\int_{\C^3}\mho\Tr(\p\sA\p\sA\p\sA)=0\,.
\end{align}
Therefore, these types of diagrams can be safely discarded without affecting the one-loop correction to the OPEs of $\ca$.


\paragraph{On one-loop computation.} Let us now compute the channels in \eqref{eq:loop-amplitudes-OPE}. To have a uniform treatment, we will parametrize a generic point $X_i$ in $\C^3\subset \PT$ as
\begin{align}\label{eq:C3-patch}
    X^a_i=(z_i,w_i^{\dot\alpha})=(z_i,w_i^{\dot 1},w_i^{\dot 2})\,.
\end{align}
Since there is no confusion can arise, we can suppress the dots to ease the notation. Now, in the twistorial `Lorenz gauge' (see e.g. \cite{Bittleston:2022nfr}) 
\begin{align}
    \bar{\p}^{\dagger}\sA=0\,,\qquad \bar{\partial}^{\dagger}:=-\ast \bar\partial\,\ast\,,
\end{align}
any regularized bulk propagator of the physical fields $\sA^{0,1}$ can be described by a bi-local matrix-valued $(0,2)$-form on $\C^3_{X_1}\times \C^3_{X_2}$ as
\begin{align}
    \cP^{h_1,h_2}(X_1,X_2|\varepsilon,L)=-\delta_{h_1+h_2,0}\,\Omega_{12}^{(0,2)}\int_{\varepsilon}^{L}\frac{d\ell}{2\ell}\Big(\frac{1}{4\pi\ell}\Big)^3e^{-\frac{|X_{12}|^2}{4\ell}}\,,
\end{align}
where $\varepsilon<L$ are the UV and IR characteristic length scales, respectively. Moreover,
\begin{align}
    \Omega^{(0,2)}_{12}=\epsilon_{acb}\bar{X}_{12}^ad\bar{X}_{12}^bd\bar{X}_{12}^c\,,\quad X_{12}^a=X_1^a-X_2^a\,,\qquad a=1,2,3\,.
\end{align}
Sending $\varepsilon\rightarrow 0$ and $L\rightarrow\infty$, we obtain
\begin{align}
    \bar{\p}\cP^{h_1,h_2}(X_1,X_2|0,\infty)=-\delta_{h_1+h_2,0}\delta^{0,3}(X_1-X_2)\,.
\end{align}
where, $\delta^{0,3}$ is a holomorphic generalized Dirac delta function $(0,3)$-form. The `bulk-to-bulk' propagator can be evaluated explicitly as
\begin{align}
    \cP^{h_1,h_2}(X_1,X_2|0,\infty)=-\frac{\delta_{h_1+h_2,0}\,\Omega^{(0,2)}_{12}}{\pi^3(|X_{12}|^2)^3}=-\frac{\delta_{h_1+h_2,0}\,\Omega^{(0,2)}_{12}}{\pi^3\big(|z_{12}|^2+|w_{12}|^2\big)^3}\,,
\end{align}
where
\begin{align}
    |w_{12}|^2=|w_{12}^1|^2+|w_{12}^{2}|^2\,.
\end{align}
The, the `bulk-to-defect' propagator $\cK$ can be obtained by sending one of the legs of $\cP$ to the defect -- for instance, to the point $Y_2=(z_2,0)$. Namely,
\begin{align}
    \cK^{h_1,h_2}(X_1,Y_2|0,\infty)=-\frac{\delta_{h_1+h_2,0}\,\omega_{12}^{(0,2)}}{\pi^3\big(|X_1-Y_2|^2\big)^3}=-\frac{\delta_{h_1+h_2,0}\,\omega_{12}^{(0,2)}}{\pi^3\big(|z_{12}|^2+|w_1|^2\big)^3}\,,
\end{align}
In terms of integral representation,
\begin{align}\label{eq:star-K}
    \cK^{h_1,h_2}(X_1,Y_2)=-\delta_{h_1+h_2,0}\omega_{12}^{(0,2)}\int_{0}^{\infty}\frac{d\ell}{2\ell}\Big(\frac{1}{4\pi\ell}\Big)^3\,\exp\Big(-\frac{|z_{12}|^2+|w_1|^2}{4\ell}\Big)\,.
\end{align}
An explicit computation shows that
\begin{subequations}
    \begin{align}
        \Omega^{(0,2)}_{12}&=\bar{z}_{12}[d\bar{w}_{12}\wedge d\bar{w}_{12}]-2d\bar{z}_{12}[\bar{w}_{12}d\bar{w}_{12}]\,,\\
        \omega^{(0,2)}_{12}&=\bar{z}_{12}[d\bar{w}_{1}\wedge d\bar{w}_{1}]-2d\bar{z}_{12}[\bar{w}_{1}d\bar{w}_{1}]\,.
    \end{align}
\end{subequations}
To ease the notations, we shall write
\begin{align}
    \Omega^{(0,2)}_{X_1X_2}=\bar{\Omega}_{X_1X_2}\,,\qquad \omega^{(0,2)}_{X_1Y_2}=\bar{\omega}_{X_1Y_2}\,,
\end{align}
from now on. 

Let us now point out a nice observation in \cite{Bittleston:2023bzp}, which leads to the vanishing of the $s$-channel and a good deal of other bulk/defect digrams. 
\begin{lemma}\label{lem:tip-diagrams} Any bulk/defect diagrams of the form
\begin{align}\label{eq:tip-diagrams}
    \parbox{80pt}{\includegraphics[scale=0.4]{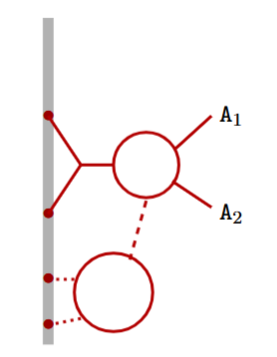}}\,,
\end{align}
where there are two bulk-to-boundary propagators form a triangle with the defect, vanish. 
\end{lemma}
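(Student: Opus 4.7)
The strategy is to show that the vanishing is purely local at the tip vertex where the two bulk-to-defect propagators meet: there, the anti-holomorphic form degree forced by the propagators already exceeds three, so the integrand vanishes pointwise before any bulk integration is carried out. Concretely, I would use the explicit expression
\begin{align}
\omega^{(0,2)}_{XY}=\bar{z}_{XY}[d\bar{w}_X\wedge d\bar{w}_X]-2\,d\bar{z}_{XY}[\bar{w}_X\,d\bar{w}_X]
\end{align}
for the $(0,2)$-form sitting inside $\cK^{h_1,h_2}(X,Y)$ that was recalled above. At the tip vertex $X\in\mathbb{C}^3$, the two propagators $\cK(X,Y_1)$ and $\cK(X,Y_2)$ share the bulk coordinate $X$, while the defect points $Y_i\in\P^1_{\tp}$ (with $w_{Y_i}=0$) are held fixed when integrating over $X$. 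Hence the only anti-holomorphic $1$-forms available at the tip are $d\bar{z}_X$, $d\bar{w}_X^{\dot 1}$ and $d\bar{w}_X^{\dot 2}$, three in total.

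Next I would expand $\omega^{(0,2)}_{XY_1}\wedge\omega^{(0,2)}_{XY_2}$ bilinearly and verify that each of the four resulting pieces vanishes. The term $[d\bar{w}_X\wedge d\bar{w}_X]\wedge[d\bar{w}_X\wedge d\bar{w}_X]$ would need four $d\bar{w}_X$'s built from only two spinor coordinates and so vanishes; the two cross-terms of type $[d\bar{w}_X\wedge d\bar{w}_X]\wedge d\bar{z}_X[\bar{w}_X\,d\bar{w}_X]$ vanish for the same reason; the remaining piece $d\bar{z}_X[\bar{w}_X\,d\bar{w}_X]\wedge d\bar{z}_X[\bar{w}_X\,d\bar{w}_X]$ vanishes through $d\bar{z}_X\wedge d\bar{z}_X=0$. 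In each case, the underlying reason is the universal one that any $(0,4)$-form on $\mathbb{C}^3$ is identically zero.

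Finally, I would emphasize that this vanishing is completely local at the tip and survives the insertion of any additional structure elsewhere in the diagram: further legs attached to $X$ through cubic bulk vertices only raise the anti-holomorphic degree at that point beyond three, while the $\mg$-structure, external polarizations $\tilde v$ and contour data on $\P^1_{\tp}$ enter only as overall scalars that cannot restore a nonzero top form. The only step that requires genuine care is identifying the independent anti-holomorphic directions at the tip, since the defect insertions $Y_i$ contribute no differential while $X$ varies; once this bookkeeping is in place there is no substantive obstacle, and the lemma follows at the level of the integrand.
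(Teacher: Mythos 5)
Your overall strategy --- expand $\bar{\omega}_{XY_1}\wedge\bar{\omega}_{XY_2}$ into its four bilinear pieces and kill each one by antisymmetry --- is the same as the paper's one-line computation, and three of your four terms are disposed of correctly and for the right reason (more than two copies of $d\bar{w}_X^{\dot\alpha}$ drawn from a two-dimensional space). The gap is in your treatment of the fourth term, $d\bar{z}_{XY_1}[\bar{w}_X\,d\bar{w}_X]\wedge d\bar{z}_{XY_2}[\bar{w}_X\,d\bar{w}_X]$, which you claim vanishes because $d\bar{z}_X\wedge d\bar{z}_X=0$. That rests on your premise that the defect points $Y_1,Y_2$ are ``held fixed'' and contribute no differentials, so that $d\bar{z}_{XY_i}$ may be replaced by $d\bar{z}_X$. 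But the current insertions sit at points of $\P^1_{\tp}$ that are themselves integrated over (cf.\ the $\int_{\C\times\C}$ in the $t$-channel expression \eqref{eq:t-channel} and the $\int_\C d^2z_0$ measures that follow), so $d\bar{z}_{XY_i}=d\bar{z}_X-d\bar{z}_{Y_i}$ genuinely carries an independent defect leg and $d\bar{z}_{XY_1}\wedge d\bar{z}_{XY_2}\neq 0$ in general. A simple degree count confirms that your blanket principle ``any $(0,4)$-form on $\C^3$ vanishes'' does not apply here: the product lives on $\C^3\times\P^1\times\P^1$, where five independent anti-holomorphic directions are available and a $(0,4)$-form need not die.

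The fourth term nevertheless vanishes, but for the reason the paper's proof actually highlights: $[\bar{w}_X\,d\bar{w}_X]\wedge[\bar{w}_X\,d\bar{w}_X]=\bar{w}_X^{\dot\alpha}\bar{w}_X^{\dot\beta}\,d\bar{w}_{X\dot\alpha}\wedge d\bar{w}_{X\dot\beta}$ contracts a symmetric tensor against an antisymmetric one and is identically zero --- this is the $[\bar{w}_X\,\bar{w}_X]=0$ factor appearing in the paper's expression, which survives regardless of what the $d\bar{z}$ legs do. Replacing your $d\bar{z}_X\wedge d\bar{z}_X$ justification by this spinor identity closes the gap; the rest of your argument, including the remark that extra bulk legs at $X$ only make matters worse, then goes through.
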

\begin{proof} By direct computation, it can be shown that
\begin{align}
    \cK(X,Y_1|0,\infty)\cK(X,Y_2|0,\infty)\propto\bar{\omega}_{XY_1}\bar{\omega}_{XY_2}\sim [\bar{w}_X\,\bar{w}_X]\,[d \bar{w}_X\wedge d \bar{w}_X] d\bar{z}_{X}\wedge d\bar{z}_{X}=0\,.
\end{align}
Consequently, diagrams of type \eqref{eq:tip-diagrams} do not affect the quantum corrections to the OPE between $\tJ$, thus can be discarded systematically.
\end{proof}
A direct consequence of Lemma \ref{lem:tip-diagrams} is that all allowing processes in the bulk/defect system in consideration should form quadrilateral or polygon diagrams with the defect.

\paragraph{The $t$-channel.} We can now consider the gauge variance of the following equation 
\begin{align}\label{eq:gauge-invariance-without-axion}
   \delta\Bigg( \parbox{55pt}{\includegraphics[scale=0.19]{2-point-precollide.png}}+\parbox{52pt}{\includegraphics[scale=0.19]{t-channel.png}}+\parbox{53pt}{\includegraphics[scale=0.19]{u-channel.png}}\Bigg)=0\,,
\end{align}
to extract the OPE between two currents at first order in quantum correction. 

It is useful to remind the reader that $\bar{\p}\cP(X_2,X_3)\sim \delta^{0,3}(X_2-X_3)$ and $\bar{\p}\sA\approx 0$. Thus, by doing an integration by part, we see that the $t$- and $u$-channels will be pinched whenever $\bar{\p}$ acts on one of their propagators. The contributions associated with pinched diagrams can then be safely discarded by virtue of Lemma \ref{lem:tip-diagrams}. Furthermore, when $\bar{\p}$ acts on $\tA$, the diagrams will also vanish on-shell.

Let us now compute the gauge variation of the $t$-channel. It is necessary to choose a convention for the orientation of vertices to avoid over counting. 
In particular, we take fields to be incoming toward bulk vertices, while those coupled to currents are outgoing. Then,
\begin{align}\label{eq:t-channel}
    \parbox{60pt}{\includegraphics[scale=0.23]{t-channel.png}}&=
   \sum_{h_1,h_4\in \Spec}\int_{\C\times\C}\tJ[h_1]\tJ[h_4]\int_{\C^3\times \C^3}\cM_3(\cK,\cP,\tA)\,,
\end{align}
\normalsize
where 
\begin{align}
   \cM(\cK,\cP,\tA)= \big(\cK^{h_1,h_x}(Y_1,X_2)\star \tA_{2h_2-2}\big)\cP^{h_x,h_y}(X_2,X_3)\big(\cK^{h_y,h_4}(Y_4,X_3)\star \tA_{2h_3-2}\big)\,.
\end{align}
\normalsize
Here, $\int_{\C}$ denotes the integration over the defect, and $\int_{\C^3}$ stands for integral over the bulk points. 

Perhaps, it is useful to interlude our computation a short discussion about the sum over the spectrum in \eqref{eq:t-channel}. 
Typically, when one computes amplitudes, the external helicities should be fixed. However, in the defect-construction procedure that we are considering, the CFT data of $\tJ[h_1,\tH_1]$ and $\tJ[h_4,\tH_4]$ are not given, a priori. These data, however, should be determined by symmetries which leads to some non-trivial relations involving the helicity of the bulk fields $\tA$. For this reason, the sums over the spectrum $\Spec$ represents some sort of ``integration''. 

To proceed, let us once again make the change of variables
\begin{align*}
    z_0=\frac{z_1+z_4}{2}\,,\qquad z_{14}=z_1-z_4\,,
\end{align*}
and consider the following test functions
\begin{align}\label{eq:test-functions-1-loop}
     \cc_{2s_2-2}(X_2)=z_{2}[w_2\,\tilde v_2]^{2s_2-1}\,,\qquad \tA_{2s_3-2}(X_3)=[w_3\,\tilde v_3]^{2s_3-1}d\bar{z}_3\,.
\end{align}
Note that these are the test functions associated to the left-most diagram in \eqref{eq:gauge-invariance-without-axion}. 
(Here, we consider generating functions with only $s\geq 1$, due to the trivial monodromy constraint on the total space $S^7$, which in turn leads to non-trivial integrals. This can be seen shortly.) 
Feeding these functions to the $t$-channel, and take a gauge variation, we obtain 
\begin{align}\label{eq:t-channel-variation}
    \delta \eqref{eq:t-channel}=-&\sum_{h_1,h_4\in \Spec}\sum_{i,j}\frac{\Gamma(2s_2)}{\Gamma(2s_2-i)}\frac{\Gamma(2s_3)}{\Gamma(2s_3-j)}\tg_i^{a_1a_2e}\tg_j^{ea_3a_4}\int_{\C}dz_0 \oint_{|z_{14}|=\epsilon}\!\!\!\!\!\!dz_{14}\tJ^{a_1}[h_1]\tJ^{a_4}[h_4]\nn\\
    &\times\int_{\C^3\times\C^3}\!\!\!\! (dX_2)^3(dX_3)^3\,z_2[w_2\,\tilde v_2]^{2s_2-i-1} \,[w_3\,\tilde v_3]^{2s_3-j-1} d\bar{z}_3\frac{\bar{\omega}_{Y_1X_2}\bar{\Omega}_{X_2X_3}\bar{\omega}_{Y_4X_3}}{\pi^3|X_{23}|^6}\nn\\
    &\times\int \frac{d\ell_1}{2\ell_1}\Big(\frac{1}{4\pi\ell_1}\Big)^3\frac{d\ell_2}{2\ell_2}\Big(\frac{1}{4\pi\ell_2}\Big)^3\Big(\frac{[\bar{w}_2\,\tilde v_2]^i}{(4\ell_1)^ii!}\Big)\Big(\frac{[\bar w_3\,\tilde v_3]^j}{(4\ell_2)^jj!}\Big)e^{-\frac{|Y_1-X_2|^2}{4\ell_1}-\frac{|Y_4-X_3|^2}{4\ell_2}}\,,
\end{align}
\normalsize
where
\begin{align}\label{eq:monodromy-t-channel}
    i=-h_1+s_2+h_x-1\,,\qquad j=-h_4+s_3-h_x-1
\end{align}
is the helicity-spin constraint coming from trivial monodromy condition on $S^7$. Note that
\begin{align}
    h_x\in[h_1-s_2,s_3-h_4]
\end{align}
stands for the helicities in the exchange. Now, we do the integration over $\ell_i$ and obtain
\begin{align}
    \eqref{eq:t-channel}&=-\frac{1}{4\pi^9}\sum_{h_1,h_4\in \Spec}\sum_{i,j}\frac{\Gamma(2s_2)}{\Gamma(2s_2-i)}\frac{\Gamma(2s_3)}{\Gamma(2s_3-i)}\tg_i^{a_1a_2e}\tg_j^{ea_3a_4}\int_{\C}dz_0 \oint_{|z_{14}|=\epsilon}\!\!\!\!\!\!dz_{14}\tJ^{a_1}[h_1]\tJ^{a_4}[h_4]\nn\\
    &\times\int_{\C^3\times\C^3}\!\!\!\! (dX_2)^3(dX_3)^3\,\,z_2[w_2\,\tilde v_2]^{2s_2-i-1} \,[w_3\,\tilde v_3]^{2s_3-j-1} d\bar{z}_3\frac{\bar{\omega}_{Y_1X_2}\bar{\Omega}_{X_2X_3}\bar{\omega}_{Y_4X_3}}{\pi^3|X_{23}|^6}\nn\\
    &\times (i+1)(i+2)(j+1)(j+2)\frac{[\bar w_2\,\tilde v_2]^i[\bar{w}_3\,\tilde v_3]^j}{|Y_1-X_2|^{2(3+i)}|Y_4-X_3|^{2(3+j)}}\,.
\end{align}
It is a simple computation to show that
\begin{align}
    \bar{\omega}_{Y_1X_2}\bar{\Omega}_{X_2X_3}\bar{\omega}_{Y_4X_3}=2\bar{z}_{14}[\bar{w}_2\,\bar{w}_3](d\bar{X}_2)^3(d\bar{X}_3)^3\,,
\end{align}
where $(d\bar{X})^3=d\bar z_X[d\bar w_X\wedge d\bar w_X]$. Thus,
\begin{align}\label{eq:integral-t-1}
      \eqref{eq:t-channel}&=-\frac{1}{2\pi^9}\sum_{h_1,h_4\in \Spec}\sum_{i,j}\frac{\Gamma(2s_2)}{\Gamma(2s_2-i)}\frac{\Gamma(2s_3)}{\Gamma(2s_3-i)}(i+1)(i+2)(j+1)(j+2)\tg_i^{a_1a_2e}\tg_j^{ea_3a_4}\nn\\
      &\times\int_{\C}dz_0\oint_{|z_{14}|=\epsilon}dz_{14}\,\bar{z}_{14}\tJ^{a_1}[h_1]\tJ^{a_4}[h_4]\times I_{23}^{a_2a_3}\,,
\end{align}
where 
\begin{align}
    I_{23}^{a_2a_3}:=\int_{\C^3\times\C^3}DX_2DX_3 
      \frac{[\bar w_2\,\tilde v_2]^i[\bar w_2\,\bar{w}_3][\bar{w}_3\,\tilde v_3]^j\, z_2[w_2\,\tilde v_2]^{2s_2-i-1} \,[w_3\,\tilde v_3]^{2s_3-j-1} d\bar{z}_3}{|Y_1-X_2|^{2(3+i)}|X_2-X_3|^6|Y_4-X_3|^{2(3+j)}}\,,
\end{align}
and $DX=(dX)^3(d\bar{X})^3$. 

Notice that the integral over the bulk points resembles a doubly nested bubble integral. Furthermore, since the measures $DX_2,DX_3$, and the propagators are real, we must require
\begin{align}\label{eq:nominator-1}
    [\bar w_2\,\tilde v_2]^i[\bar w_2\,\bar{w}_3][\bar{w}_3\,\tilde v_3]^jz_2[w_2\,\tilde v_2]^{2s_2-i-1} \,[w_3\,\tilde v_3]^{2s_3-j-1} d\bar{z}_3
\end{align}
to be also real, otherwise $I_{23}^{a_2a_3}$ will vanish. As we are doing integration over complex variables, we can again assume that they are charged under $U(1)$. Then, the integral in the bulk, cf. \eqref{eq:integral-t-1}, survives iff the nominator has a trivial monodromy. We deduce that 
\begin{align}\label{eq:monodromy-1-loop-t}
    i=s_2-1\,,\qquad j=s_3-1\,.
\end{align}
Since $i,j\geq 0$. The above explains why we chose $s_{2,3}\geq 1$ in the first place. Namely, it is the only way for the variation of the t-channel, cf. \eqref{eq:t-channel}, to receive non-trivial quantum correction. Using the above and \eqref{eq:monodromy-t-channel}, we can fix
\begin{align}
    h_1=-h_4\,.
\end{align}
This is a simple yet robust constraint, which allows us to insert a Kronecker delta $\delta_{h_1+h_4,0}$ and take the sum over helicities.\footnote{This constraint may be also guessed from the beginning since we are computing loop integral. However, it is still useful to show where it it comes from.} Moreover, for simplicity, we can set 
\begin{align}
    \tilde v_2^{\dot\alpha}=(+1,0)\,,\qquad \tilde v_3^{\dot\alpha}=(0,-1)\,,
\end{align}
so that
\begin{align}\label{eq:doubly-nested-int}
    I_{23}:=\int_{\C^3\times\C^3}DX_2DX_3 
      \frac{z_2|w_2^{\dot 2}|^{2s_2}|w_3^{\dot 1}|^{2s_3}}{|Y_1-X_2|^{2(s_2+2)}|X_2-X_3|^6|Y_4-X_3|^{2(s_3+2)}}d\bar{z}_0\,,
\end{align}
upon restricting $d\bar{z}_y=d\bar{z}_0$. 
We can now reduce \eqref{eq:integral-t-1} to
\begin{align}\label{eq:integral-t-3}
      \eqref{eq:integral-t-1}&=-\frac{2}{\pi^9}\sum_{h\in \Spec}\frac{(s_2+1)\Gamma(2s_2)}{\Gamma(s_2)}\frac{(s_3+1)\Gamma(2s_2)}{\Gamma(s_3)}\tg_{s_2-1}^{a_1a_2e}\tg_{s_3-1}^{ea_3a_4}\nn\\
      &\qquad \times\int_{\C}d^2z_0\oint_{|z_{14}|=\epsilon}dz_{14}\bar{z}_{14}\tJ^{a_1}[h]\tJ^{a_4}[-h]\times I_{23}^{a_2a_3}\,.
\end{align}
After a short computation in the Appendix \ref{app:integral}, we end up with 
\begin{align}
    I_{23}=\frac{\Ccurl_{s_2,s_3}}{2|z_{14}|^2}\Big[(s_2+2)z_0-s_2\frac{z_{14}}{2}\Big]\,,
\end{align}
where 
\begin{align}
    \Ccurl_{s_2,s_3}=\frac{(2\pi)^6}{4s_3(s_2+1)(s_2+2)}\,.
\end{align}
This simplifies the variation of the $t$-channel to
\begin{align}
     \eqref{eq:t-channel-variation}&=\sum_{h\in \Spec}\tC_{s_2,s_3}^t\times\tg_{s_2-1}^{a_1a_2e}\tg_{s_3-1}^{ea_3a_4}\nn\\
     &\qquad \qquad \times\int_{\C}d^2z_0\oint_{|z_{14}|=\epsilon}\frac{dz_{14}}{z_{14}}\Big[(s_2+2)z_0-s_2\frac{z_{14}}{2}\Big]\tJ^{a_1}[h]\tJ^{a_4}[-h]\,,
\end{align}
where
\begin{align}
    \tC^t_{s_2,s_3}=-\frac{32}{\pi^3}\frac{\Gamma(2s_2)\Gamma(2s_3)}{\Gamma(s_2)\Gamma(s_3)}\frac{(s_3+1)}{s_3(s_2+2)}\,.
\end{align}
To proceed, we will split 
\begin{align}
    \eqref{eq:t-channel}=\cI_t^{(\tJ)}+\cI_t^{(\tJ\tJ)}\,,
\end{align}
where
\begin{subequations}
    \begin{align}
    \cI^{(\tJ)}_t&=-\frac{1}{2}\sum_{h\in\Spec}\tC^t_{s_2,s_3}\tg_{s_2-1}^{a_1a_2e}\tg_{s_3-1}^{ea_3a_4}\int_{\C}d^2z_0\oint_{|z_{14}|=\epsilon} dz_{14}\tJ^{a_1}[h]\tJ^{a_4}[-h]\,,\\
    \cI^{(\tJ\tJ)}_t&=+\sum_{h\in \Spec}\tC^t_{s_2,s_3}(s_2+2)\tg_{s_2-1}^{a_1a_2e}\tg_{s_3-1}^{ea_3a_4}\int_{\C}d^2z_0 z_0\tJ^{a_1}[h]\tJ^{a_4}[-h]\,.
\end{align}
\end{subequations}
Here, $(\tJ)$ stands for single-$\tJ$ contribution, and $(\tJ\tJ)$ stands for double-$\tJ$ contribution. Then, using the classical OPE's \eqref{eq:full-tree-OPE}, we can reduce
\begin{align}
    \cI^{(\tJ)}_t&=-\frac{1}{2}\sum_{h\in \Spec}\tC^t_{s_2,s_3}\tg_{s_2-1}^{a_1a_2e}\tg_{s_3-1}^{ea_3a_4}\int_{\C}d^2z_0\sum_p\tg_p^{a_1a_4f}\frac{[\tilde v_2\,\tilde v_3]^p}{p!}\tJ^f[-1-p]\,.
\end{align}
\normalsize
Intriguingly, setting $p=0$ we reproduce the result of \cite{Costello:2022upu}, while setting $p=1$ we obtain the result of \cite{Bittleston:2023bzp}. What we observe here is that the single-$\tJ$ operator at one loop is a derivative-dependent term, while the double-$\tJ$ operator $\sum_h\tJ[-h]\tJ[h]$ solely depends on the spectrum of the bulk theory.

\paragraph{The $u$-channel.} In computing the $u$-channel, we can simply perform the permutation $(a_2\leftrightarrow a_3)$ and $(s_2\leftrightarrow s_3)$. This results in 
\begin{align}
    (\text{u-channel})=\cI_u^{(\tJ)}+\cI_u^{(\tJ\tJ)}\,,
\end{align}
where
\begin{subequations}
    \begin{align}
    \cI^{(\tJ)}_u&=-\frac{1}{2}\sum_{h\in\Spec}\tC^u_{s_2,s_3}\tg_{s_3-1}^{a_1a_3e}\tg_{s_2-1}^{ea_2a_4}\int_{\C}d^2z_0\oint_{|z_{14}|=\epsilon} dz_{14}\tJ^{a_1}[h]\tJ^{a_4}[-h]\,,\\
    \cI^{(\tJ\tJ)}_u&=+\sum_{h\in\Spec}\tC^u_{s_2,s_3}(s_3+2)\tg_{s_3-1}^{a_1a_3e}\tg_{s_2-1}^{ea_2a_4}\int_{\C}d^2z_0 z_0\tJ^{a_1}[h]\tJ^{a_4}[-h]\,.
\end{align}
\end{subequations}
Note that in this case,
\begin{align}
    \tC^u_{s_2,s_3}=- \frac{32}{\pi^3}\frac{\Gamma(2s_2)\Gamma(2s_3)}{\Gamma(s_2)\Gamma(s_3)}\frac{(s_2+1)}{s_2(s_3+2)}\,.
\end{align}
As usual, for color-ordered partial amplitudes, we will not need to include the $u$-channel. Here, we only compute the $u$-channel for completeness.
\paragraph{Correction to the OPE at one loop.} We can now finalize the correction to the OPE between $\tJ[1]$ and $\tJ[4]$ at one loop. Adding the $t$-channel and $u$-channel together, we get
\begin{align}
    \cI^{(\tJ)}&=\cI^{(\tJ)}_t+\cI^{(\tJ)}_u\nn\\
    &=-\frac{1}{2}\sum_{h\in\Spec}\cU_{s_2,s_3}^{a_1a_2a_3a_4}\int_{\C}d^2z_0\sum_p\tg_p^{a_1a_4f}\frac{[v_2\,v_3]^p}{p!}\tJ^f[-1-p]\,.
\end{align}
where
\begin{align}
    \cU_{s_2,s_3}^{a_1a_2a_3a_4}=\tC^t_{s_2,s_3}\tg_{s_2-1}^{a_1a_2e}\tg_{s_3-1}^{ea_3a_4}+\tC^u_{s_2,s_3}\tg_{s_3-1}^{a_1a_3e}\tg_{s_2-1}^{ea_2a_4}\,.
\end{align}
We also have
\begin{align}
    \cI^{(\tJ\tJ)}&=\cI^{(\tJ\tJ)}_t+\cI^{(\tJ\tJ)}_u\nn\\
    &=\sum_{h\in\Spec}\cT_{s_2,s_3}^{a_1a_2a_3a_4}\int_{\C}d^2z_0 z_0\tJ^{a_1}[h]\tJ^{a_4}[-h]
\end{align}
where
\begin{align}
    \cT_{s_2,s_3}^{a_1a_2a_3a_4}=-\frac{1}{2}\times\frac{32}{\pi^3}\frac{\Gamma(2s_2)\Gamma(2s_3)}{\Gamma(s_2)\Gamma(s_3)}\Big[\frac{(s_3+1)}{s_3}\tg_{s_2-1}^{a_1a_2e}\tg_{s_3-1}^{ea_3a_4}+\frac{(s_2+1)}{s_2}\tg_{s_3-1}^{a_1a_3e}\tg_{s_2-1}^{ea_2a_4}\Big]\,.
\end{align}
Note that a factor of $\frac{1}{2}$ is inserted to account for the fact that we have counted everything twice in the $(\tJ\tJ)$-sector.
\paragraph{Matching.} Together, $\cI^{(\tJ)}+\cI^{(\tJ\tJ)}$ should cancel the gauge (or BRST) variation of the bilocal term 
\begin{align}\label{eq:bilocal-cancellation-1}
    -\int_{\C}dz_0\oint_{|z_{14}|=\epsilon}dz_{14} \tJ[s_2,\tH_2]\tJ[s_3,\tH_3]\cc_{2s_2-2} \tA_{2s_3-2}\,,
\end{align}
on the defect. 

Next, to eliminate all $w$'s factors in the test functions \eqref{eq:test-functions-1-loop}, we fix $\tH_i=2s_i-1$ so that
\begin{subequations}
    \begin{align}
    \tJ\Big[s_2;2s_2-1\Big]&=\frac{1}{(2s_2-1)!}\tJ[s_2]_{\dot\alpha(2s_2-1)}\p^{\dot\alpha(2s_2-1)}\,,\\
    \tJ\Big[s_3;2s_3-1\Big]&=\frac{1}{(2s_3-1)!}\tJ[s_3]_{\dot\alpha(2s_3-1)}\p^{\dot\alpha(2s_3-1)}\,,
\end{align}
\end{subequations}
Using the fact that $z_1=z_0+\frac{z_{14}}{2}$, we write
\begin{align}
    \eqref{eq:bilocal-cancellation-1}=-\int d^2z_0\oint_{|z_{14}|=\epsilon}dz_{14}\Big(z_0+\frac{z_{14}}{2}\Big)\tJ[s_2]\tJ[s_3]\,.
\end{align}
This leads to the following identification at one loop: 
\begin{align}\label{eq:1-loop-OPE}
    \tJ^{a_2}\Big[s_2&;2s_2-1\Big](z)\tJ^{a_3}\Big[s_3;2s_3-1\Big](0)\nn\\
    \sim&-\Big(\frac{1}{z^2}-\frac{1}{2z}\p_z\Big) \sum_{h\in\Spec}\cU^{a_1a_2a_3a_4}_{s_2,s_3}\sum_p\tg^{a_1a_4f}_p\frac{[v_2\,v_3]^p}{p!}\tJ^f[-1-p](z)\nn\\
    &+\frac{1}{z}\sum_{h\in\Spec}\cT^{a_1a_2a_3a_4}_{s_2,s_3}\nor \tJ^{a_1}[h]\tJ^{a_4}[-h](z)\nor\,.
\end{align}
\normalsize
\normalsize
Here, we have introduced a normal-ordered product denoted by $\nor$ $\nor$ for the double-$\tJ$ operator $\tJ\tJ$ as to regularize the singular behavior. This normal-ordered product is given by
\begin{align}\label{eq:normal-order-def}
    \nor A \,B(z)\nor\ \  =\oint_{|w-z|=1}\frac{dw}{w-z}A(z)B(w)\,.
\end{align}
Note that in the simplest case where $s_2=s_3=1$, $h=1$, and $p=0$ we can use the Jacobi's relations \eqref{eq:Jacobi-f} to reproduces the result of \cite{Costello:2022upu} up to some overall factors. 

\paragraph{Quantum associativity.} The OPE between two higher-spin currents $\tJ$ up to first order in quantum correction reads
\begin{align}\label{eq:1-loop-OPE-summary}
    \tJ^{a_2}\Big[s_2&;2s_2-1\Big](z)\tJ^{a_3}\Big[s_3;2s_3-1\Big](0)\nn\\
    \sim&+\frac{1}{z}\sum_p\tg_p^{a_2a_3c}\frac{[v_2\,v_3]^{p}}{p!}\tJ^c[s_2+s_3-1-p;2s_2+2s_3-2]\nn\\
    &-\tau_{\tJ}\Big(\frac{1}{z^2}-\frac{1}{2z}\p_z\Big) \sum_{h\in\Spec}\cU^{a_1a_2a_3a_4}_{s_2,s_3}\sum_p\tg^{a_1a_4f}_p\frac{[v_2\,v_3]^p}{p!}\tJ^f[-1-p](z)\nn\\
    &+\tau_{\tJ}\frac{1}{z}\sum_{h\in\Spec}\cT^{a_1a_2a_3a_4}_{s_2,s_3}\nor \tJ^{a_1}[h]\tJ^{a_4}[-h](z)\nor\,.
\end{align}
\normalsize
where $\tau_{\tJ}$ is a yet-to-determined coefficient. It can be fixed uniquely by the associativity of the chiral algebras $\ca$. 

Observe that when $\Spec=\Z,2\Z+1,2\Z$, cf. \eqref{eq:quantum-protected-spectrum}, the single-$\tJ$ operator can be regularized to zero, leaving us with the non-vanishing double-$\tJ$ operator. Thus, the chiral algebra OPEs are, in general, not quantum protected in the presence of the defect. This is expected since the defect may break some part of the bulk higher-spin symmetry, cf. \eqref{eq:star-product}. 

\begin{theorem}\label{thm:quantum-associative} The chiral algebras associated with anomaly-free holomorphic twistorial higher-spin theories, i.e. theories with 
\begin{align}\label{eq:spec-diamond}
    \Spec^{\diamond}=\Z,2\Z+1,2\Z\,,
\end{align}
are associative to first order in quantum correction without additional input.
\end{theorem}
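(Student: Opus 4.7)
My approach would be to test the associativity of $\ca$ at first order in $\tau_{\tJ}$ by plugging the one-loop OPE \eqref{eq:1-loop-OPE-summary} into the three-contour identity \eqref{eq:to-be-checked} and checking that the three cyclic orderings cancel. At tree level the identity is already satisfied by the Jacobi and Schouten computations of items I--III of the previous subsubsection, so I only need to verify that the cyclic sum of the two one-loop deformations -- the single-current piece $\cI^{(\tJ)}$ and the normal-ordered double-current piece $\cI^{(\tJ\tJ)}$ -- vanishes whenever $\Spec$ lies in the anomaly-free list \eqref{eq:spec-diamond}.

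First I would dispose of $\cI^{(\tJ)}$. Once the internal bilinear $\tJ^{a_1}[h]\tJ^{a_4}[-h]$ is reduced using the classical OPE \eqref{eq:full-tree-OPE}, its dependence on the looped helicity collapses to an overall factor $\sum_{h\in\Spec}1$. Applying the Riemann zeta regularization described around \eqref{eq:quantum-protected-spectrum}, this sum vanishes for precisely the three spectra singled out in \eqref{eq:spec-diamond}: one finds $1+2\sum_{h\geq 1}1=0$ on $\Z$, $2\sum_{|h|\in 2\N_0+1}1=0$ on $2\Z+1$, and $1+2\sum_{|h|\in 2\N}1=0$ on $2\Z$. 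Thus the derivative-bearing single-current deformation drops out of the associativity test for exactly the stated spectra -- the same mechanism that killed the wheel anomaly \eqref{eq:anomaly-term} is now killing the quantum obstruction to $\ca$.

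Next I would check that the surviving double-current piece $\sum_{h}\nor\tJ^{a_1}[h]\tJ^{a_4}[-h]\nor$ does not spoil \eqref{eq:to-be-checked}. Contracting a third current through this bilinear via the contour manipulations of \eqref{eq:to-be-checked} produces two types of subleading contribution: either the contraction closes into the classical OPE on one of the two factors, so that the three-ordering sum reduces to the already-cancelling classical Jacobi/Schouten argument tensored with the same vanishing $\sum_{h}1$; or it produces short-distance contact terms that can be rewritten as another helicity sum of bilinears carrying the color-kinematic tensor $\cT^{a_1a_2a_3a_4}_{s_2,s_3}$, whose three cyclic orderings are again killed by the structure-constant identities on $f^{abc},d^{abc}$ paired with the Schouten identity on the $[\tilde v_i\,\tilde v_j]^{p}$ factors. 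The free coefficient $\tau_{\tJ}$ is fixed uniquely in this step by requiring the residual purely kinematic obstruction to cancel.

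The hard part will be controlling these contact terms: $\nor\tJ[h]\tJ[-h]\nor$ is not strictly central, and its deformation across a contour can regenerate single-current anomalies weighted by $\sum_{h\in\Spec}h$ rather than by $\sum_{h\in\Spec}1$. The saving grace is that each of the spectra $\Z$, $2\Z$, $2\Z+1$ is symmetric under $h\mapsto -h$, so this odd moment regularizes to zero by parity and the argument closes. Assembling these ingredients yields the claimed associativity of $\ca$ to first order in quantum correction for $\Spec^{\diamond}$.
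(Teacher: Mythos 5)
Your overall strategy is the paper's: insert the one-loop OPE \eqref{eq:1-loop-OPE-summary} into the contour identity \eqref{eq:to-be-checked} (the paper takes $n=1$, so the classical simple-pole term drops out of the first line automatically), and observe that every surviving first-order term --- the derivative-bearing single-current piece weighted by $\mathcal{U}^{a_1a_2a_3a_4}_{s_2,s_3}$ \emph{and} the normal-ordered double-current piece weighted by $\mathcal{T}^{a_1a_2a_3a_4}_{s_2,s_3}$ --- comes with an overall factor $\sum_{h\in\Spec}1$, which zeta-regularizes to zero exactly for $\Spec^{\diamond}=\Z,2\Z+1,2\Z$. Your treatment of $\cI^{(\tJ)}$ is precisely this, and your remark that the same mechanism kills the wheel anomaly is apt.

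There is, however, a genuine error in your handling of the double-current sector: the claim that ``the free coefficient $\tau_{\tJ}$ is fixed uniquely in this step.'' This contradicts the paper and is inconsistent with your own first step. When the third current is contracted through $\nor\tJ^{a_1}[h]\tJ^{a_4}[-h]\nor$ using the classical OPE and the definition \eqref{eq:normal-order-def}, the result is again a single-current term whose $h$-dependence is only through the overall $\sum_{h\in\Spec}1$; the paper's explicit LHS and RHS each consist entirely of terms proportional to $\tau_{\tJ}\sum_{h\in\Spec}1$, which vanishes for $\Spec^{\diamond}$. Hence there is no residual kinematic obstruction left to constrain $\tau_{\tJ}$ --- the paper states explicitly that in this case ``$\tau_{\tJ}$ can be set to any real number,'' and the value $\tau_{\tJ}=i/2^{10}$ is determined only in the anomalous case $\Spec\neq\Spec^{\diamond}$ of Theorem \ref{thm:quantum-HS}, where the axionic currents $\tU,\tV$ must be adjoined. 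Your auxiliary worry about odd moments $\sum_{h\in\Spec}h$ does not in fact arise (the summands produced by the contour manipulations are $h$-independent), although your parity observation would dispose of it if it did.
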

\begin{proof} We prove by a direct computation following along the lines in Appendix \ref{app:check-associativity}. Plugging \eqref{eq:1-loop-OPE-summary} in  \eqref{eq:to-be-checked} and take $n=1$, we find that the lhs. is
\begin{align}
    &-\oint dw\,w\tJ^{a_1}[h_1](w)\sum_{p}\tg_p^{a_2a_3c}\frac{[2\,3]^p}{p!}\tJ^c[s_2+s_3-1-p;2s_2+2s_3-2](z)\nn\\
    &-\frac{\tau_{\tJ}}{2}\oint dw\,w\tJ^{a_1}[h_1](w)\p_{z} \sum_{h\in\Spec}\cU^{a_ma_2a_3a_n}_{s_2,s_3}\sum_p\tg_p^{a_ma_nf}\frac{[2\,3]^{p}}{p!}\tJ^f[-1-p](z)\nn\\
    &-\tau_{\tJ}\oint dw\,w\tJ^{a_1}[h_1](w)\sum_{h\in\Spec}\cT^{a_ma_2a_3a_n}_{s_2,s_3}\nor \tJ^{a_m}[h]\tJ^{a_n}[-h](z)\nor\nn\\
    =&-\frac{\tau_{\tJ}}{2}\sum_{h\in\Spec}\cU_{s_2,s_3}^{a_ma_2a_3a_n}\sum_{p,q}\tg^{a_ma_nc}_p\tg^{a_1c\bullet}_q\frac{[2\,3]^p[1\,4]^q}{p!q!}\tJ^{\bullet}[h_1-2-p-q]\nn\\
    &+\tau_{\tJ}\sum_{h\in\Spec}\cT_{s_2,s_3}^{a_ma_2a_3a_n}\sum_{p,q}\tg_p^{a_na_1c}\tg_q^{ca_m\bullet}\frac{[2\,3]^p[1\,4]^q}{q!p!}\tJ^{\bullet}[h_1-2-p-q]\,,
    \end{align}
\normalsize
where we have subsequently used the classical OPEs \eqref{eq:tree-OPE-1} and the definition of the double-$\tJ$ operator, cf. \eqref{eq:normal-order-def}. 

On the other hand, the rhs. of \eqref{eq:to-be-checked} (with $n=1$) reads
\begin{align}
&\tau_{\tJ}\sum_{h\in\Spec}\sum_{p,q}\cU^{a_ma_2ca_n}_{s_2,s_3+h_1-1-p}\tg^{a_3a_1c}_p\tg_q^{a_ma_n\bullet}\frac{[3\,1]^p[2\,4]^{q}}{p!q!}\tJ^\bullet[h_1-2-p-q]\,.
\end{align}
Observe that there are overall sum over helicities on both side of the associativity condition \eqref{eq:to-be-checked} when $n=1$. Thus, for $\Spec=\Spec^{\diamond}$, these sum can be regularized to zero. As a result, the chiral algebras $\ca$ associated with theories with $\Spec=\Spec^{\diamond}$ are associative. Note that in this case $\tau_{\tJ}$ can be set to any real number. 
\end{proof}
It is intriguing to point out that only theories with $\Spec=\Spec^{\diamond}$ can admit higher-derivative interactions. All other cases will be forced by symmetry to have only Yang-Mills type interactions as shown in the next section.

\subsubsection{Enlarging chiral CFT with axionic currents}\label{sec:axion-currents}


Let us now study the cases where $\ca$ are associated with theories whose $\Spec\neq \Z,2\Z+1,2\Z$. In these cases, \eqref{eq:to-be-checked} generally do not hold; leading to the failure of associativity of $\ca$. Then, to restore associativity at one loop, we can introduce axionic currents, which are Koszul dual to the bulk axion field $\vartheta$. 
Recall that we have introduced axionic field $\vartheta$ subjected to the constraint $\p\vartheta=0$ to render holomorphic twistorial theories anomaly-free (on-shell) via Green-Schwarz anomaly cancellation mechanism in Subsection \ref{sec:anomaly-cancellation}.

\paragraph{Axionic currents.} As usual, there will be certain axionic interactions with the defect. (See e.g. \cite{Costello:2022upu,Bittleston:2022jeq} for previous work.) Note that even though we have only one type of axion bulk field $\vartheta$, or rather its source $\varrho$ where 
\begin{align}
    \vartheta=\p\varrho\,,\qquad \varrho\in\Omega^{1,1}(\PT,\cO(0))\,,
\end{align}
it, nevertheless, induces two new axionic higher-spin currents, say $\tU$ and $\tV$, on the defect. 
This will be explained shortly below.
For now, we write \eqref{eq:axion-BVaction} as
\begin{align}\label{eq:axion-BVaction-improve}
    S[\varrho,\sA]=\int_{\PT}\varrho\,\bar{\p}\p\,\varrho-c_{\mg}\int_{S^7}\varrho\,\tr(\p\sA\star \p\sA)\,, \qquad \varrho\in \Omega^{1,1}(\PT,\cO(0))\,,
\end{align}
where we have done an integration by part to reach \eqref{eq:axion-BVaction-improve}. It is useful to remind ourselves that `$\tr$' stands for the trace of fundamental representations. The above action is invariant under
\begin{align}\label{eq:rho-variation}
    \delta\varrho^{1,1}=\p\xi^{0,1}+\bar{\p}\xi^{1,0}\,,
\end{align}
when $\sA\in \Omega^{0,1}(\PT)$ is on-shell, i.e. $\bar{\p}\sA\approx 0$.\footnote{The form degrees should be obvious from our notation.} 

To this end, let us consider the Koszul coupling
\begin{align}
    \tK_{\varrho}=\int_{\P^1}J_{\varrho}\,\varrho^{1,1}\,,
\end{align}
where $J_{\varrho}$ is some axionic current to be determined. 

Under the gauge transformation \eqref{eq:rho-variation}, we obtain the following non-trivial constraint 
\begin{align}
    \p\,J_{\varrho}=0\,,\qquad \p=dX^a\frac{\p}{X^{a}}\,,\quad a=1,2,3\,,
\end{align}
for the holomorphic current $J_{\varrho}$.\footnote{We remind the reader that we are working in the patch $\C^3\subset\PT$ with coordinates $X^a$, cf. \eqref{eq:C3-patch}.} Under the split $X^a=(z,w^{\dot\alpha})$, we can write
\begin{subequations}\label{eq:axion-derivative-structures}
    \begin{align}
        \p\,J_\varrho&=(\eth_z+\eth_w)J_{\rho}=0\,,\\
        \p\sA\wedge \p\sA&=\eth_z\sA\wedge \eth_w\sA+\eth_w\sA\wedge \eth_z\sA+\eth_w\sA\wedge \eth_w\sA\,,\\
        \p\sA\wedge\p\varrho&=\eth_z\sA\wedge \eth_w\varrho+\eth_w\sA\wedge \eth_z\varrho+\eth_w\sA\wedge \eth_w\varrho\,,
    \end{align}
\end{subequations}
where $\eth_z:=dz\p_z\,,\ \eth_w:=dw^{\dot\alpha}\p_{\dot\alpha}\,.$ 
Observe that due to the difference in $SU(2)_-$ charges arising from having different derivative structures, it is convenient to introduce two currents $(\tU, \tV)$ to couple to derivatives of $\tA$, as mentioned.

\begin{figure}[ht!]
    \centering
    \includegraphics[scale=0.33]{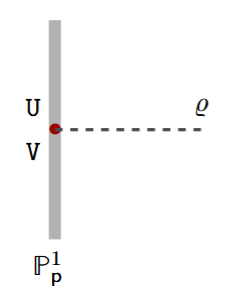}
    \caption{The Koszul couplings between a bulk axion $\varrho$ and two currents $\tU,\tV$.}
\end{figure}

\paragraph{The OPEs.} In the presence of the axion, the tree-level relation \eqref{eq:tree-amplitudes-OPE} gets modified to
\begin{align}\label{eq:tree-amplitudes-OPE-axion}
   \delta\Bigg( \parbox{45pt}{\includegraphics[scale=0.13]{2-point-precollide.png}}+\quad \parbox{40pt}{\includegraphics[scale=0.13]{cubic.png}}+\quad\parbox{40pt}{\includegraphics[scale=0.19]{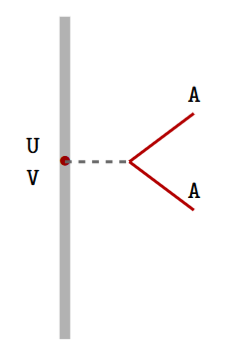}}\Bigg)=0\,,\qquad \delta \tA=\bar{\p}\cc\,.
\end{align}
Using \eqref{eq:axion-derivative-structures} and the cubic vertex in \eqref{eq:axion-BVaction-improve}, we obtain
\small
\begin{align}
    \parbox{42pt}{\includegraphics[scale=0.19]{UV-cubic.png}}=&-c_{\mg}\int \Big(\tU\big(\eth_z\tA_{2h_1-2}\star \eth_w\tA_{2h_2-2}+\eth_w\tA_{2h_1-2}\star\eth_z\tA_{2h_2-2})+\tV(\eth_w\tA_{2h_1-2}\star \eth_w\tA_{2h_2-2})\Big)\,,
\end{align}
\normalsize
where $c_{\mg}$ is given in \eqref{eq:axion-coupling-constant}. Then, feeding $\cc=z e^{-[w\,\tilde v]}$ and $\tA=e^{-[w\,\tilde v]}d\bar{z}$ 
into \eqref{eq:tree-amplitudes-OPE-axion}, we obtain
\begin{align}\label{eq:JJ-with-axion}
        \tJ^{a}\Big[&s_1;2s_1-1\Big](z)\tJ^{b}\Big[s_2;2s_2-1\Big](0)\sim\nn\\
        &-c_{\mg}\Big(\frac{1}{z^2}+\frac{1}{z}\p_z\Big)\sum_p\kappa^{ab}
        \frac{[\tilde v_1\,\tilde v_2]^p}{p!}\tU[s_1+s_2-2-p;2s_1+2s_2-2]\nn\\
        &-\frac{c_{\mg}}{z}\sum_p \kappa^{ab}
        \frac{[\tilde v_1\,\tilde v_2]^{p}}{p!}\tV[s_1+s_2-2-p;2s_1+2s_2-4]\,.
\end{align}
Here, $\kappa^{ab}$ is the Killing form, see \eqref{eq:Killing-form}. Note that the above should be understood as an OPE at first order in quantum correction. Indeed, the role of axion field in the Green-Schwarz anomaly cancellation is to produce tree-level exchange diagram to cancel one-loop diagram with gauge field on the external leg. See e.g. \cite{Tran:2025uad}. 

In summary, the complete OPE between two higher-spin currents $\tJ$ (with axionic currents included)
up to first order in quantum correction is
\begin{align}\label{eq:master-1-loop-OPE-main}
    \tJ^{a_2}\Big[s_2&;2s_2-1\Big](z)\tJ^{a_3}\Big[s_3;2s_3-1\Big](0)\sim\nn\\
    &+\frac{1}{z}\sum_p\tg_p^{a_2a_3c}\frac{[\tilde v_2\,\tilde v_3]^{p}}{p!}\tJ^c[s_2+s_3-1-p;2s_2+2s_3-2]\nn\\
    &-\tau_\tJ\Big(\frac{1}{z^2}-\frac{1}{2z}\p_z\Big) \sum_{h\in\Spec}\cU^{a_1a_2a_3a_4}_{s_2,s_3}\sum_p\tg^{a_1a_4f}_p\frac{[\tilde v_2\,\tilde v_3]^p}{p!}\tJ^f[-1-p](z)\nn\\
    &+\tau_\tJ\frac{1}{z}\sum_{h\in\Spec}\cT^{a_1a_2a_3a_4}_{s_2,s_3}\nor \tJ^{a_1}[h]\tJ^{a_4}[-h](z)\nor\nn\\
    &-c_{\mg}\Big(\frac{1}{z^2}+\frac{1}{z}\p_z\Big)\sum_p\kappa^{a_2a_3} 
    \frac{[\tilde v_2\,\tilde v_3]^{p}}{p!}\tU[s_2+s_3-2-p;2s_2+2s_3-2]\nn\\
        &-\frac{c_{\mg}}{z}\sum_p\kappa^{a_2a_3}
        \frac{[\tilde v_2\,\tilde v_3]^{p}}{p!}\tV[s_2+s_3-2-p;2s_2+2s_3-4]\,.
\end{align}
\normalsize
Our next stop is the computation of the OPE between the physical higher-spin currents $\tJ$ and the axionic higher-spin currents $\tU,\tV$. Consider the process,
\begin{align}\label{eq:tree-OPE-J-axion}
   \delta\Bigg( \parbox{45pt}{\includegraphics[scale=0.22]{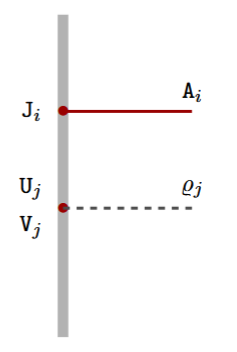}}+\quad\parbox{42pt}{\includegraphics[scale=0.22]{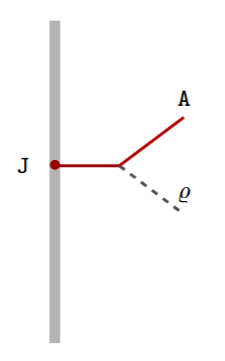}}\Bigg)=0\,,\qquad \delta \tA=\bar{\p}\cc\,.
\end{align}
Similar to the previous case, we can do an integration by part and use \eqref{eq:axion-derivative-structures} to arrive at 
\begin{subequations}\label{eq:J-axion}
    \begin{align}
        \tJ^{a}\big[h_i\big](z)\tV[0](0)&\sim -c_{\mg}\Big(\frac{1}{z^2}+\frac{1}{z}\p_z\Big)\sum_p
        \frac{[\tilde v_i\,\tilde v_j]^p}{p!}\tJ^a[h_i-2-p]\,,\\
        \tJ^{a}\big[h_i;\tH_i\big](z)\tU[0](0)&\sim -\frac{c_{\mg}}{z}\sum_{p}
        \frac{[\tilde v_i\,\tilde v_j]^{p}}{p!}\tJ^a[h_i-2-p;\tH_i-2]\,.
    \end{align}
\end{subequations}
\begin{theorem}\label{thm:quantum-HS} The chiral higher-spin algebras associated with anomalous holomorphic twistorial higher-spin theory with $\Spec\neq \Spec^{\diamond}$ are associative iff the interactions are of Yang-Mills type. Furthermore, all positive external helicities of the current appearing in first-order OPE should be $+1$ and $\tau_{\tJ}=\frac{i}{2^{10}}$\,.
\end{theorem}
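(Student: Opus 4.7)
The plan is to carry out the same associativity test \eqref{eq:to-be-checked} that was used in Theorem \ref{thm:quantum-associative}, but now with the complete one-loop OPE \eqref{eq:master-1-loop-OPE-main} augmented by the mixed OPEs \eqref{eq:J-axion} between physical currents and the axionic currents $\tU,\tV$. First I would take $n=1$ and expand both sides by iteratively substituting the classical OPE \eqref{eq:full-tree-OPE}, the quantum $\cU$- and $\cT$-corrections, and the two $\tU,\tV$-channels. Because $\Spec \neq \Spec^{\diamond}$, the helicity sums $\sum_{h\in \Spec}$ no longer zeta-regularize to zero as they did in \eqref{eq:spec-diamond}, so the associativity defect from the $\cU,\cT$-sector is genuinely nonzero and must be absorbed by pulling the $\tJ^{a_1}[h_1]$ contour through the $\tU$ and $\tV$ insertions that now appear on the right-hand side of \eqref{eq:to-be-checked}.

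Next I would match residues mode by mode on $\tJ^\bullet[h_1-2-p-q]$. The axionic contributions are controlled by $c_\mg$ from \eqref{eq:axion-coupling-constant} and carry the Casimir color structure $\kappa^{a_2 a_3}$, whereas the $\cU,\cT$ contributions carry the double $\tg_p \tg_q$ structure. The two can only cancel after rewriting the latter by means of the Okubo relation \eqref{eq:Okubo}, and this reduction works precisely when the structure constants that appear are of mixed $f^{abc}$/$d^{abc}$ type with at most quartic index contraction -- equivalently, only when $s_2 = s_3 = 1$. Simultaneously, demanding that the residual single-$\tJ$ Laurent mode on the left-hand side be reproducible by the $\tU,\tV$ insertions via \eqref{eq:J-axion} forces all positive external helicities entering the first-order OPE to be $+1$, so that the allowed bulk interactions collapse to the Yang-Mills type. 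With these restrictions in place, equating the scalar coefficients of $\tau_\tJ$ times the one-loop Gamma-function prefactors $\tC^{t,u}_{1,1}$ against $c_\mg^2 = -i C_\mg / [3!(2\pi)^3]$ times the $\Ccurl_{1,1}$-factor from the axionic channel fixes $\tau_\tJ = i/2^{10}$.

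The main obstacle is the bookkeeping of the double sum over Lie-algebra contractions against the Okubo-induced Casimir factor and the careful tracking of $\zeta$-regularized helicity sums across each anomalous spectrum. One must verify that \emph{no} surviving Laurent mode of $\tJ^\bullet[h_1-2-p-q]$ escapes cancellation, and in particular that admitting $s_2>1$ or $s_3>1$ genuinely breaks associativity rather than merely requiring further auxiliary currents beyond $\tU,\tV$. This step cannot be short-circuited and is most naturally executed in the same spirit as the direct mode-by-mode computation outlined in Appendix \ref{app:check-associativity}, with the axion-mediated terms treated as an additional channel at the same order.
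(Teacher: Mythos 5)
Your overall strategy---running the $n=1$ associativity test \eqref{eq:to-be-checked} on the full OPE \eqref{eq:master-1-loop-OPE-main} together with the mixed OPEs \eqref{eq:J-axion}, matching residues sector by sector, and fixing $\tau_{\tJ}$ at the end by equating the $ff$-structure against the $\kappa\kappa$-structure---is the same as the paper's (Appendix \ref{app:check-associativity}). The gap is in where you locate the origin of the two structural constraints. In the paper the Yang-Mills restriction does \emph{not} come from the Okubo relation \eqref{eq:Okubo}: it comes from the sector of the associativity identity whose output is a single axionic current $\tU$. Those terms are linear in $c_{\mg}$, carry the color structures $\tg_p^{a_2a_3c}\kappa^{a_1c}$ versus $\tg_p^{a_3a_1c}\kappa^{a_2c}$ on the two sides, and multiply distinct spinor brackets $[2\,3]^p[1\,4]^q$ versus $[3\,1]^p[2\,4]^q$; nothing else in the identity can absorb them, so they must cancel among themselves, which happens iff $p=q=0$ and $h_1+s_2+s_3=3$. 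That is what eliminates all higher-derivative couplings and, combined with $s_{2,3}\geq 1$, pins the external helicities to $+1$. Okubo's relation is an identity of the Lie algebra valid independently of $s_2,s_3$, so your claim that its applicability "works precisely when $s_2=s_3=1$" cannot be the selection mechanism; as written, your argument would not actually derive the constraint $h_1+s_2+s_3=3$.

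The remainder of your plan is sound and matches the paper: once $p=q=0$ and $h_1=s_2=s_3=1$ are imposed, the $\tJ[h_1-2]$-sector (coefficients $\cU_{1,1}$, $\cT_{1,1}$, proportional to $\tau_{\tJ}$ and quartic in $f$) is set against the $c_{\mg}^2\,\kappa\kappa$ terms from double $\tU/\tV$ exchange, and the trace identity converting $\Tr(T^{a_1}T^{a_2}T^{a_3}T^{\bullet})$-type contractions into $\kappa\kappa$ combinations (which is where the Okubo-type manipulation actually enters) yields $\tau_{\tJ}=i/2^{10}$. So the proposal is salvageable, but the crucial derivation of "Yang-Mills only, helicity $+1$ only" must be replaced by the $\tU$-sector self-cancellation argument rather than an appeal to the range of validity of \eqref{eq:Okubo}.
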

\begin{proof} The proof is relegated to Appendix \ref{app:check-associativity}.
\end{proof}
This completes our study of the chiral higher-spin algebra $\ca$ up to first order in quantum correction. %


\section{Higher-spin correlation functions and form factors}\label{sec:4}

Having defined the OPEs of the chiral algebra $\ca$ up to first order in quantum correction, let us now study some simple correlation functions 
\begin{align}
    \cAmp_n:=\langle
    \tJ_1\ldots \tJ_n\rangle\,,
\end{align}
of the currents that generate chiral higher-spin algebras following the approach of \cite{Costello:2022wso,Costello:2022upu}. Here, $\tJ_m\equiv \tJ[m]\equiv \tJ_m[h_m,\tH_m](z_m)$ are holomorphic currents on the defect. Note that $\cAmp_n$ can be identified with $4d$ form factors in some spacetime theories, which we do not need to have explicit spacetime actions, a priori. Our procedure up to this point can be summarized as
\begin{figure}[ht!]
    \centering    \includegraphics[scale=0.55]{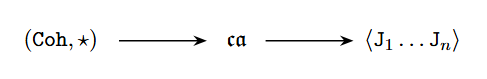}
    \caption{Starting with the non-commutative algebra $(\Coh,\star)$, we have constructed $\ca$ via Koszul duality as in \cite{Costello:2022wso}. After the OPE between holomorphic currents are determined, we can now bootstrap the correlation functions $\langle \tJ\ldots\tJ\rangle$ up to some number of loops. These correlators can be interpreted as form factors in some $4d$ spacetime theories. }
\end{figure}\\
For simplicity, we will work with matrix-valued currents, which allows us to compute form factors in terms of trace invariants of $N\times N$ matrices without relying heavily on the structure constants $\tg^{abc}$. 

As far as it concerns, the chiral bootstrapping technique in \cite{Costello:2022wso,Costello:2022upu} is well-suited for computing non-trivial scattering amplitudes in theories that are small deformation away from chiral or self-dual theories, with Yang-Mills theory being a prime example \cite{Chalmers:1998jb,Mason:2005zm}. Remarkably, one does not need to know the spacetime theories in order to compute form factors if the OPEs between $\tJ$ are given from the outset. This is an advantage of the chiral bootstrap program. Furthermore, if the form factors turn out to be rational, the bulk theories are expected to exhibit strong integrability, which implies that some form factors are, in fact, honest scattering amplitudes. We emphasize, however, that this is only well justified for theories with Yang-Mills-like interactions.


\subsection{On \texorpdfstring{$SU(2)_-$}{SU}-invariant OPEs}

Note that as we can write 
\begin{align}
    \tJ\Big[s_i;m_i\Big]=\frac{\tilde v_i^{\dot\alpha(m_i)}}{m_i!}\tJ_{\dot\alpha(m_i)}[s_i]\,,\qquad \tilde v^{\dot\alpha(m)}\equiv \tilde{v}^{\dot\alpha_i}\ldots \tilde{v}^{\dot\alpha_{m}}\,,
\end{align}
we can also impose $SU(2)_-$ invariant condition on the OPE between higher-spin currents. In practice, this means that we can use the $\msu(2)$-invariant matrices $\epsilon^{\dot\alpha\dot\beta}$ to contract the external spinors $\tilde v_i$ associated with the current $\tJ_i$ as
\begin{align*}
    \tJ^{a_2}\Big[s_2&;m_2\Big](z)\tJ^{a_3}\Big[s_3;m_3\Big](0)
    =\frac{[\tilde v_2\,\tilde v_3]^{m_3}}{m_3!}\frac{(m_2-m_3)!}{m_2!}\tJ^{a_2}\Big[s_2;m_2-m_3\Big](z)\tJ^{a_3}\Big[s_3;0\Big](0)\,.
\end{align*}
Here, we assume that $m_2\geq m_3$. Then, on the ground of $SU(2)_-$ invariance, we can replace \eqref{eq:master-1-loop-OPE-main} with ($\tau_\tJ=i/2^{10}$)
\small
\begin{align}\label{eq:JJ-SU(2)}
    \tJ^{a_2}\Big[s_2&;2s_2-1\Big](z)\tJ^{a_3}\Big[s_3;2s_3-1\Big](0)\sim\nn\\
    &+\frac{1}{z}\sum_p\tg_p^{a_2a_3c}\frac{[v_2\,v_3]^{p+2s_3-1}}{p!(2s_3-1)!}\tJ^c[s_2+s_3-1-p;2s_2-2s_3]\nn\\
    &-\tau_\tJ\Big(\frac{1}{z^2}-\frac{1}{2z}\p_z\Big) \sum_{h\in\Spec}\cU^{a_1a_2a_3a_4}_{s_2,s_3}\sum_p\tg^{a_1a_4f}_p\frac{[2\,3]^{p+2s_3-1}}{p!(2s_3-1)!}\tJ^f[-1-p](z)\nn\\
    &+\frac{\tau_{\tJ}}{z}\sum_{h\in\Spec}\cT^{a_1a_2a_3a_4}_{s_2,s_3}\frac{[2\,3]^{2s_3-1}}{(2s_3-1)!}\nor \tJ^{a_1}[h]\tJ^{a_4}[-h](z)\nor\,\nn\\
    &-c_{\mg}\Big(\frac{1}{z^2}+\frac{1}{z}\p_z\Big)\sum_p\kappa^{a_2a_3}\frac{[2\,3]^{p+2s_3-1}}{p!(2s_3-1)!}\tU[s_2+s_3-2-p;2s_2-2s_3]\nn\\
    &-\frac{c_{\mg}}{z}\sum_p \kappa^{a_2a_3}\frac{[2\,3]^{p+2s_3-1}}{p!(2s_3-1)!}\tV[s_2+s_3-4-p;2s_2-2s_3]\,.
\end{align}
\normalsize
We note that the double poles $\frac{1}{z^2}$ in the above OPEs can be identified with the double poles usually appear in the one- or two-loop integrands of Yang-Mills or QCD theories \cite{Bern:2005ji}. 

\subsection{Tree-level amplitudes}
Let us now compute some simple tree-level amplitudes. 
Since we can identify $\tilde v^{\dot\alpha}$ with the spacetime spatial momentum in the light-cone gauge, cf.
\cite{Bengtsson:1983pd,Metsaev:2005ar}, any scattering amplitude expressions with negative powers in $[\tilde v_i\, \tilde v_j]$, where $i\neq j$, will be diagnosed as non-local. As a result, we can make quite explicit statements about whether a form factor, or, equivalently, a chiral CFT correlation function is healthy by simply looking at the power of the square brackets $[\tilde v_i\,\tilde v_j]$.

\paragraph{2-point functions.} As in any usual CFTs, the chiral higher-spin symmetry can also fix the structure of the two-point functions uniquely up to a normalization constant, which we simply set to 1 for simplicity. Due to symmetry
\begin{align}\label{eq:2pt-normalized}
    \langle \tJ^a_1[h_1;s]\tJ^b_2[h_2;s]\rangle=\delta_{h_1,h_2}\frac{\kappa^{ab}}{z_{12}^{2h_1}}\frac{[1\,2]^s}{s!}\,,
\end{align}
where
\begin{align}\label{eq:Killing-form}
    \kappa^{ab}:=\Tr(T^aT^b)
\end{align}
denotes the usual Killing bilinear form. Then, all other higher-point amplitudes can then be computed by doing Wick contractions, which is a purely algebraic process. 
\paragraph{3-point functions.} Now, consider the 3-point functions:
\begin{align}\label{eq:3-point}
    \langle \tJ_1[h_1]\tJ_2[h_2]\tJ_3[h_3]\rangle =&+\frac{1}{z_{23}} \sum_p\frac{[2\,3]^p}{p!}\langle \tJ_1[h_1]\tJ_{2}[h_2+h_3-1-p]\rangle\nn\\
    &+\frac{1}{z_{31}}\sum_q\frac{[3\,1]^q}{q!}\langle \tJ_1[h_3+h_1-1-q]\tJ_2[h_2]\rangle\,.
\end{align}
Using \eqref{eq:2pt-normalized}, we obtain the following general amplitudes, which are constrained by $\ca$'s symmetries:
\begin{align}\label{eq:3point-step-1}
    \cAmp(h_1,h_2,h_3)=\frac{1}{z_{23}}\frac{[2\,3]^{h_2+h_3-h_1-1}}{\Gamma(h_2+h_3-h_1)}z_{12}^{-2h_1}+\frac{1}{z_{31}}\frac{[3\,1]^{h_3+h_1-h_2-1}}{\Gamma(h_3+h_1-h_2)}z_{12}^{-2h_2}\,.
\end{align}
Requiring the power of all square brackets to be non-negative by virtue of locality, i.e.
\begin{align}
    h_2+h_3-h_1\geq 1\,,\qquad  h_3+h_1-h_2\geq 1\,,
\end{align}
we obtain the following constraints
\begin{align}
    h_1=h_2=-s\quad(s\geq 0)\,,\qquad h_3=1\,,
\end{align}
for the above 3-point amplitudes to be non-vanishing. Remarkably, this turns out to be precisely the 3-point amplitudes of HS-YM theory in \cite{Adamo:2022lah}, where
\begin{align}\label{eq:MHV-HS-YM}
    \cAmp(-s,-s,1)=z_{12}^{2s}\Big(\frac{1}{z_{23}}+\frac{1}{z_{31}}\Big)=-\frac{z_{12}^{2s+2}}{z_{12}z_{23}z_{31}}\,.
\end{align}
Upon identifying $z_{ij}=\langle i\,j\rangle$, the above yields
\begin{align}
    \cAmp(-s,-s,1)=-\frac{\langle 1\,2\rangle^{2s+2}}{\langle 1\,2\rangle \langle 2\,3\rangle \langle 3\,1\rangle}\,.
\end{align}
Observe that, for $s=1$, we recover the 3-point MHV gluon amplitudes.\footnote{Previous work on computing gluon scattering amplitudes using twistor string theory can be found in e.g. \cite{Nair:1988bq,Witten:2003nn,Cachazo:2004kj}.} 

It is hard not to notice that the negative-helicity fields behave quite differently with positive helicity fields in chiral/self-dual higher-spin theories. In particular, they play the roles of linear fluctuations around the chiral/self-dual background set by the positive-helicity fields \cite{Tran:2025yzd}. Namely, they do not play important roles in deforming the background, and, thus are less constrained at asymptotic infinity. 

Note that we do not recover the usual MHV 3-point amplitude of gravity
\begin{align}
    \cA_3(1_{-2},2_{-2},3_{+2})=\frac{\langle 1\,2\rangle^6}{\langle 1\,2\rangle^2\langle 2\,3\rangle^2\langle 3\,1\rangle^2}\,.
\end{align}
This stems from the fact that the classical OPE between higher-spin currents contain only simple poles, and the fields we used to construct $\ca$ via Koszul duality are chiral field representations \cite{Krasnov:2021nsq,Adamo:2022lah}.\footnote{Note, however, that it is possible to derive gravity amplitudes using twistor framework, see e.g. \cite{Hodges:2012ym}.} 

\paragraph{Comment on higher-derivative interactions.} Note that the above computation of form factors is mainly sensible for Yang-Mills-like interacting theories, since the classical OPE \eqref{eq:full-tree-OPE} can be viewed as the holomorphic collinear limit of the higher-spin soft factors for numerous chiral higher-spin theories \cite{Tran:2022amg}. As a result, in reconstructing the tree-level amplitudes from these soft factors, one may need to adjust the Wick contraction rule by including some factors of $\frac{\langle \alpha \,i\rangle}{\langle j\,\alpha\rangle}$ where $\alpha$ are some reference spinors, see \cite{Boucher-Veronneau:2011rwd}. 

Although this argument may also apply to loop level, the results of Theorem \ref{thm:quantum-associative} and \ref{thm:quantum-HS} do, in fact, remove this possibility. Therefore, we only need to worry about higher-derivative interactions at classical level, which was in agreement with the finding of \cite{Ball:2021tmb}. In what follows, we will simply ignore the case of higher-derivative interactions and leave the investigation for a future work.

\paragraph{4-point functions.} Let us now bootstrap higher-point correlation functions to check whether there can be actually non-trivial higher-spin amplitudes with complex kinematics, as having been seen previously in the case of HS-YM \cite{Adamo:2022lah}. Following the strategy outlined in \cite{Costello:2022wso}, we start to do Wick contractions with the closest neighbors, say $\tJ_{i-1}$ and $\tJ_{i+1}$, of $\tJ_i$ in the string of operators $\langle \tJ_1\ldots \tJ_n\rangle$. 
For instance, 
\begin{align}\label{eq:4-pt-step-1}
    \langle \tJ_1[h_1]\tJ_2[h_2]\tJ_3[h_3]\tJ_4[h_4]\rangle\,,
\end{align}
is equal to
\begin{align}
    \eqref{eq:4-pt-step-1}=&+\sum_p\frac{[4\,1]^p[3\,1]^{h_1-h_2+h_3+h_4-2-p}}{p!(h_1-h_2+h_3+h_4-2-p)!}\frac{1}{z_{41}z_{31}z_{12}^{2h_2}}\nn\\
    &+\sum_p\frac{[4\,1]^p[2\,3]^{h_2+h_3-h_1-h_4+p}}{p!(h_2+h_3-h_1-h_4+p)!}\frac{1}{z_{41}z_{23}z_{12}^{2(h_1+h_4-p-1)}}\nn\\
    &+\sum_p\frac{[3\,4]^p[3\,1]^{h_1-h_2+h_3+h_4-2-p}}{p!(h_1-h_2+h_3+h_4-2-p)!}\frac{1}{z_{34}z_{31}z_{12}^{2h_2}}\nn\\
    &+\sum_p\frac{[3\,4]^p[2\,3]^{-h_1+h_2+h_3+h_4-2-p}}{p!(-h_1+h_2+h_3+h_4-2-p)!}\frac{1}{z_{34}z_{23}z_{12}^{2h_1}}\,.
\end{align}
Notice that the above sums have somewhat similar pattern with the ones appear in the context of tree-level amplitudes of chiral higher-spin gravity \cite{Skvortsov:2018jea,Skvortsov:2020wtf}. Evaluating the sum, we obtain 
\begin{align}
    \eqref{eq:4-pt-step-1}=&+\frac{\big([4\,1]+[3\,1]\big)^{h_1-h_2+h_3+h_4-2}}{(h_1-h_2+h_3+h_4-2)!}\frac{1}{z_{41}z_{31}z_{12}^{2h_2}}\nn\\
    &+\frac{[4\,1]^{h_1+h_4-h_2-h_3}}{(h_1+h_4-h_2-h_3)!}\frac{1}{z_{41}z_{23}z_{12}^{2(h_2+h_3-1)}}\nn\\
    &+\frac{\big([3\,4]+[3\,1]\big)^{h_1-h_2+h_3+h_4-2}}{(h_1-h_2+h_3+h_4-2)!}\frac{1}{z_{34}z_{31}z_{12}^{2h_2}}\nn\\
    &+\frac{\big([3\,4]+[2\,3]\big)^{-h_1+h_2+h_3+h_4-2}}{(-h_1+h_2+h_3+h_4-2)!}\frac{1}{z_{34}z_{23}z_{12}^{2h_1}}\,.
\end{align}
For the amplitudes to be non-trivial, the total power of square brackets should be non-negative. We obtain the constraints
\begin{align}
     h_3,h_4\geq 1\,,\qquad h_1=h_2\,.
\end{align}
Thus, \eqref{eq:4-pt-step-1} reduces to
\begin{align}
    \eqref{eq:4-pt-step-1}=&+\frac{\big([4\,1]+[3\,1]\big)^{h_3+h_4-2}}{(h_3+h_4-2)!}\frac{1}{z_{41}z_{31}z_{12}^{2h_1}}\nn\\
    &+\frac{[4\,1]^{h_4-h_3}}{(h_4-h_3)!}\frac{1}{z_{41}z_{23}z_{12}^{2(h_1+h_3-1)}}\nn\\
    &+\frac{\big([3\,4]+[3\,1]\big)^{h_3+h_4-2}}{(h_3+h_4-2)!}\frac{1}{z_{34}z_{31}z_{12}^{2h_1}}\nn\\
    &+\frac{\big([3\,4]+[2\,3]\big)^{h_3+h_4-2}}{(h_3+h_4-2)!}\frac{1}{z_{34}z_{23}z_{12}^{2h_1}}\,.
\end{align}
Thus, as long as $h_4\geq h_3\geq 1$, the above may be a well-defined amplitude. However, as stated, we do not fully understand the case of higher-derivative interactions. Thus, we will only focus on the the helicity bound where $h_3=h_4=1$. We find
\begin{align}
    \eqref{eq:4-pt-step-1}=\frac{1}{z_{12}^{2h_1}}\Big(\frac{1}{z_{41}z_{31}}+\frac{1}{z_{41}z_{23}}+\frac{1}{z_{34}z_{31}}+\frac{1}{z_{34}z_{23}}\Big)=\frac{1}{z_{12}^{2h_1-2}}\frac{1}{z_{12}z_{23}z_{34}z_{41}}\,.
\end{align}
In terms of angled brackets, 
\begin{align}
    \langle \tJ_1[h_1]\tJ_2[h_1]\tJ_3[1]\tJ_4[1]\rangle=\frac{\langle 1\,2\rangle^{-2h_1+2}}{\langle 1\,2\rangle\langle 2\,3\rangle\langle 3\,4\rangle\langle 4\,1\rangle}\,.
\end{align}
When $h_1=-s$ for $s\geq 0$, we recover the 4-point MHV amplitude in HS-YM theory \cite{Adamo:2022lah}. Of course, the $n$-point MHV amplitude of HS-YM can be obtained in an inductive way, but it is not the point.

The point is that the chiral higher-spin symmetry algebras $\ca$ allows for more non-trivial higher-spin amplitudes than one might expect. This suggests the possible existence of some mysterious higher-spin theories that are not yet constructed. Nevertheless, it is important to note that these amplitudes should be understood as arising from non-unitary higher-spin theories, which are only well-defined in Euclidean, split-signature spacetime, or complexified $4d$ spacetime. These theories may be understood as certain higher-spin and higher-derivative extension of the usual unitary Yang-Mills theory. (See also a recent study in the light-cone gauge \cite{Serrani:2025owx}, which shows somewhat similar conclusions along this direction.\footnote{Since the light-cone deals directly with physical degrees of freedom, there is no ambiguity of gauge redundancy. Therefore, if an amplitude exists in the light-cone gauge, its covariant description must also exist. However, the covariant expression may be more complicated to find.}) Note that when suitable matter fields are included in this framework, one may also obtain certain higher-spin extensions of QCD. 

\subsection{Loop amplitudes and rational sector}
For chiral/self-dual theories with strong integrable properties, 
one may tentatively suggest that if an $\ell$‑loop form factor with a given helicity configuration happens to be finite and rational, while the corresponding form factor (with the same helicity) at one lower order in quantum correction vanishes, then it is reasonable to be viewed as a \emph{genuine amplitude} at $\ell$ loop. These kind of form factors belong to what we call \emph{rational sector}.
 

\begin{table}[ht!]
    \centering
    \begin{tblr}{
    colspec = {|c|c|c|c|c|},
    cell{2}{2} = {green!60!white!30},
    cell{2}{3} = {green!60!white!30},
    cell{2}{4} = {green!60!white!30},
    cell{3}{2} = {green!60!white!30},
    cell{3}{3} = {green!60!white!30},
    cell{4}{2} = {green!60!white!30},
    cell{5}{2} = {green!60!white!30},
  }\hline
     Amplitude    &  Tree  & 1 loop & 2 loop & higher loop \\\hline\hline
       $(+,+,+\ldots,+)$  & vanishing  & vanishing & rational &  divergent \\\hline
       $(-,+,+\ldots,+)$ & vanishing & rational & divergent & divergent \\\hline
       $(-,-,+,\ldots,+)$ & rational & divergent & divergent & divergent \\\hline 
       more $-$ &  rational &  divergent &  divergent & divergent \\\hline
    \end{tblr}
    \caption{The form factors which one can interpret as genuine amplitudes, using the OPEs up to first order in quantum correction, are highlighted in [\textcolor{green!60!black!90}{\bf green}]. They form what we will call the [\textcolor{green!60!black!90}{\bf rational sector}]. Note that the rationality of the loop amplitudes arises from the fact that the loop or tree-level amplitudes at one order lower in quantum corrections vanish. All other form factors require doing explicit loop integrals to become actual amplitudes, and they may to exhibit both UV and IR divergences. Note that these amplitudes are resulted from certain deformation away from the chiral/self-dual sectors.}
    \label{tab:form factors}
\end{table} 

For various holomorphic twistorial theories, one can show that their one-loop all-plus helicity amplitudes are trivial once appropriate couplings are introduced, cf. \cite{Costello:2021bah,Bittleston:2022nfr,Tran:2025uad}. As a result, the chiral OPE data we have derived can be used to compute two-loop all-plus amplitudes, which are \emph{rational}. In fact, it was shown by direct computation in \cite{Dixon:2024mzh} that the two-loop all-plus helicity amplitudes in QCD with special matter content do not exhibit divergence after doing a suitable IR subtraction.
\footnote{This story, of course, is slightly different with the usual Yang-Mills or QCD two-loop amplitudes. In particular, it is well-known that all two-loop amplitudes of YM or QCD exhibit both UV and IR divergences \cite{Badger:2013gxa,Dunbar:2017nfy,Badger:2018enw} because the one-loop amplitudes of YM/QCD are either rational (all-plus helicity sector \cite{Mahlon:1993fe,Mahlon:1993si,Bern:1993qk} or amplitudes with at least one negative helicitiy \cite{Bern:2005ji}), or divergent. }


\subsubsection{One-loop amplitudes} 
Let us now study the one-loop $n$-point amplitudes in the rational sector, cf. Table \ref{tab:form factors}, via correlation functions of $\tJ[h;k]$. This involves the insertion of $2$ higher-spin currents $\tJ[h;k]$ with non-trivial $SU(2)_-$ charges and $n-2$ other classical higher-spin currents. Due to symmetry, it is easy to notice that all $SU(2)_-$ charges must be the same. We also note that the form factors computed in this section can be identified with the one-loop amplitudes with specific helicity configurations, such as $(+,+,\ldots,+)$ and $(-,+,\ldots,+)$. Since the UV and IR divergences of the one-loop amplitudes are proportional to the corresponding tree-level amplitudes with the same helicity configurations, which vanish in these cases, the one-loop amplitudes are UV and IR finite, as well as being rational.


We find that correlation functions of currents that generate $\ca$ associated with anomaly-free twistorial higher-spin theories are always vanishing. However,  for theories whose $\Spec\neq \Z,2\Z+1$, the form factors are non-trivial, and have similar expression with the one of mostly-plus Yang-Mills amplitudes in \cite{Costello:2022upu,Costello:2023vyy}. 

\paragraph{One-loop 3-point amplitudes.} Before studying 4-point loop amplitudes. Let us make some simple statements, which proves to be useful in the follows.
\begin{lemma}\label{lem:quantum-protect} The 3-point scattering amplitudes 
\begin{align}\label{eq:3pt-1-loop-JJJ}
    \big\langle \tJ_1[h_1]\,\tJ_2[s;2s-1]\,\tJ_3[s;2s-1]\big\rangle\,,\qquad s\geq 1\,,
\end{align}
are quantum protected at one loop.
\end{lemma}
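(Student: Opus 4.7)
The plan is to expand the correlator by inserting the one-loop OPE \eqref{eq:master-1-loop-OPE-main} in the channel $\tJ_2[s;2s-1]\,\tJ_3[s;2s-1]$, and to Wick-contract each resulting operator against $\tJ_1[h_1]\equiv\tJ_1[h_1;0]$ using the normalized two-point function \eqref{eq:2pt-normalized}. The lemma will then follow by checking that every term beyond the classical one gives zero.

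First, I would isolate the classical piece of the OPE, $\tfrac{1}{z}\sum_p\tg_p^{a_2a_3c}\tfrac{[2\,3]^{p+2s-1}}{p!(2s-1)!}\tJ^c[2s-1-p;0]$, and observe that its $SU(2)_-$ charge vanishes precisely because $s_2=s_3=s$; hence it contracts non-trivially with $\tJ_1[h_1;0]$ when $h_1=2s-1-p$, reproducing the tree-level HS-YM-type amplitude \eqref{eq:MHV-HS-YM} up to overall constants. The rest of the argument is to show that each one-loop term in \eqref{eq:master-1-loop-OPE-main} gives a vanishing contribution to this 3-point function.

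I would then dispose of the one-loop pieces individually. The \emph{single-$\tJ$ quantum correction} proportional to $\sum_{h\in\Spec}\cU^{a_1a_2a_3a_4}_{s,s}\sum_p\tg^{a_1a_4f}_p\,\tJ^f[-1-p]$ carries a helicity-independent spectrum sum $\sum_{h\in\Spec}1$ coming from the internal helicities running in the loop; for the anomaly-free spectra $\Spec\in\Spec^{\diamond}=\{\Z,2\Z,2\Z+1\}$ of \eqref{eq:quantum-protected-spectrum}, Riemann zeta regularization forces this sum to zero, killing the contribution. The \emph{double-$\tJ$ correction} $\nor\tJ^{a_1}[h]\tJ^{a_4}[-h]\nor$ is a normal-ordered composite operator at one point, so Wick's theorem reduces $\bigl\langle\tJ_1\nor\tJ^{a_1}\tJ^{a_4}\nor\bigr\rangle$ to a sum of one-point functions, which vanish in the free CFT. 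The \emph{axionic corrections} generate currents $\tU$ and $\tV$ that are Koszul dual to the axion $\varrho$ rather than to the gauge field $\sA$; orthogonality of distinct current species then gives $\langle\tJ_1\tU\rangle=\langle\tJ_1\tV\rangle=0$, so these terms likewise decouple from the 3-point function of physical currents.

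The main obstacle I anticipate is making the spectrum regularization uniform across the anomaly-free and the anomalous cases: when $\Spec\notin\Spec^{\diamond}$ the axionic currents are required in order to restore associativity, and one must check that they do not re-introduce a non-vanishing contribution here. By Theorem \ref{thm:quantum-HS} the admissible anomalous interactions are of Yang-Mills type, and in that restricted setting the axionic decoupling via orthogonality of current species still applies while the remaining internal spectrum sum retains its zeta-regularized structure. Combining these three vanishings, the full quantum answer reduces to the classical 3-point function, establishing that $\langle\tJ_1[h_1]\,\tJ_2[s;2s-1]\,\tJ_3[s;2s-1]\rangle$ is quantum protected at one loop.
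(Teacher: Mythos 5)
Your overall strategy — expand the correlator via the one-loop OPE in the $\tJ_2\tJ_3$ channel and kill each quantum term against the two-point functions — is the same as the paper's, and your treatment of the double-$\tJ$ term (Wick contraction leaves a vanishing one-point function; the paper phrases this as the double pole in the contour integral defining $\nor\ \nor$) and of the axionic terms (via $\langle\tJ\,\tU\rangle=\langle\tJ\,\tV\rangle=0$, eq.~\eqref{eq:no-JU-JV}) is sound. The anomaly-free case $\Spec=\Spec^{\diamond}$ is also handled correctly by the zeta-regularized spectrum sum.

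The genuine gap is in the anomalous case. You write that ``the remaining internal spectrum sum retains its zeta-regularized structure,'' but for $\Spec=|h|\geq 1$ the regularized sum is $\sum_{h}1=-1$, not zero, so the single-$\tJ$ quantum correction proportional to $\cU_{s,s}^{a_1a_2a_3a_4}\tJ^f[-1-p]$ does \emph{not} decouple by the spectrum sum. The paper instead evaluates this term explicitly: Theorem~\ref{thm:quantum-HS} forces $s=1$, $p=0$ and Yang-Mills-type interactions, the two-point function $\langle\tJ_1[h_1]\tJ[-1;0]\rangle$ restricts to $\delta_{h_1,-1}z_{12}^2$, and the resulting kinematic factor (eq.~\eqref{eq:3-pt-1-loop-result}) vanishes only \emph{on the support of momentum conservation}. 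Momentum conservation is an ingredient you never invoke, yet it is what kills both the surviving single-$\tJ$ correction and the classical contribution in the paper's argument; without it your proof does not close in the $\Spec=|h|\geq 1$ case. Relatedly, your assertion that the classical piece ``reproduces the tree-level HS-YM-type amplitude'' and that the answer ``reduces to the classical 3-point function'' is not what the computation gives: for this helicity configuration the classical contraction is supported on $\delta_{h_1,+1}$ and also vanishes on momentum conservation, so the protected value is zero rather than a nontrivial tree amplitude.
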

\begin{proof} We prove by direct computation. By virtue of Theorem \ref{thm:quantum-HS}, we first consider the case where $\Spec^{\diamond}=\Z,2\Z+1$. In this case, the OPE \eqref{eq:JJ-SU(2)} with $s_2=s_3=s$ reduces to
\begin{align}\label{eq:JJ-same-s}
    \tJ^{a_2}\Big[s;2s-&1\Big](z)\tJ^{a_3}\Big[s;2s-1\Big](0)\sim\nn\\
    &+\frac{1}{z}\sum_pg^{a_2a_3c}_p\frac{[v_2\,v_3]^{p+2s-1}}{p!(2s-1)!}\tJ^c[2s-1;0]\nn\\
    &+\frac{\tau_\tJ}{z}\sum_{h\in\Spec^{\diamond}}\cT^{a_1a_2a_3a_4}_{s}\frac{[2\,3]^{2s-1}}{(2s-1)!}\nor \tJ^{a_1}[h]\tJ^{a_4}[-h](z)\nor\,\,.
\end{align}
We find
\begin{align}
    \eqref{eq:3pt-1-loop-JJJ}= \frac{1}{z_{31}}[3\,1]\big\langle \tJ_1[h_1+s-1;2s-1]\tJ_2[s;2s-1]\big\rangle\,,
\end{align}
where
\begin{align}
    \sum_{h\in\Spec}\langle \tJ_1[h_1]\nor\tJ[h]\tJ[-h](z)\nor\rangle =0\,,
\end{align}
since the classical OPEs lead to a contour integral with double pole. Using \eqref{eq:2pt-normalized}, we obtain
\begin{align}
    \eqref{eq:3pt-1-loop-JJJ}=\delta_{h_1,+1}\frac{[1\,2]^{2s-1}z_{23}}{(2s-1)!}\frac{1}{z_{12}^{2s}z_{23}z_{31}}\,,
\end{align}
This vanishes on the support of momentum conservation. Thus, the 3-point amplitude \eqref{eq:3pt-1-loop-JJJ} is protected from quantum correction if $\Spec=\Spec^{\diamond}$. 

In the case where $\Spec=|h|\geq 1$, (so that $\sum_{h\in\Spec}1=-1$) we can impose the external helicities of the currents which enter quantum process to be $+1$ from the outset and the interactions are of Yang-Mills type. Then, proceed similarly with along the line above, we again see that the 3-point amplitude \eqref{eq:3pt-1-loop-JJJ} is also protected from quantum correction in this case. Indeed, recall that
\begin{align}\label{eq:JJ-same-s-axion}
    \tJ^{a_2}\Big[s;2s-&1\Big](z)\tJ^{a_3}\Big[s;2s-1\Big](0)\sim\nn\\
    &+\frac{1}{z}f^{a_2a_3c}\frac{[v_2\,v_3]^{2s-1}}{(2s-1)!}\tJ^c[2s-1;0]\nn\\
    &+\tau_\tJ\Big(\frac{1}{z^2}-\frac{1}{2z}\p_z\Big) \cU^{a_1a_2a_3a_4}_{s}f^{a_1a_4f}\frac{[2\,3]^{2s-1}}{(2s-1)!}\tJ^f[-1](z)\nn\\
    &-\frac{\tau_\tJ}{z}\cT^{a_1a_2a_3a_4}_{s}\frac{[2\,3]^{2s-1}}{(2s-1)!}\nor \tJ^{a_1}[h]\tJ^{a_4}[-h](z)\nor\,\nn\\
    &-c_{\mg}\Big(\frac{1}{z^2}+\frac{1}{z}\p_z\Big)\kappa^{a_2a_3}\frac{[2\,3]^{2s-1}}{(2s-1)!}\tU[2s-1;0]\nn\\
    &-\frac{c_{\mg}}{z}\kappa^{a_2a_3}\frac{[2\,3]^{2s-1}}{(2s-1)!}\tV[2s-1;0]\,,
\end{align}
we find
\begin{align}\label{eq:3pt-1-loop-JJJ-step-1}
    \eqref{eq:3pt-1-loop-JJJ}=&\delta_{s,1}\tau_{\tJ}\Big(\frac{1}{z_{23}^2}-\frac{1}{2z_{23}}\p_{2}\Big)\frac{1}{z_{23}}\frac{[2\,3]^{2s-1}}{(2s-1)!}\cU_{s,s}\big\langle \tJ_1[h_1]\tJ_2[-1;0]\big\rangle\nn\\
    &+ \delta_{s,1}\frac{1}{z_{31}}[3\,1]\big\langle \tJ_1[h_1+s-1;2s-1]\tJ_2[s;2s-1]\big\rangle\,,
\end{align}
since the 2-point functions \begin{align}\label{eq:no-JU-JV}
    \langle\tJ[h_x]\tU[h_y]\rangle=0\,,\qquad \langle \tJ[h_x]\tV[h_y]\rangle=0\,,
\end{align}
for any value of $h_x$ and $h_y$, and 
\begin{align}
    \cU_{s,s}=-\frac{32}{\pi^3}\frac{\Gamma(2s)^2}{\Gamma(s)^2}\frac{(s+1)}{s(s+2)}\,.
\end{align}
As a result, 
\begin{align}\label{eq:3-pt-1-loop-result}
    \eqref{eq:3pt-1-loop-JJJ}=&+\delta_{h_1,-1}\delta_{s,1}\tau_{\tJ}\Big(\frac{1}{z_{23}^2}-\frac{1}{2z_{23}}\p_{2}\Big)\frac{[2\,3]^{2s-1}}{(2s-1)!}\cU_{s,s} z_{12}^{2}\nn\\
    &+\delta_{h_1,+1}\delta_{s,1}\frac{[1\,2]^{2s-1}z_{23}}{(2s-1)!}\frac{1}{z_{12}^{2s}z_{23}z_{31}}\,.
\end{align}
Observe that both terms vanish on the support of momentum conservation. Thus, the 3-point amplitude \eqref{eq:3pt-1-loop-JJJ} is protected from quantum correction if $\Spec=\Spec^{\diamond}$.
\end{proof}

\begin{lemma}\label{lem:mixed-term} The one-loop three-point functions
    \begin{subequations}
    \begin{align}
        &\left\langle\tU[0](z_1)\tJ_2\big[s;2s-1\big](z_2)\tJ_3\big[s;2s-1\big](z_3)\right\rangle\,,\\
        &\left\langle\tV[0](z_1)\tJ_2\big[s;2s-1\big](z_2)\tJ_3\big[s;2s-1\big](z_3)\right\rangle\,,\\
        &\left\langle\nor \tJ[h]\tJ[-h](z_1)\nor\tJ_2\big[s,2s-1\big](z_2)\tJ_3\big[s,2s-1\big](z_3)\right\rangle\,,
    \end{align}
\end{subequations}
vanish. 
\end{lemma}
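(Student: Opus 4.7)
My plan is to reduce each of the three correlators to a combination of two-point functions and show that every such two-point function vanishes, either by the selection rules encoded in \eqref{eq:2pt-normalized} (matching of both conformal weight and $SU(2)_-$ charge) or by the vanishing mixed two-point functions \eqref{eq:no-JU-JV}. The underlying point is that the external insertions carry conformal weight $0$ and $SU(2)_-$ charge $0$ (for the axionic and composite cases), while the two $\tJ$ currents have weight $s$ and charge $2s-1\geq 1$; after any plausible OPE reduction this disparity survives as an obstruction to a non-trivial pairing.

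For the first correlator I would apply the OPE $\tJ_2\cdot\tU[0]$ from the second line of \eqref{eq:J-axion}, which yields a sum of single currents $\tJ^{a_2}[s-2-p;2s-3]$ with $p\geq 0$. The residual two-point function with $\tJ_3[s;2s-1]$ vanishes by \eqref{eq:2pt-normalized}: the conformal weight $s-2-p$ cannot equal $s$, and the $SU(2)_-$ charge $2s-3$ cannot equal $2s-1$. For the second correlator the same argument applies after performing the OPE $\tJ_2\cdot\tV[0]$ from the first line of \eqref{eq:J-axion} (with the natural extension to $\tJ_2$ of nontrivial $SU(2)_-$ charge, dictated by the $\eth_w\eth_w$ derivative structure of $\tV$ in \eqref{eq:axion-derivative-structures}): the output has conformal weight $s-2-p<s$ and thus fails to pair with $\tJ_3$.

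For the third correlator I would expand the bilocal composite $\nor\tJ[h]\tJ[-h]\nor$ via its contour-integral definition \eqref{eq:normal-order-def} into Wick-like sums of pairwise two-point functions among the constituents $\tJ[\pm h;0]$ and the external $\tJ_2,\tJ_3$. Since $\tJ[\pm h;0]$ has $SU(2)_-$ charge $0$ while $\tJ_2,\tJ_3$ have charge $2s-1\geq 1$, \eqref{eq:2pt-normalized} kills every pairing. Any quantum correction from the $\tJ_2\cdot\tJ_3$ OPE \eqref{eq:master-1-loop-OPE-main} introduces single-$\tJ$, $\tU$, $\tV$, or further bilocal operators whose two-point functions with $\tJ[\pm h;0]$ are eliminated either by the same $SU(2)_-$-charge obstruction or by \eqref{eq:no-JU-JV}.

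The main obstacle lies in the careful extension of \eqref{eq:J-axion} to $\tJ$ with nontrivial $SU(2)_-$ charge in the $\tV$ channel, which is stated in the paper only for $\tJ[h;0]$; the extension is natural given the $\eth_w\eth_w$ structure of $\tV$ in \eqref{eq:axion-derivative-structures}, but one should verify that the $1/z^2$ double-pole piece of that OPE cannot produce a boundary/contact contribution evading the conformal-weight matching argument. A secondary concern is to check that every axionic and higher-loop contribution to the third correlator preserves the charge obstruction rather than smuggling in a compensating factor of $[\tilde v_i\,\tilde v_j]$.
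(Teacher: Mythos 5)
Your proof is correct, but it reaches the conclusion by a different route than the paper, which disposes of the lemma in two sentences. For the first two correlators the paper simply observes that the external axionic insertion can only ever pair with a $\tJ$ through the mixed two-point functions $\langle\tJ\,\tU\rangle$ and $\langle\tJ\,\tV\rangle$, which vanish by \eqref{eq:no-JU-JV}; you instead perform the $\tJ_2\cdot\tU$ (resp. $\tJ_2\cdot\tV$) OPE of \eqref{eq:J-axion} first and kill the residual two-point function with $\tJ_3$ by the conformal-weight ($s-2-p\neq s$) and $SU(2)_-$-charge ($2s-3\neq 2s-1$) selection rules of \eqref{eq:2pt-normalized}. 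These are two different contraction orderings of the same correlator, consistent by associativity; yours is more explicit and, usefully, also covers the pairing of the external $\tU[0]$ with the $\tU$, $\tV$ operators produced in the $\tJ_2\tJ_3$ OPE (which the paper's one-line argument leaves implicit, and which one might worry about given the normalization \eqref{eq:UU-VV-normalization}), since a label/charge mismatch kills those too. For the third correlator the divergence is more substantive: the paper attributes the vanishing to the double pole arising in the contour integral defining the normal ordering \eqref{eq:normal-order-def} (the same mechanism invoked in Lemma \ref{lem:quantum-protect}), whereas you argue via the $SU(2)_-$-charge obstruction, pairing the charge-$0$ constituents $\tJ[\pm h;0]$ against the charge-$(2s-1)$ external currents. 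Your mechanism is cleaner here and does not depend on the slightly delicate claim that a derivative of the residual correlator vanishes; note, however, that it would not extend to the charge-$0$ external currents of Lemma \ref{lem:quantum-protect}, where only the double-pole argument is available. The caveat you flag about extending \eqref{eq:J-axion} to currents of nonzero $SU(2)_-$ charge in the $\tV$ channel is legitimate but harmless: whatever the precise charge shift dictated by the $\eth_w\eth_w$ structure, the output conformal weight is strictly below $s$, which already forces the residual two-point function to vanish.
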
 
\begin{proof} 
The first two correlation functions can be checked to vanish since we do not have 2-point functions $\langle \tJ\,\tU\rangle$ or $\langle\tJ\,\tV\rangle$, cf. \eqref{eq:no-JU-JV}. The last term vanishes by definition of the normal ordering, cf. \eqref{eq:normal-order-def} since it contains double pole in the contour integral.
\end{proof}

\paragraph{One-loop 4-point amplitudes.} Let us first warm up with the following amplitudes 
\begin{align}\label{eq:4-point-(-+++)}
    \big\langle \tJ_1[-1]\tJ_2[s;2s-1]\tJ_3[s;2s-1]\tJ_4[1]\big\rangle\,.
\end{align}
This amplitude can be computed easily by rewriting
\begin{align}
    \eqref{eq:3pt-1-loop-JJJ}\sim -&\sum_{h\in\Spec}\frac{[2\,3]^{2s-1}\langle 1\,2\rangle^{2}}{\langle 2\,3\rangle^2}=\sum_{h\in\Spec}\frac{[2\,3]^{2s-1}\langle 1\,2\rangle^2\langle 1\,3\rangle^2}{\langle 2\,3\rangle}\frac{1}{\langle 1\,2\rangle \langle 2\,3\rangle \langle 3\,1\rangle}\,.
\end{align}
For theories with $\Spec=\Z,2\Z+1$, the above is zero. However, with $\Spec=|h|\geq 1$, the above reduces to the result of \cite{Costello:2022upu}, with an overall minus sign coming from Riemann zeta regularization\footnote{There are still higher spins in the loop but the external states are forced to have spin one by symmetry.}
\begin{align}
    \eqref{eq:3pt-1-loop-JJJ}=-\frac{[2\,3]\langle 1\,2\rangle^2\langle 1\,3\rangle^2}{\langle 2\,3\rangle}\frac{1}{\langle 1\,2\rangle \langle 2\,3\rangle \langle 3\,1\rangle}\,,\qquad \Spec=|h|\geq 1\,.
\end{align}
We find by induction that\footnote{The induction essentially reduces to the standard BCFW recursion relations \cite{Britto:2005fq}, as the interactions are of Yang-Mills type. Previous use of BCFW recursive relations can be found e.g. in \cite{Adamo:2022lah,Tran:2022amg}.}
\begin{subequations}
    \begin{align}
    \eqref{eq:4-point-(-+++)}&=0\,,\qquad &\Spec&=\Spec^{\diamond}\,,\\
    \eqref{eq:4-point-(-+++)}&=-\frac{[2\,3]\langle 1\,2\rangle^2\langle 1\,3\rangle^2}{\langle 2\,3\rangle}\frac{1}{\langle 1\,2\rangle \langle 2\,3\rangle \langle 3\,4\rangle\langle 4\,1\rangle}\,,\qquad &\Spec&=|h|\geq 1\,.
\end{align}
\end{subequations}

\paragraph{One-loop $n$-point amplitudes.} A direct generalization to $n$-point amplitudes of type \small
\begin{align}\label{eq:n-point-1-loop}
   \cAmp_{1\text{-loop}}(1_{-1},2_{+s},3_{+s},4_{+1},\ldots,n_{+1})= \big\langle \tJ_1[-1]\tJ_2[s;2s-1]\tJ_3[s;2s-1]\tJ_4[1]\ldots \tJ_n[1]\big\rangle\,,
\end{align}
\normalsize
with $p=0$ and $\Spec=|h|\geq 1$ is also possible. The result is
\begin{subequations}
    \begin{align}
    \eqref{eq:n-point-1-loop}&=0\,,\qquad & \text{for}\quad  \Spec&=\Spec^{\diamond}\,,\\
    \eqref{eq:n-point-1-loop}&=-\frac{[2\,3]\langle 1\,2\rangle^2\langle 1\,3\rangle^2}{\langle 2\,3\rangle}\frac{1}{\langle 1\,2\rangle \langle 2\,3\rangle \ldots \langle n\,1\rangle}\,,\qquad  &\text{for}\quad  \Spec&=|h|\geq 1\,.
\end{align}
\end{subequations}
Again, the sign of higher-spin one-loop amplitudes are opposite with the usual Yang-Mills one. We also note that the one-loop $n$-point all-plus amplitudes 
\begin{align}\label{eq:YM-1-loop}
    \cAmp_{1\text{-loop}}(1_{+s},2_{+s},3_{+1},\ldots,n_{+1})=\left\langle \tJ_1\big[s,2s-1\big]\tJ_2\big[s,2s-1\big]\tJ_3[1]\ldots\tJ_n[1]\right\rangle\,,
\end{align}
are zero. They can be easily computed along the line of \eqref{eq:3pt-1-loop-JJJ-step-1}. 

Note that these results do not contradict with prior studies of the one-loop all-plus amplitudes in Yang-Mills or QCD (cf. \cite{Mahlon:1993fe,Mahlon:1993si,Bern:1993qk}). Rather, it is a feature of the type of axionic higher-spin theories that we consider here. In particular, $\eqref{eq:YM-1-loop}=0$ can be understood from the fact that the all-plus 4-point one-loop amplitude is cancelled by the tree-level amplitude with axion in the exchange,  
\begin{align}\label{eq:anomaly-cancellation-axion}
    \parbox{150pt}{\includegraphics[scale=0.45]{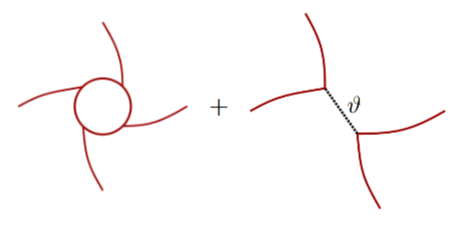}}\ =0\,,
\end{align}
cf. Section \ref{sec:anomaly-cancellation} (see also discussions in \cite{Costello:2022wso,Costello:2023vyy}).


\subsubsection{Two-loop amplitudes}

\paragraph{All-plus two-loop amplitudes.} Let us now consider the following all-plus two-loop amplitudes 
\begin{align}\label{eq:all-plus-2-loop}
    \big\langle \tJ_1[s_1;\tH_1]\tJ_2[s_2;\tH_2]\tJ_3[s_3;\tH_3]\tJ_4[s_4;\tH_4]\big\rangle \,.
\end{align}
In the case where these currents belong to the chiral algebras associated with twistorial theories with the spectrum $\Spec=\Z,2\Z+1$, the above amplitude vanish. Thus, we shall focus on the case $\Spec=|h|\geq 1$. 

By virtue of Theorem \ref{thm:quantum-HS}, all interactions should be of Yang-Mills and all external helicities are $+1$. A short computation (see Appendix \ref{app:loop-OPE}) leads to
\begin{align}
    \eqref{eq:all-plus-2-loop}&=\left\langle \tJ_1\big[1,1\big]\tJ_2\big[1,1\big]\tJ_3\big[1,1\big]\tJ_4\big[1,1\big]\right\rangle\nn\\
    &=
    +\tau^2_{\tJ}\frac{2(64\sh^{\vee})^2}{9\pi^6}\frac{[1\,2][3\,4]}{\langle 1\,2\rangle^2\langle 3\,4\rangle^2}\Big(\langle 1\,3\rangle\langle 2\,4\rangle+\langle 1\,4\rangle\langle 2\,3\rangle\Big)\nn\\
    &\quad +
    \tau^2_{\tJ}\frac{2(64\sh^{\vee})^2}{9\pi^6}\frac{[2\,3][4\,1]}{\langle 2\,3\rangle ^2\langle 4\,1\rangle ^2}\Big(\langle 2\,1\rangle\langle 3\,4\rangle+\langle 2\,4\rangle\langle 4\,1\rangle\Big)\nn\\
    &\quad +\tau^2_{\tJ}\frac{2(32\sh^{\vee})^2}{\pi^6}\frac{[1\,2][3\,4]}{\langle 1\,2\rangle \langle 2\,3\rangle}\nn\\
    &\quad +c_{\mg}^2\frac{[1\,2][3\,4]}{\langle 1\,2\rangle\langle 3\,4\rangle}
    \Big(\frac{\langle 1\,3\rangle^2+2\langle 1\,3\rangle(\langle 1\,2\rangle-\langle 3\,4\rangle)-\langle 1\,2\rangle\langle 3\,4\rangle}{\langle 1\,2\rangle\langle 3\,4\rangle }\Big)\nn\\
    &\quad +c_{\mg}^2\frac{[1\,2][3\,4]}{\langle 1\,2\rangle\langle 3\,4\rangle}
    \Big(\frac{\langle 2\,4\rangle^2+2\langle 2\,4\rangle(\langle 2\,3\rangle-\langle 4\,1\rangle)-\langle 2\,3\rangle\langle 4\,1\rangle}{\langle 2\,3\rangle\langle 4\,1\rangle}\Big)\nn\\
    &\quad+2c^2_{\mg}
    \frac{[1\,2][3\,4]}{\langle 1\,2\rangle\langle 3\,4\rangle}\,,
\end{align}
where it is useful noting that in obtaining the above, we have used
\begin{align}
    \frac{[1\,2][3\,4]}{\langle 1\,2\rangle\langle 3\,4\rangle}=\frac{[1\,3][4\,2]}{\langle 1\,3\rangle\langle 4\,2\rangle}=\frac{[1\,4][2\,3]}{\langle 1\,4\rangle\langle 2\,3\rangle}\,,
\end{align}
and 
\begin{align}
    \tau_\tJ=\frac{i}{2^{10}}\,,\qquad \sum_{h}1=-1\,,\quad \text{for}\quad \Spec=|h|\geq 1\,.
\end{align}

\paragraph{Remarks.} One of the motivations for the study of the chiral higher-spin algebra $\ca$ in this work is to compute the two-loop amplitudes of chiral higher-spin gravity, cf. \cite{Metsaev:1991mt,Metsaev:1991nb,Ponomarev:2016lrm}, which are expected to be vanishing or at least rational due to the stringent constraints of higher-spin symmetry.

At one loop, the amplitudes of colored chiral higher-spin gravity \cite{Skvortsov:2020wtf,Skvortsov:2020gpn} exhibit similar features with the standard all-plus one-loop Yang-Mills amplitudes \cite{Mahlon:1993fe,Mahlon:1993si,Bern:1993qk} up to a sum over helicities, which is regularized to zero. At two loop, the computation via dual regional momentum, cf. \cite{Chakrabarti:2005ny}, in the light-cone gauge can be quite challenging given that the number of diagrams at two loops is large.

\begin{figure}[ht!]
    \centering
    \includegraphics[scale=0.3]{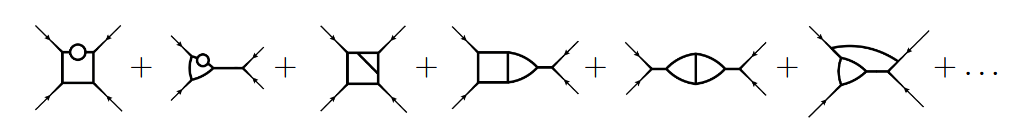}
    \caption{Some of the many two-loop diagrams for chiral higher-spin theories.}
    \label{fig:1}
\end{figure}

For this reason, we have extended the chiral bootstrap program \cite{Costello:2022upu} to higher-spin case. Theorem \ref{thm:quantum-associative} and the results from bootstrapping the form factors of all-plus amplitudes in this work indicate that the 2-loop amplitudes for chiral higher-spin gravity are zero. This is due to the fact that chiral higher-spin gravity is a higher-derivative theories and has the spectrum $\Spec=\Z$. Thus, it cannot have non-trivial imprints on the celestial twistor sphere.

\subsection{On the inclusion of special matter fields} 
Note that beside the axionic fields, one can also introduce suitable Weyl fermions into the anomaly cancellation mechanism. In this case, the Okubo's relation \eqref{eq:Okubo} is modified to
\begin{align}
    \Tr(T^4)-\tr_R(T^4)=C_{\mg,R}\tr(T^2)\tr(T^2)\,,
\end{align}
where $R$ is the representation that the Weyl fermions take values in. In the case where $\Spec\neq \Z,2\Z+1$, the amplitudes can be non-trivial with (higher-spin) fermions in the loop; leading to QCD-like theories with axions. In particular, one can construct a QCD theory with the number of quark flavors $N_f=3$ and suitable axion field. This theory is shown to be rational at two loop, cf. \cite{Dixon:2024tsb}. Although the coupling $C_{\mg,R}$ will be modified  accordingly to the representations in which fermions take values in, we again expect the higher-spin loop amplitudes to be similar to that of ``rational QCD'' \cite{Dixon:2024tsb} with an opposite sign. Note that it is also possible to switch off the axion couplings completely. In this case, the anomaly cancellation will be handled by fermions in the loop instead of the axions in tree-level exchange diagrams \cite{Costello:2023vyy,Dixon:2024tsb}. This is certainly an interesting direction to pursue in constructing higher-spin extensions of QCD-like theories. We leave this for future work.

\section{Chiral theories on the defects}\label{sec:boson-fermi}
Recall that we have not excluded the possibility that the holomorphic higher-spin currents $\tJ[h,\tH]$ can be built from some matter fields or ghost systems. In this section, we will propose some chiral CFTs on the defect that can give rise to those currents.\footnote{See e.g. \cite{Ponomarev:2022ryp,Ponomarev:2022qkx,Costello:2022jpg,Costello:2023hmi} for previous proposals for constructing certain dual pairs in the context of flat holography.} 


Recall that our higher-spin currents $\tJ$ have the following form $\tJ[h;2k]=\tilde v^{\dot\alpha (2k)}\tJ_{\dot\alpha(2k)}$. 
As $\tJ_{\dot\alpha(2k)}$ should have conformal helicity weight $h$ and bilinear in the matter fields, it is natural to consider $\tJ_{\dot\alpha(2k)}=\sn_{\dot\alpha(2k)}\cG$, so that
\begin{align}
\tJ[h;2k]=[\tilde v\,\sn]^{2k}\cG\,,\qquad  \cG=z^{m}\Big(\Phi (\overleftrightarrow{\p_z})^n\Phi\Big)\,.
\end{align}
where $\sn$ are some reference spinors, and $\cG$ bilinear in fields. 
Two of the simplest abelian chiral theories which can source the above holomorphic higher-spin currents are the chiral boson and chiral fermion theories. 

\paragraph{Chiral boson theory.} Consider the following action:
\begin{align}
    S_{\phi}=\frac{1}{2}\int_{\C^{\times}} \p \phi \bar{\p}\phi-\mathsf{b}\bar{\p}\phi\,, \qquad \mathsf{b}\in \Omega^{1,0}(\C^{\times})\,,
\end{align}
where $\mathsf{b}$ is a Langrangian multiplier that allows us to impose the holomorphicity condition 
\begin{align}
    \bar{\p}\phi=0\,, \qquad \bar{\p}\equiv d\bar{z}\p_{\bar{z}}\,.
\end{align}
Note that the above action is not Lorentz invariant, as in various actions for chiral bosons available in the literature, see e.g. \cite{Sonnenschein:1988ug,Siegel:1983es}.\footnote{See also \cite{Arvanitakis:2022bnr} and \cite{Chen:2025xlo} for a modern take on this problem.} As classical OPE of holomorphic higher-spin currents should give \eqref{eq:full-tree-OPE}, we consider
\begin{align}
    \cG_{\phi}=z^{-h+1}\phi\p_z\phi\,,
\end{align}
provided $\Delta_{\phi}=0$, and
\begin{align}
    \langle \phi(z_1)\phi(z_2)\rangle \sim \log(z_{12})\,,\qquad z_{12}=z_1-z_2\,.
\end{align}
Then, higher-spin currents take the form:
\begin{align}
    \tJ[h,2k](z,\tilde v)=\frac{[\tilde v\,\sn]^{2k}}{(2k)!}z^{-h+1}\phi\p_z\phi\,.
\end{align}
Notice the special role of the current $\tJ[1;0]$, which acts as the seed upon which all other currents are generated from. This is quite different with the usual higher-spin currents in the literature. (See e.g. \cite{Ponomarev:2022ryp,Ponomarev:2022qkx} for a recent dual pair proposal for chiral higher-spin gravity in flat space, formulated in the same spirit with Flato-Fronsdal theorem \cite{Flato:1978qz}.) 

\paragraph{Chiral fermion.}
The chiral fermionic theory has the following action
\begin{align}
    S_{\psi}=\int d^2z \,\psi\p_{\bar z}\psi\,.
\end{align}
Similar to the chiral boson case, we want holomorphic currents in this case to also produce OPEs with simple poles at classical level. Since $\psi$ has conformal weight $\frac{1}{2}$, it is natural to consider 
\begin{align}
    \tJ[h,2k](z,\tilde v)=\frac{[\tilde v\,\sn]^{2k}}{(2k)!}z^{-h+1}\psi\psi\,,\quad \text{where}\quad \langle \psi(z_1)\psi(z_2)\rangle\sim \frac{1}{z_{12}}\,.
\end{align}
we may promote the matter fields to matrix-valued fields.
\paragraph{Matrix-valued currents.} Note that we may also promote the matter fields to matrix-valued fields. In particular,
\begin{align}
    \phi\mapsto\phi_{Ir}\,,\quad \psi\mapsto \psi_{Ir}
\end{align}
where now the matter fields take value in the bi-fundamental representations of, say, $Sp(2N)$ and $O(m)$. Here, $(I,J)$ denote $Sp(2N)$ indices and $(r,s)$ are $O(m)$ indices. Then, $Sp(2N)$-valued currents in the adjoints may be written as
\begin{subequations}\label{eq:explicit-currents}
    \begin{align}
  \text{bosonic matters}&:  &\tJ[h,2k](z,\tilde v)&=\frac{[\tilde v \,\sn]^{2k}}{(2k)!}z^{-h+1}\phi_{Ir}\overleftrightarrow{\p}_{z}\phi_{J}{}^r\,,\\
  \text{fermionic matters}&: &\tJ[h,2k](z,\tilde v)&=\frac{[\tilde v \,\sn]^{2k}}{(2k)!}z^{-h+1}\psi_{Ir}\,\psi_J{}^r\,.
\end{align}
\end{subequations}
We will stop our investigation on the chiral CFTs on the defect here, and refer the reader to \cite{Costello:2023hmi} for another proposal.


\section{Discussion}\label{sec:discuss}

In this work, we study the chiral higher-spin symmetry algebras $\ca$ of various twistorial higher-spin theories via Koszul duality \cite{Costello:2022wso} to first quantum order. Our choice of conventions allow us to identify $\ca$ with the color-kinematic algebra of $4d$ chiral/self-dual higher-spin theories in \cite{Ponomarev:2017nrr,Monteiro:2022xwq} at classical level. We show that, to first order in quantum correction, the $\ca$ associated with anomaly-free twistorial higher-spin theories are associative, while the chiral algebras associated with anomalous twistorial higher-spin theories, cf. \cite{Tran:2025uad}, require additional (axionic) currents to be associative. 

Upon analyzing chiral algebras at one-loop, we observed that, for quantum consistency, only theories with $\Spec=\Z,2\Z+1,2\Z$ can \emph{effectively} have higher-derivative interactions, while other cases are forced to have Yang-Mills type interactions\footnote{These are one-derivative interactions in light-cone language, cf. \cite{Metsaev:2005ar,Ponomarev:2017nrr}.}. 


In computing higher-spin form factors
, we find that there can be non-trivial tree-level higher-spin amplitudes. However, the case of higher-derivative theories are not fully justified. Then, at loop levels we find that the amplitudes for theories with higher-derivative interactions are essentially zero due to quantum-integrability, cf. Theorem \ref{thm:quantum-associative}. Restricting our attention to the loop amplitudes in twistorial theories with $\Spec=|h|\geq 1$ and Yang-Mills like interations, we find that their amplitudes are similar to the ones in \cite{Costello:2022upu,Costello:2023vyy} with some overall signs different due to the choice of Riemann regularization for the sum over the spectrum. 

We also propose some chiral CFTs which can source the higher-spin currents which generate the chiral algebras studied in this work. 
However, it is important stressing that there is no standard energy-momentum tensor as well as the couplings between the spin-2 currents of the chiral theories with the bulk gravitons. This stems from the fact that all higher-spin theories in consideration are constructed from the chiral field representations (see e.g. \cite{Krasnov:2021nsq,Adamo:2022lah,Lang:2025rxt}).

 What we have learned through this work as well as \cite{Tran:2025uad} is that there are a large set of higher-spin theories that are quantum-protected. In particular, theories with the right spectrum and higher-derivative interactions should have strong integrability properties such that their amplitudes are either zero or sufficiently simple. 
Our work indicates that chiral higher-spin gravity \cite{Metsaev:1991mt,Metsaev:1991nb,Ponomarev:2016lrm}, should essentially have trivial amplitudes to all order in perturbation theory due to quantum-integrability. 

Lastly, our results suggest that there may be some theories have yet to be constructed in twistor space or spacetime. These statements are in agreement with the recent light-cone analysis in \cite{Serrani:2025owx}. There, it was observed that there is a large class of self-dual higher-derivative higher-spin theories. Some of these theories can have non-trivial scattering amplitudes, and some may even have a finite spectrum, or spectrum with fractional spins. Since the light-cone method avoids issues with gauge redundancy, it will be interesting to covariantize the results of \cite{Serrani:2025owx} to check if they are, in fact, invariant statements.





\section*{Acknowledgment}

Many useful discussions with Alexander Alexandrov, Norton Lee, Matthew Roberts and Zhenya Skvortsov are grateful acknowledged. The author also thanks Tim Adamo for the collaboration in \cite{Adamo:2022lah}, during which he learned many valuable lessons in twistor theory. This work is supported by the Young Scientist Training (YST) program at the Asia Pacific Center of Theoretical Physics (APCTP) through the Science and Technology Promotion Fund and Lottery Fund of the Korean Government, and also the Korean Local Governments
– Gyeongsangbuk-do Province and Pohang City.

\appendix

\section{Bubble-integrals}\label{app:integral}
This appendix evaluates the doubly nested bubble-integral \eqref{eq:doubly-nested-int}. The computation below is similar to the one along the lines of e.g. \cite{Bittleston:2023bzp,Fernandez:2023abp}. Recall that with the factor $d\bar{z}_0$ removed, 
\begin{align}
    I_{23}:=\int_{\C^3\times\C^3}DX_2DX_3
      \frac{z_2|w_2^{\dot 2}|^{2s_2}|w_3^{\dot 1}|^{2s_3}}{|Y_1-X_2|^{2(s_2+2)}|X_2-X_3|^6|Y_4-X_3|^{2(s_3+2)}}\,.
\end{align}
Using standard the Feynman parametrization
\small
\begin{align}\label{eq:Feynman-parametrization}
    \frac{1}{X_1^{a_1}\ldots X_n^{a_n}}=\frac{\Gamma(a_1+\ldots+a_n)}{\Gamma(a_1)\ldots\Gamma(a_n)}\int_{[0,1]^n} \Big(\prod_{i=1}^ndt_i\,t_i^{a_i-1}\Big)\frac{\delta(1-\sum_{i=1}^nt_i)}{\big(t_1X_1+\ldots+t_nX_n\big)^{a_1+\ldots+a_n}}\,.
\end{align}
\normalsize
Since we have constrained $h_2,h_3\geq 1$, cf. , the integral above can be written as
\small
\begin{align}\label{eq:int-step-1}
    I_{23}:&=\int DX_2 \frac{z_2|w_2^{\dot 2}|^{2s_2}}{|Y_1-X_2|^{2(s_2+2)}}\int DX_3 \frac{|w_3^{\dot 1}|^{2s_3}}{|X_2-X_3|^6|Y_4-X_3|^{2(s_3+2)}}\nn\\
    &=\frac{\Gamma(s_3+5)}{2!\Gamma(s_3+1)}\int DX_2 \frac{z_2|w_2^{\dot 2}|^{2s_2}}{|Y_1-X_2|^{2(s_2+2)}}\int_{[0,1]}dt \,t^2(1-t)^{s_3+1}\int D\widetilde{X} \frac{|\tilde w^{\dot 1}|^{2s_3}}{\Big(|\tilde w|^2+t(1-t)|X_2-Y_4|^2\Big)^{s_3+5}}
\end{align}
\normalsize
where $\tilde X^a=X_3^a-tX_2^a -(1-t)Y_4^a$ ($a=1,2,3$). It is useful to recall that
\begin{align}
    X_2^a=(z_2,w_2^{\dot \alpha})\,,\quad X_3^a=(z_3,w_3^{\dot\alpha })\,;\qquad Y_1^a=(z_1,0)\,,\quad Y_4^a=(z_4,0)\,.
\end{align}
At this stage, we can perform the integral over real variables
\begin{align}
    q_0=|z_3|^2\,,\qquad q_1=|w_3^{\dot 1}|^2\,,\qquad q_2=|w_3^{\dot 2}|^2\,,
\end{align}
which yields
\small
\begin{align}\label{eq:int-step-2}
    I_{23}&=(2\pi)^3\frac{\Gamma(s_3+5)}{2!\Gamma(s_3+1)}\int DX_2 \frac{z_2|w_2^{\dot 2}|^{2s_2}}{|Y_1-X_2|^{2(s_2+2)}}\nn\\
    &\qquad \qquad \times\int_{[0,1]}dt \,t^2(1-t)^{s_3+1}\int_{[0,\infty)^3} \frac{ dq_0dq_1dq_2\,q_1^{s_3}}{\Big(q_0+q_1+q_2+t(1-t)|X_2-Y_4|^2\Big)^{s_3+5}}\nn\\
    &=\frac{(2\pi)^3}{2s_3}\int DX_2\frac{z_2|w_2^{\dot 2}|^{2s_2}}{|Y_1-X_2|^{2(s_2+2)}|X_2-Y_4|^4}\,.
\end{align}
\normalsize
\normalsize
Note that the $(2\pi)^3$ factor comes from the angular integrals. The $X_2$-integral can be done analogously by writing the integrated variables in terms of
\begin{align}
    \cX^a=X_2^a-t Y_1^a-(1-t)Y_4^a\,,\qquad a=1,2,3\,,
\end{align}
so that $z_2=\cX^0+t z_1+(1-t)z_4$. Since the integral should be real, we may safely discard the contribution associated to $\cX^0$. As a result,
\begin{align}
    I_{23}:&=\frac{(2\pi)^6}{4s_3}\frac{\Gamma(s_2+4)}{\Gamma(s_2+1)}\int_{[0,1]}dt \,t(1-t)^{s_2+1}(tz_1+(1-t)z_4)\nn\\
    &\qquad \qquad \times\int_{[0,\infty)^3} dr_0dr_1dr_2\frac{r_2^{s_2}}{\Big(r_0+r_1+r_2+t(1-t)|z_{14}|^2\Big)^{s_2+4}}\,,
\end{align}
Thus, our final result reads
\begin{align}
    I_{23}=\Ccurl_{s_2,s_3}\frac{z_1+(1+s_2) z_4}{|z_{14}|^2}=\frac{\Ccurl_{s_2,s_3}}{2|z_{14}|^2}\Big[(s_2+2)z_0-s_2\frac{z_{14}}{2}\Big]\,,
\end{align}
where
\begin{align}
    \Ccurl_{s_2,s_3}=\frac{(2\pi)^6}{4s_3(s_2+1)(s_2+2)}\,.
\end{align}

\section{On associativity at one loop}\label{app:check-associativity}
This Appendix checks the associativity of the chiral higher-spin algebra associated with anomalous higher-spin theories in twistor space at first order in quantum correction. Here, we work with the case where the OPE of the higher-spin currents $\tJ$ also includes the contributions coming from the axionic currents since it is more general. 

Now, for associativity to hold at first order in quantum correction, the following equation must hold \cite{Costello:2022upu}
\begin{align}\label{eq:to-be-check-app}
  &\oint_{|w|=2} dw\, w\tJ^{a_1}[h_1](w)\oint_{|z|=1}\tJ^{a_2}\Big[s_2;2s_2-1\Big](0)\tJ^{a_3}\Big[s_3;2s_3-1\Big](z)\nn\\
    &=+\oint_{|z|=2} dz \tJ^{a_2}\Big[s_2;2s_2-1\Big](0)\oint_{|z-w|=1}\tJ^{a_3}\Big[s_3;2s_3-1\Big](z)\tJ^{a_1}[h_1](w)w\nn\\
    &\ \ \ \,+\oint_{|z|=2} dz \tJ^{a_3}\Big[s_3;2s_3-1\Big](z)\oint_{|w|=1}\tJ^{a_2}\Big[s_2;2s_2-1\Big](0)\tJ^{a_1}[h_1](w)w\,.
\end{align}
Note that the power $n$ in \eqref{eq:to-be-checked} is chosen based on the pole and derivative structures in the OPEs
\begin{align}\label{eq:master-1-loop-OPE}
    \tJ^{a_2}\Big[s_2&;2s_2-1\Big](z)\tJ^{a_3}\Big[s_3;2s_3-1\Big](0)\sim\nn\\
    &+\frac{1}{z}\sum_p\tg_p^{a_2a_3c}\frac{[2\,3]^{p}}{p!}\tJ^c[s_2+s_3-1-p;2s_2+2s_3-2]\nn\\
    &+\tau_{\tJ}\Big(\frac{1}{z^2}-\frac{1}{2z}\p_z\Big) \sum_{h\in\Spec}\cU^{a_1a_2a_3a_4}_{s_2,s_3}\sum_p\tg^{a_1a_4f}_p\frac{[2\,3]^p}{p!}\tJ^f[-1-p](z)\nn\\
    &-\frac{\tau_{\tJ}}{z}\sum_{h\in\Spec}\cT^{a_1a_2a_3a_4}_{s_2,s_3}\nor \tJ^{a_1}[h]\tJ^{a_4}[-h](z)\nor\nn\\
    &-c_{\mg}\Big(\frac{1}{z^2}+\frac{1}{z}\p_z\Big)\sum_p\kappa^{a_2a_3}\frac{[2\,3]^{p}}{p!}\tU[s_2+s_3-2-p;2s_2+2s_3-2]\nn\\
        &-\frac{c_{\mg}}{z}\sum_p\kappa^{a_2a_3}\frac{[2\,3]^{p}}{p!}\tV[s_2+s_3-2-p;2s_2+2s_3-4] \,,
\end{align}
\normalsize
and
\begin{subequations}\label{eq:relevant-axion-OPE}
    \begin{align}
        \tJ^{a}\big[h_i\big](z)\tV[0](0)&\sim -c_{\mg}\Big(\frac{1}{z^2}+\frac{1}{z}\p_z\Big)\sum_p\frac{[i\,j]^p}{p!}\tJ^{a}[h_i-2-p]\,,\\
        \tJ^{a}\big[h_i,\tH_i\big](z)\tU[0](0)&\sim -\frac{c_{\mg}}{z}\sum_{p}\frac{[i\,j]^{p}}{p!}\tJ^{a}[h_i-2-p;\tH_i-2]\,.
    \end{align}
\end{subequations}
Here, we simplify the notations by writing $[i\,j]\equiv [\tilde v_i\,\tilde v_j]$.


\subsection{Checking associativity}
In executing the first layer of the contour integrals, we will look for contributions with appropriate poles: 1st order pole in the first line of \eqref{eq:to-be-check-app}, and 2nd order poles in the second and third lines of \eqref{eq:to-be-check-app}. 
 \paragraph{The first line of \eqref{eq:to-be-check-app}.} Let us first proceed with 
\begin{align}\label{eq:1st-line}
   &\oint_{|w|=2} dw \,w\tJ^{a_1}[h_1](w)\oint_{|z|=1}\tJ^{a_2}\Big[s_2;2s_2-1\Big](0)\tJ^{a_3}\Big[s_3;2s_3-1\Big](z)\nn\\
   =&-\oint dw\,w\tJ^{a_1}[h_1](w)\sum_{p}\tg_p^{a_2a_3c}\frac{[2\,3]^p}{p!}\tJ^c[s_2+s_3-1-p;2s_2+2s_3-2](z)\nn\\
    &-\frac{\tau_{\tJ}}{2}\oint dw\,w\tJ^{a_1}[h_1](w)\p_{z} \sum_{h\in\Spec}\cU^{a_ma_2a_3a_n}_{s_2,s_3}\sum_p\tg_p^{a_ma_nf}\frac{[2\,3]^{p}}{p!}\tJ^f[-1-p](z)\nn\\
    &-\tau_{\tJ}\oint dw\,w\tJ^{a_1}[h_1](w)\sum_{h\in\Spec}\cT^{a_ma_2a_3a_n}_{s_2,s_3}\nor \tJ^{a_m}[h]\tJ^{a_n}[-h](z)\nor\,\nn\\
    &+c_{\mg}\oint dw\,w\tJ^{a_1}[h_1](w)\p_{z}\sum_p\kappa^{a_2a_3}\frac{[2\,3]^{p}}{p!}\tU[s_2+s_3-2-p;2s_2+2s_3-2](z)\nn\\
        &+c_{\mg}\oint dw\,w\tJ^{a_1}[h_1](w)\sum_p\kappa^{a_2a_3}\frac{[2\,3]^{p}}{p!}\tV[s_2+s_3-2-p;2s_2+2s_3-4](z)
\end{align}
where we have extracted terms with first order poles in \eqref{eq:master-1-loop-OPE}. 

Next, we look for terms that have second order poles since the measure is $dw\,w$. Notice that the last line of \eqref{eq:1st-line} does not contribute. As a result, \eqref{eq:1st-line} reduces to
\begin{align}\label{eq:1st-line-2}
    &-\oint dw\,w\tJ^{a_1}[h_1](w)\sum_{p}\tg_p^{a_2a_3c}\frac{[2\,3]^p}{p!}\tJ^c[s_2+s_3-1-p;2s_2+2s_3-2](z)\nn\\
    &-\frac{\tau_{\tJ}}{2}\oint dw\,w\tJ^{a_1}[h_1](w)\p_{z} \sum_{h\in\Spec}\cU^{a_ma_2a_3a_n}_{s_2,s_3}\sum_p\tg_p^{a_ma_nf}\frac{[2\,3]^{p}}{p!}\tJ^f[-1-p](z)\nn\\
    &-\tau_{\tJ}\oint dz_1z_1\tJ^{a_1}[h_1](w)\sum_{h\in\Spec}\cT^{a_ma_2a_3a_n}_{s_2,s_3}\nor \tJ^{a_m}[h]\tJ^{a_n}[-h](z)\nor\,\nn\\
    &+c_{\mg}\oint dw\,w\tJ^{a_1}[h_1](w)\p_{z}\sum_p\kappa^{a_2a_3}\frac{[2\,3]^{p}}{p!}\tU[s_2+s_3-2-p;2s_2+2s_3-2](z)
\end{align}
Using again \eqref{eq:master-1-loop-OPE} and the definition of the double-$\tJ$ operator \eqref{eq:normal-order-def}
\begin{align}
    \nor\tJ^{a_m}[h]\tJ^{a_n}[-h](z)\nor = \oint_{|t-z|=1}\frac{dt}{t-z_2}\tJ^{a_m}(z)\tJ^{a_n}(t) 
\end{align}
we obtain 
\begin{align}
    \eqref{eq:1st-line}&=c_{\mg}\sum_p\tg^{a_2a_3c}_p\kappa^{a_1c}\frac{[2\,3]^{p}}{p!}\frac{[1\,4]^q}{q!}\tU[h_1+s_2+s_3-3-p-q;2s_2+2s_3-2]\nn\\
    &-\frac{\tau_{\tJ}}{2}\sum_{h\in\Spec}\cU_{s_2,s_3}^{a_ma_2a_3a_n}\sum_{p,q}\tg^{a_ma_nc}_p\tg^{a_1c\bullet}_q\frac{[2\,3]^p[1\,4]^q}{p!q!}\tJ^{\bullet}[h_1-2-p-q]\nn\\
    &+\tau_{\tJ}\sum_{h\in\Spec}\cT_{s_2,s_3}^{a_ma_2a_3a_n}\sum_{p,q}\tg_p^{a_na_1c}\tg_q^{ca_m\bullet}\frac{[2\,3]^p[1\,4]^q}{q!p!}\tJ^{\bullet}[h_1-2-p-q]\nn\\
    &-c_{\mg}^2\sum_{p,q}\kappa^{a_2a_3}\frac{[2\,3]^{p}[1\,4]^q}{p!q!}\tJ^{a_1}[h_1+s_2+s_3-4-p-q;2s_2+2s_3-4]\,,
    \end{align}
\normalsize
where we have subsequently used the classical OPEs \eqref{eq:tree-OPE-1} and \eqref{eq:relevant-axion-OPE} to evaluate the last layer of integral. 
 \paragraph{The second line of \eqref{eq:to-be-checked}.} In the second line of \eqref{eq:to-be-checked}, we have
 \begin{align}\label{eq:2nd-line}
     &\oint_{|z|=2} dz \tJ^{a_2}\Big[s_2;2s_2-1\Big](0)\oint_{|z-w|=1}\tJ^{a_3}\Big[s_3;2s_3-1\Big](z)\tJ^{a_1}[h_1](w)w\nn\\
     &=\oint_{|z|=2} dz \tJ^{a_2}\Big[s_2;2s_2-1\Big](0)\oint_{|w|=1}dw\tJ^{a_3}\Big[s_3;2s_3-1\Big](z)\tJ^{a_1}[h_1](w+z)\, (w+z)\,.
 \end{align}
 \normalsize
Note that we have made a change of variables to reach the second line above. Now, we look for structures that have first and second poles in $w$ from \eqref{eq:master-1-loop-OPE}. We obtain 
\begin{align}
    &-\oint dz\,z\tJ^{a_2}\Big[s_2;2s_2-1\Big](0)\sum_p\tg_p^{a_3a_1c}\frac{[3\,1]^p}{p!}\tJ^c\Big[s_3+h_1-1-p;2s_3-1\Big](z)\nn\\
    &+c_{\mg}\oint dz\,z\tJ^{a_2}\Big[s_2;2s_2-1\Big](0)\sum_p\kappa^{a_3a_1}\frac{[3\,1]^{p}}{p!}\tV\Big[s_3+h_1-2-p;2s_3-3\Big](z)\nn\\
    &+c_{\mg}\oint dz \,z\tJ^{a_2}\Big[s_2;2s_2-1\Big](0)\p_{z}\sum_p\kappa^{a_3a_1}\frac{[3\,1]^{p}}{p!}\tU\Big[s_3+h_1-2-p;2s_3-1\Big](z)\nn\\
        &+c_{\mg}\sum_p\kappa^{a_3a_1}\frac{[3\,1]^p}{p!}\oint dz \tJ^{a_2}\Big[s_2;2s_2-1\Big](0)\tU\Big[s_3+h_1-2-p;2s_3-1\Big](z)
\end{align}
Using \eqref{eq:relevant-axion-OPE}, we see that the second and third lines above does not contribute. Thus, we are left with 
\begin{align}
    &-\oint dz\,z\tJ^{a_2}\Big[s_2;2s_2-1\Big](0)\sum_p\tg_p^{a_3a_1c}\frac{[3\,1]^p}{p!}\tJ^c\Big[s_3+h_1-1-p;2s_3-1\Big](z)\nn\\
     &+c_{\mg}\sum_p\kappa^{a_3a_1}\frac{[3\,1]^p}{p!}\oint dz \tJ^{a_2}\Big[s_2;2s_2-1\Big](0)\tU\Big[s_3+h_1-2-p;2s_3-1\Big](z)\,,
\end{align}
\normalsize
which is evaluated to
\begin{align}
    &+\tau_{\tJ}\sum_{h\in\Spec}\sum_{p,q}\cU^{a_ma_2ca_n}_{s_2,s_3+h_1-1-p}\tg^{a_3a_1c}_p\tg_q^{a_ma_n\bullet}\frac{[3\,1]^p[2\,4]^{q}}{p!q!}\tJ^\bullet[h_1-2-p-q]\nn\\
    &+c_{\mg}\sum_{p,q}\tg^{a_3a_1c}_p\kappa^{a_2c}\frac{[3\,1]^p[2\,4]^{q}}{p!q!}\tU[h_1+s_2+s_3-3-p-q;2s_2+2s_3-4]\nn\\
        &+c_{\mg}^2\sum_{p,q}\kappa^{a_3a_1}\frac{[3\,1]^p[2\,4]^{q}}{p!q!}\tJ^{a_2}[h_1+s_2+s_3-4-p-q;2s_2+2s_3-4]\,.
\end{align}
Here, the first two lines come from second-ordered poles in the OPE \eqref{eq:master-1-loop-OPE}. 
 \paragraph{The third line of \eqref{eq:to-be-checked}.} In computing the third line of \eqref{eq:to-be-checked}, which is
 \begin{align}\label{eq:3nd-line-1}
     \oint_{|z|=2} dz \tJ^{a_3}\Big[s_3;2s_3-1\Big](z)\oint_{|w|=1}\tJ^{a_2}\Big[s_2;2s_2-1\Big](0)\tJ^{a_1}[h_1](w)w\,,
 \end{align}
we look for structures with double poles, and obtain
\small
\begin{align}
    -c_{\mg}\sum_p\kappa^{a_1a_2}\frac{[1\,2]^{p}}{p!}\oint dz \tJ^{a_3}\Big[s_3;2s_3-1\Big](z)\tU\Big[h_1+s_2-2-p;2s_2-1\Big](0)\,.
\end{align}
\normalsize
The final result reads
\begin{align}
    \eqref{eq:3nd-line-1}&=c_{\mg}^2\kappa^{a_1a_2} 
    \tJ^{a_3}[h_1+s_2+s_3-4-p-q;2s_2+2s_3-4]
\end{align}
 \subsection{Constraints from associativity} 
 Now, putting all of the above together, we will organize them into the following three sectors. 

\paragraph{The $\tU[h_1+s_2+s_3-3-p-q;2s_2+2s_3-2]$ sector.} This sector is unique since it contains only $\tU$ currents. To have them cancel, we require
\begin{align}
    &c_{\mg}\sum_{p,q}\tg^{a_2a_3c}_p\kappa^{a_1c}\frac{[2\,3]^{p}}{p!}\frac{[1\,4]^q}{q!}\tU[h_1+s_2+s_3-3-p-q;2s_2+2s_3-2]\nn\\
    &=c_{\mg}\sum_{p,q}\tg^{a_3a_1c}_p\kappa^{a_2c}\frac{[3\,1]^p[2\,4]^{q}}{p!q!}\tU[h_1+s_2+s_3-3-p-q;2s_2+2s_3-2]\,.
\end{align}
Observe that the above is an equality iff 
\begin{subequations}
    \begin{align}
        p=q&=0\,,\\
        h_1+s_2+s_3&=3\,.\label{eq:spin-helicity-constraint}
    \end{align}
\end{subequations}
As a result, associativity has constrained the number of derivatives which a vertex can have to be zero. Namely, the chiral algebra is well-defined at quantum level only for theories with the usual gauge interactions of Yang-Mills theory. Thus, the spectrum $\Spec$, which trivializes the contributions in various $\tJ$ sectors should reduce to\footnote{If $\Spec=\Z,2\Z+1,2\Z$, the chiral algebra $\ca$ is automatically associative to first order in quantum correction, cf. Theorem \ref{thm:quantum-associative}.}
\begin{align}
    \Spec=|h|\geq 1\,.
\end{align}
Here, we observe that the spin-helicity constraint \eqref{eq:spin-helicity-constraint} is rather restrictive. Moreover, there is not an easy way to have associativity when anomalous theories have gravitational interactions. This, in a sense, agrees with the result of \cite{Ball:2021tmb}. Namely, the chiral algebras of gravitational or higher-derivative self-dual theories do not receive quantum correction. (See, nevertheless, \cite{Bittleston:2022jeq,Fernandez:2024qnu}.) 
Furthermore, as the interactions are of Yang-Mills type, we can now safely remove the contribution of the $u$-channels and consider only color-ordered amplitudes.

\paragraph{The $\tJ[h_1-2-p-q]$ sector.} From the previous sector, we will set $p=q=0$. Now, let us focus on the contributions coming from the $\tJ[h_1-2]$ currents, which are
\begin{align}\label{eq:J[h1-2]}
    &-\frac{\tau_{\tJ}}{2}\sum_{h\in\Spec}\cU_{s_2,s_3}^{a_ma_2a_3a_n}\sum_{p,q}\tg^{a_ma_nc}_p\tg^{a_1c\bullet}_q\frac{[2\,3]^p[1\,4]^q}{p!q!}\tJ^{\bullet}[h_1-2-p-q]\nn\\
    &+\tau_{\tJ}\sum_{h\in\Spec}\cT_{s_2,s_3}^{a_ma_2a_3a_n}\sum_{p,q}\tg_p^{a_na_1c}\tg_q^{ca_m\bullet}\frac{[2\,3]^p[1\,4]^q}{q!p!}\tJ^{\bullet}[h_1-2-p-q]\nn\\
    =&\tau_{\tJ}\sum_{h\in\Spec}\cU^{a_ma_2ca_n}_{s_2,s_3+h_1-1}f^{a_3a_1c}f^{a_ma_n\bullet}
    \tJ^\bullet[h_1-2]
\end{align}
Observe that the above is trivial if we have $\Spec=\Z\,,2\Z+1\,,$ cf. \eqref{eq:quantum-protected-spectrum}. However, as mentioned above, if we consider $\sum_h1=-1$, we will need to set
\begin{align}\label{eq:s-h-restriction}
    h_1=s_2=s_3=1\,.
\end{align}
so that 
\begin{align}\label{eq:reduce-1}
    \tau_{\tJ}\Big(-\frac{1}{2}
    \cU_{1,1}^{a_ma_2a_3a_n} f^{a_ma_nc}f^{a_1c\bullet}+\cT_{1,1}^{a_ma_2a_3a_n}f^{a_na_1c}f^{ca_m\bullet}-\cU^{a_ma_2ca_n}_{1,1}f^{a_3a_1c}f^{a_ma_n\bullet}\Big)\tJ^{\bullet}[-1]
\end{align}
For the value of helicity and spins in \eqref{eq:s-h-restriction}, we obtain
\begin{subequations}\label{eq:useful-1}
    \begin{align}
    \cU_{1,1}^{a_ma_2a_3a_n}&=-\frac{2}{3}\times \frac{32}{\pi^3}f^{a_ma_2e}f^{ea_3a_n}\,\\
    \cT_{1,1}^{a_ma_2a_3a_n}&=- \frac{32}{\pi^3}\Big(f^{a_ma_2e}f^{ea_3a_n}+f^{a_na_2e}f^{ea_3a_m}\Big)\,.
\end{align}
\end{subequations}
where we recall that the contribution from the $u$-channels have been removed since the interactions are of Yang-Mills type. (Namely, we only need to consider color-ordered partial amplitudes.) At this stage, it is already clear that there cannot be external higher-spin fields when $\Spec=|h|\geq 1$. Using \eqref{eq:useful-1}, we can reduce \eqref{eq:reduce-1} to
\small
\begin{align}
    -\tau_\tJ\frac{32\sh^{\vee}}{\pi^3}\times &\Big(\frac{1}{3}f^{a_2a_3c}f^{a_1c\bullet}+\big[f^{a_ma_2e}f^{ea_3a_n}+f^{a_na_2e}f^{ea_3a_m}\big]f^{a_na_1c}f^{ca_m\bullet}+\frac{2}{3}f^{a_3a_1c}f^{a_2c\bullet}\Big)\tJ^{\bullet}[-1]\,,
\end{align}
\normalsize
where we have used \cite{Costello:2022wso} (see also \cite{Fernandez:2023abp})
\begin{align}
        \cU^{a_m bca_n}_{1,1}f^{a_ma_n\bullet}&=-\frac{2}{3}\times \frac{32}{\pi^3}f^{a_mbe}f^{eca_n}
        f^{a_ma_n\bullet}\nn\\
        &=-\frac{1}{3}\times \frac{32}{\pi^3}\Big(f^{a_m be}f^{eca_n}-f^{a_n be}f^{eca_m}
        \Big)f^{a_ma_n\bullet}\nn\\
        &=+\frac{1}{3}\times \frac{32}{\pi^3}f^{a_ma_ne}f^{ebc}
        f^{a_ma_n\bullet}\nn\\
        &=+\frac{2\sh^{\vee}}{3}\times\frac{32}{\pi^3}f^{bc\bullet}\,.
\end{align}
We remind the reader that $\sh^{\vee}$ is the Coxeter number associated to the quadratic Casimir in the adjoint, where
\begin{align}
    f^{abc}f^{abd}=2\sh^{\vee}\kappa^{cd}\,.
\end{align}

\paragraph{The $\tJ[h_1+s_2+s_3-4-p-q;2s_2+2s_3-4]$ sector.} Following from the previous sector, we will set $p=q=0$ and $h_1=s_2=s_3=1$. Then,
\begin{align}
    &c_{\mg}^2\big(\kappa^{a_1a_2}\kappa^{a_3\bullet}
    +cylic(1,2,3)\big)\tJ^{\bullet}[h_1+s_2+s_3-4;2s_2+2s_3-4]\nn\\
    &\mapsto c_{\mg}^2\big(\kappa^{a_1a_2}\kappa^{a_3\bullet}+cylic(1,2,3)\big)\tJ^{\bullet}[-1]
\end{align}
\paragraph{Evaluating $\tau_\tJ$.} Combining $\tJ[h_1-2-p-q]$- and $\tJ[h_1+s_2+s_3-4-p-q;2s_2+2s_3-4]$-sector together and repeat the computation in \cite{Costello:2022upu}, we can evaluate the value of $\tau_{\tJ}$ by equating
\begin{align}
    &-\tau_\tJ\frac{32\sh^{\vee}}{\pi^3}\Big(\frac{1}{3}f^{a_2a_3c}f^{a_1c\bullet}+\frac{1}{\sh^{\vee}}\big[f^{a_ma_2e}f^{ea_3a_n}+f^{a_na_2e}f^{ea_3a_m}\big]f^{a_na_1c}f^{ca_m\bullet}+\frac{2}{3}f^{a_3a_1c}f^{a_2c\bullet}\Big)\nn\\
    &=c_{\mg}^2\big(\kappa^{a_1a_2}\kappa^{a_3\bullet}+\kappa^{a_2a_3}\kappa^{a_1\bullet}+\kappa^{a_3a_1}\kappa^{a_2\bullet}\big)\,.
\end{align}
We can now write
\small
\begin{subequations}
    \begin{align}
        &f^{a_2a_3c}f^{a_1c\bullet}=\tr\big([T^{a_1},[T^{a_2},T^{a_3}]]T^{\bullet}\big)=\frac{1}{2\sh^{\vee}}\Tr\big([T^{a_1},[T^{a_2},T^{a_3}]]T^{\bullet}\big)
        \,\\
        &f^{a_3a_1c}f^{a_2c\bullet}=\tr\big([T^{a_2},[T^{a_3},T^{a_1}]]T^{\bullet}\big)=\frac{1}{2\sh^{\vee}}\Tr\big([T^{a_2},[T^{a_3},T^{a_1}]]T^{\bullet}\big)
        \,,
    \end{align}
\end{subequations}
\normalsize
where the lift from the trace in fundamental representation to the adjoint one comes with the prices of $2\sh^{\vee}$. Next, we also have
\small
\begin{align}
    &\big[f^{a_ma_2e}f^{e a_3a_n}+f^{a_na_2e}f^{ea_3a_m}\big]f^{a_na_1c}f^{ca_m\bullet}=-\Tr(T^{a_1}T^{a_2}T^{a_3}T^{\bullet})-\Tr(T^{a_1}T^{a_3}T^{a_2}T^{\bullet})\,.
\end{align}
\normalsize
As a result, we end up with 
\begin{align}
    &-\tau_\tJ \frac{32}{\pi^3}\Big(\frac{1}{6}\Tr\big([T^{a_1}[T^{a_2},T^{a_3}]]T^{\bullet}\big)-\Tr(T^{a_1}T^{a_2}T^{a_3}T^{\bullet})-\Tr(T^{a_1}T^{a_3}T^{a_2}T^{\bullet})+\frac{1}{3}\Tr\big([T^{a_2}[T^{a_3},T^{a_1}]]T^{\bullet}\big)\nn\\
    &=c_{\mg}^2\big(\kappa^{a_1a_2}\kappa^{a_3\bullet}+\kappa^{a_2a_3}\kappa^{a_1\bullet}+\kappa^{a_3a_1}\kappa^{a_2\bullet}\big)\nn\\
    &=\frac{-iC_{\mg}}{(2\pi)^33!}\big(\kappa^{a_1a_2}\kappa^{a_3\bullet}+\kappa^{a_2a_3}\kappa^{a_1\bullet}+\kappa^{a_3a_1}\kappa^{a_2\bullet}\big)
\end{align}
Note that $c^2_{\mg}$ was fine-tuned such that it carries an extra minus sign compared to the Yang-Mills case. Using the identity \cite{Costello:2022upu}
\begin{align}
    &\frac{3}{2}\Big(\Tr(T^{a_1}T^{a_2}T^{a_3}T^{\bullet})+\Tr(T^{a_1}T^{a_3}T^{a_2}T^{\bullet})\Big)-C_g\Big(\kappa^{a_1a_2}\kappa^{a_3\bullet}+\kappa^{a_2a_3}\kappa^{a_1\bullet}+\kappa^{a_3a_1}\kappa^{a_2\bullet}\Big)\nn\\
    &=\frac{1}{2}\Tr(\Tr\big([T^{a_2}[T^{a_3},T^{a_1}]]T^{\bullet}\big)+\frac{1}{4}\Tr\big([T^{a_1}[T^{a_2},T^{a_3}]]T^{\bullet}\big)\,.
\end{align}
We obtain
\begin{align}
    \tau_{\tJ}=\frac{i}{2^{10}}\,.
\end{align}
Thus, we have shown that the chiral algebra can be rendered associative to first order in quantum correction after the inclusion on the axionic currents. It is also useful to remind the reader that we do not find quantum correction to the OPEs involving gravitational or higher-derivative interactions when $\Spec\neq \Z,2\Z+1,2\Z$. In a sense, this means that higher-derivative chiral theories are strongly quantum protected.


\section{4-point two-loop computation via OPEs}\label{app:loop-OPE}
This appendix computes the two-loop 4-point loop amplitudes  
\begin{align}
     \left\langle \tJ_1\big[1;1\big] \tJ_2\big[1;1\big]\tJ_3\big[1;1\big]\tJ_4\big[1;1\big]\right\rangle\,,
\end{align}
in the main text. Since the above amplitude is symmetric, we can proceed by simply computing the $s$- and $t$-color-ordered channels and sum them up. Note that by virtue of Lemmas \ref{lem:quantum-protect} and \ref{lem:mixed-term}, there will be no mixed terms. With this, let us now spell out the result of each sector, noting that the two-point functions between axionic currents will be normalized as
\begin{align}\label{eq:UU-VV-normalization}
    \lim_{z_a\rightarrow z_b}\langle \tU[a](z_a)\tU[b](z_b)\rangle=\delta_{a,b}\,,\qquad \lim_{z_a\rightarrow z_b}\langle \tV[a](z_a)\tV[b](z_b)\rangle=\delta_{a,b}\,.
\end{align}
\paragraph{Contribution coming from $\tJ$.} We find that there no contribution from higher-derivative interactions in this sector since the sum over derivatives is of the form
\begin{align}
    \sum_p\frac{1}{p!}\frac{1}{(-p)!}\,,
\end{align}
after we making used of the helicity constraint coming from the Kronecker delta. As such, 
\begin{align}
    &\left\langle \tJ_1\big[1,1\big] \tJ_2\big[1,1\big]\tJ_3\big[1,1\big]\tJ_4\big[1,1\big]\right\rangle\Big|_{\tJ\text{-sector}}\nn\\
    &=\tau_\tJ^2\sum_{h,h'\in\Spec}\frac{(64\sh^{\vee})^2}{9\pi^6}\frac{[1\,2][3\,4]}{z_{12}^2z_{34}^2}\Big(2z_{13}^2-2z_{13}(z_{12}-z_{34})-z_{12}z_{34}\Big)\nn\\
    &\ \ +\tau_\tJ^2\sum_{h,h'\in\Spec}\frac{(64\sh^{\vee})^2}{9\pi^6}\frac{[2\,3][4\,1]}{z_{23}^2z_{41}^2}\Big(2z_{24}^2-2z_{24}(z_{23}-z_{41})-z_{23}z_{41}\Big)\,.
\end{align}
In terms of angled brackets, we have
\begin{align}
    &\left\langle \tJ_1\big[1,1\big] \tJ_2\big[1,1\big]\tJ_3\big[1,1\big]\tJ_4\big[1,1\big]\right\rangle\Big|_{\tJ\text{-sector}}\nn\\
    &=\tau_\tJ^2\sum_{h,h'\in\Spec}\frac{(64\sh^{\vee})^2}{9\pi^6}\frac{[1\,2][3\,4]}{\langle 1\,2\rangle^2\langle 3\,4\rangle^2}\Big(2\langle 1\,3\rangle^2-2\langle 1\,3\rangle\big(\langle 1\,2\rangle-\langle 3\,4\rangle\big)-\langle 1\,2\rangle \langle 3\,4\rangle\Big)\nn\\
    &\ \ +\tau_\tJ^2\sum_{h,h'\in\Spec}\frac{(64\sh^{\vee})^2}{9\pi^6}\frac{[2\,3][4\,1]}{\langle 2\,3\rangle ^2\langle 4\,1\rangle ^2}\Big(2\langle 2\,4\rangle^2-2\langle 2\,4\rangle \big(\langle 2\,3\rangle-\langle 4\,1\rangle\big)-\langle 2\,3\rangle \langle 4\,1\rangle\Big)\,.
\end{align}
Using Schouten identities, we can reduce the above to
\begin{align}
    &\left\langle \tJ_1\big[1,1\big] \tJ_2\big[1,1\big]\tJ_3\big[1,1\big]\tJ_4\big[1,1\big]\right\rangle\Big|_{\tJ\text{-sector}}\nn\\
    &=\tau_\tJ^2\sum_{h,h'\in\Spec}\frac{2(64\sh^{\vee})^2}{9\pi^6}\frac{[1\,2][3\,4]}{\langle 1\,2\rangle^2\langle 3\,4\rangle^2}\Big(\langle 1\,3\rangle\langle 2\,4\rangle+\langle 1\,4\rangle\langle 2\,3\rangle\Big)\nn\\
    &\ \ +\tau_\tJ^2\sum_{h,h'\in\Spec}\frac{2(64\sh^{\vee})^2}{9\pi^6}\frac{[2\,3][4\,1]}{\langle 2\,3\rangle ^2\langle 4\,1\rangle ^2}\Big(\langle 2\,1\rangle\langle 3\,4\rangle+\langle 2\,4\rangle\langle 4\,1\rangle\Big)\,.
\end{align}
Observe that with $\Spec=\Z,2\Z+1$, the above vanish after Riemann regularization.
\paragraph{Contribution coming from $:\tJ\tJ:$.} We find that the contribution coming from the $:\tJ\tJ:$ operator at two loop is zero. Indeed, since this sector amounts to computing
\begin{align}\label{eq:JJ-JJ-2-loop}
    &\sum_{h,h'\in\Spec}\Big(\langle \nor\tJ[h]\tJ[-h](z_1)\nor\,\nor\tJ[h']\tJ[-h'](z_3)\nor\rangle+\sum_{h,h'\in\Spec}\langle \nor\tJ[h]\tJ[-h](z_2)\nor\,\nor\tJ[h']\tJ[-h'](z_4)\nor\rangle\Big)\,.
\end{align}
Symmetry of the 2-pt functions then forces $h=h'$. Using the definition of the normal ordered product, cf. \eqref{eq:normal-order-def}, we obtain 
\begin{align}
    \eqref{eq:JJ-JJ-2-loop}\sim 2\frac{[1\,2][3\,4]}{z_{12}z_{23}}\,,
\end{align}
up to some traces that we ignore. Therefore,
\begin{align}
    \left\langle \tJ_1\big[1,1\big] \tJ_2\big[1,1\big]\tJ_3\big[1,1\big]\tJ_4\big[1,1\big]\right\rangle\Big|_{\tJ\tJ\text{-sector}}\sim 2\tau_{\tJ}^2\frac{(32\sh^{\vee})^2}{\pi^6}\sum_{h\in\Spec}\frac{[1\,2][3\,4]}{\langle 1\,2\rangle \langle 2\,3\rangle}\,.
\end{align}


\paragraph{Contribution coming from $\tU$ sector.} 
Proceed similarly with the $\tJ$-sector, we obtain 
\begin{align}
    &\left\langle \tJ_1\big[1,1\big] \tJ_2\big[1,1\big]\tJ_3\big[1,1\big]\tJ_4\big[1,1\big]\right\rangle\Big|_{\tU\text{-sector}}\nn\\
    &=+c_{\mg}^2\frac{[1\,2][3\,4]}{\langle 1\,2\rangle\langle 3\,4\rangle}
    \Big(\frac{\langle 1\,3\rangle^2+2\langle 1\,3\rangle(\langle 1\,2\rangle-\langle 3\,4\rangle)-\langle 1\,2\rangle\langle 3\,4\rangle}{\langle 1\,2\rangle\langle 3\,4\rangle }\Big)\nn\\
    &\quad +c_{\mg}^2\frac{[1\,2][3\,4]}{\langle 1\,2\rangle\langle 3\,4\rangle}
    \Big(\frac{\langle 2\,4\rangle^2+2\langle 2\,4\rangle(\langle 2\,3\rangle-\langle 4\,1\rangle)-\langle 2\,3\rangle\langle 4\,1\rangle}{\langle 2\,3\rangle\langle 4\,1\rangle}\Big)\,.
\end{align}
\normalsize
after a short computation. Here, we have used the following useful relations
\begin{align}
    \frac{[1\,2][3\,4]}{\langle 1\,2\rangle\langle 3\,4\rangle}=\frac{[1\,3][4\,2]}{\langle 1\,3\rangle\langle 4\,2\rangle}=\frac{[1\,4][2\,3]}{\langle 1\,4\rangle\langle 2\,3\rangle}\,,
\end{align}
which can be deduced from Schouten identities.

\paragraph{Contribution coming from $\tV$.} Finally, we compute the two-loop contributions coming from the $\tV$-sector. We get
\begin{align}
    &\left\langle \tJ_1\big[h_1,1\big] \tJ_2\big[h_2,1\big]\tJ_3\big[h_3,1\big]\tJ_4\big[h_4,1\big]\right\rangle\Big|_{\tV\text{-sector}}=c_{\mg}^2\frac{[1\,2][3\,4]}{z_{12}z_{34}}
    +c_{\mg}^2\frac{[2\,3][4\,1]}{z_{23}z_{41}}\,.
\end{align}
\normalsize
As a result
\begin{align}
    \left\langle \tJ_1\big[1;1\big] \tJ_2\big[1;1\big] \tJ_3\big[1;1\big] \tJ_4\big[1;1\big] \right\rangle\Big|_{\tV\text{-sector}}=2c^2_{\mg}\frac{[1\,2][3\,4]}{\langle 1\,2\rangle\langle 3\,4\rangle}\,,
\end{align}

\footnotesize
\bibliography{twistor.bib}
\bibliographystyle{JHEP-2}

\end{document}